\DeclarePairedDelimiter\setb\lbrace\rbrace
\DeclarePairedDelimiter\paren\lparen\rparen
\DeclarePairedDelimiter\bracket\lbrack\rbrack
\DeclarePairedDelimiter\norm\lVert\rVert
\DeclarePairedDelimiter\abs\lvert\rvert
\DeclarePairedDelimiterXPP\prob[1]{\Pr}{\lparen}{\rparen}{}{#1}
\DeclarePairedDelimiterXPP\Exp[1]{\mathbb{E}}{\lbrack}{\rbrack}{}{#1}
\DeclareMathOperator{\Varm}{Var}
\DeclarePairedDelimiterXPP\Var[1]{\Varm}{\lparen}{\rparen}{}{#1}
\DeclareMathOperator{\Covm}{Cov}
\DeclarePairedDelimiterXPP\Cov[1]{\Covm}{\lparen}{\rparen}{}{#1}
\DeclareMathOperator{\trm}{tr}
\DeclarePairedDelimiterXPP\tr[1]{\trm}{\lparen}{\rparen}{}{#1}
\DeclarePairedDelimiterXPP\bigO[1]{\mathcal{O}}{\lparen}{\rparen}{}{#1}
\DeclarePairedDelimiterXPP\littleO[1]{o}{\lparen}{\rparen}{}{#1}
\DeclarePairedDelimiterXPP\bigOmega[1]{\Omega}{\lparen}{\rparen}{}{#1}
\newcommand{\tran}{\intercal}
\newcommand{\mat}[1]{\boldsymbol{#1}}
\renewcommand{\vec}[1]{\boldsymbol{#1}}
\newcommand{\Reals}{\mathbb{R}}
\newcommand{\onevec}{\vec{1}}
\newcommand{\zerovec}{\vec{0}}
\newcommand{\idM}{\mat{I}}
\newcommand{\indep}{\perp \!\!\! \perp}
\newtheorem{thm}{Theorem}[section]
\newtheorem{corollary}[thm]{Corollary}
\newtheorem{prop}[thm]{Proposition}
\newtheorem{lemma}[thm]{Lemma}
\newtheorem{assumption}{Assumption}
\newcommand{\divunits}{V_d}
\newcommand{\outunits}{V_o}
\newcommand{\zv}{\vec{z}}
\newcommand{\zve}[1]{z_{#1}}
\newcommand{\zi}{\zve{i}}
\newcommand{\zj}{\zve{j}}
\newcommand{\zl}{\zve{\ell}}
\newcommand{\posym}{Y}
\newcommand{\pove}[1]{\posym_{#1}}
\newcommand{\poi}{\pove{i}}
\newcommand{\poj}{\pove{j}}
\newcommand{\slopee}[1]{\beta_{#1}}
\newcommand{\slopei}{\slopee{i}}
\newcommand{\slopej}{\slopee{j}}
\newcommand{\incepte}[1]{\alpha_{#1}}
\newcommand{\incepti}{\incepte{i}}
\newcommand{\inceptj}{\incepte{j}}
\newcommand{\neigh}{\mathcal{N}}
\newcommand{\dosev}{\vec{x}}
\newcommand{\dosee}[1]{x_{#1}}
\newcommand{\dosei}{\dosee{i}}
\newcommand{\dosej}{\dosee{j}}
\newcommand{\atesym}{\tau}
\newcommand{\ate}{\atesym} 
\newcommand{\itee}[1]{\atesym_{#1}}
\newcommand{\itei}{\itee{i}}
\newcommand{\itej}{\itee{j}}
\newcommand{\dhtest}{\hat{\atesym}}
\newcommand{\dhteste}[1]{\dhtest_{#1}}
\newcommand{\dhtesti}{\dhteste{i}}
\newcommand{\dhtestj}{\dhteste{j}}
\newcommand{\wij}{w_{i,j}}
\newcommand{\weightM}{\mat{W}}
\newcommand{\divdeg}{d_d} % max degree of diversion units
\newcommand{\outdeg}{d_o} % max degree of outcome unit
\newcommand{\depN}[1]{\mathcal{I}(#1)} % dependence neighborhoods
\newcommand{\depNi}{\depN{i}}
\newcommand{\depdeg}{D} % dependence degree
\newcommand{\maxcorr}{k} % largest assignment correlation degree
\newcommand{\erre}[1]{a_{#1}} % individual error of ITE
\newcommand{\erri}{\erre{i}}
\newcommand{\errj}{\erre{j}}
\newcommand{\maxpo}{M} % maximum potential outcome value
\newcommand{\clustsym}{C}
\newcommand{\cluste}[1]{\clustsym_{#1}}
\newcommand{\clustering}{\mathcal{\clustsym}}
\newcommand{\doseht}{\textsc{ERL}\xspace}
\newcommand{\exposuredesign}{\textsc{Exposure-Design}\xspace}
\newcommand{\corrclust}{\textsc{Corr-Clust}\xspace}
\newcommand{\corrclustCS}{\textsc{Corr-Clust-CS}\xspace}
\newcommand{\erlvarest}{\widehat{\mathrm{Var}}(\dhtest)}
\newcommand{\covest}[1]{\widehat{C}_{#1}}
\newcommand{\covestij}{\covest{i,j}}
\newcommand{\covestkl}{\covest{k,\ell}}
\newcommand{\covmat}{\Sigma}
\newcommand{\covmatij}{\covmat_{i,j}}
\newcommand{\covmatii}{\covmat_{i,i}}
\newcommand{\cerrij}{\erre{i,j}}
\newcommand{\cerrkl}{\erre{k,\ell}}
\newcommand{\depNvar}[1]{\mathcal{I}_V(#1)} % dependency graph
\newcommand{\depdegvar}{D_V}
\newenvironment{manualtheorem}[1]{%
\manualtheoreminner
}{\endmanualtheoreminner}
\newenvironment{manualprop}[1]{%
\manualpropinner
}{\endmanualpropinner}
\newenvironment{manualcorr}[1]{%
\manualcorrinner
}{\endmanualcorrinner}
\title{Design and Analysis of Bipartite Experiments \\ under a Linear Exposure-Response Model}
\author[1]{Christopher Harshaw}
\author[1]{Fredrik S{\"a}vje}
\author[2]{\\David Eisenstat}
\author[2]{Vahab Mirrokni}
\author[2]{Jean Pouget-Abadie}
\affil[1]{Yale University}
\affil[2]{Google Research}
\date{\today}
\begin{document}

\pagenumbering{gobble}

\makeatletter%
\begin{NoHyper}\gdef\@thefnmark{}\@footnotetext{\hspace{-1em}We thank 
P.M. Aronow,
Kay Brodersen,
Nick Doudchenko,
Ramesh Johari,
Khashayar Khosravi,
Sebastien Lahaie,
Vahan Nanumyan,
Georgia Papadogeorgou,
Lewis Rendell,
Johan Ugander,
and
C. M. Zigler
for stimulating discussions which helped shape this work.
Christopher Harshaw gratefully acknowledges support from an NSF Graduate Research Fellowship (DGE1122492) as well as support from Google as a Summer Intern and a Student Researcher.}\end{NoHyper}%
\makeatother%

\maketitle

\begin{abstract}
    A bipartite experiment consists of one set of units being assigned treatments and another set of units for which we measure outcomes.
The two sets of units are connected by a bipartite graph, governing how the treated units can affect the outcome units.
In this paper, we consider estimation of the average total treatment effect in the bipartite experimental framework under a linear exposure-response model.
We introduce the Exposure Reweighted Linear (\doseht) estimator, and show that the estimator is unbiased, consistent and asymptotically normal, provided that the bipartite graph is sufficiently sparse.
To facilitate inference, we introduce an unbiased and consistent estimator of the variance of the \doseht point estimator.
In addition, we introduce a cluster-based design, \exposuredesign, that uses heuristics to increase the precision of the \doseht estimator by realizing a desirable exposure distribution.
\end{abstract}

\newpage

\etocsettagdepth{mtappendix}{none}
\tableofcontents

\newpage

\pagenumbering{arabic}
\section{Introduction}\label{sec:intro}

%% Let's keep this paragraph first, because it's a good hook.
Two-sided marketplaces are rife with interesting but difficult causal questions. 
What happens to demand if shipping times or fees are reduced? 
What happens to people’s willingness to use ride-hailing apps if more drivers are enrolled in specific cities? 
What happens to long term user behavior if a hotel booking platform changes its recommendation engine?
The causal impact of these changes is hard to quantify, even when using randomized experiments, because marketplace dynamics often violate a central tenet of conventional experimentation: the Stable Unit Treatment Value Assumption, abbreviated SUTVA.
This assumption stipulates that the treatment assigned to one unit does not affect any other units. Violations of this assumption is a phenomenon known as interference, which is often present in the case of marketplace experiments and complicates causal analysis.

The bipartite experimental framework offers a useful formalism to study two-sided market experiments and other violations of SUTVA that can happen along the edges of a bipartite graph. This stands in contrast with interference that occurs on graphs where all units are of the same type (e.g. users of a social network). 
In the bipartite experimental framework, we distinguish two types of units: units that can be subject to an intervention and units whose responses are of interest to the experimenter. 
We assign treatment to the former and measure the outcomes of the latter. The causal impact of treating one group of units is measured on the other group by tracking the \emph{exposure} to treatment that the latter group receives, informed by the knowledge of the bipartite graph between them.
We remark that the treatment status of a single unit may affect the measured outcomes of many units and, likewise, a measured outcome may be affected by many treatment units.

% We can still take out this paragraph and replace it with some shorter examples.
As an example, consider a marketplace where buyers compete for limited goods, some of which may be perfectly or partially substitutable. 
Their demand of these goods form a bipartite graph that potentially can be inferred by the marketplace owner. 
The owner of the marketplace would like to determine the causal effect of discounting prices on buyers' marketplace behavior through a randomized experiment. 
Randomly assigning certain buyers to receive a discounted price is often not possible, and might even be prohibited, in which case randomization is only possible at the item-level. 
At the same time, simply comparing discounted goods with non-discounted goods runs the risk of severe bias: a discounted good may do well against a non-discounted substitutable good, which does not accurately reflect a world where either both or neither are discounted. 
To address this, the marketplace owner decides to monitor this change at the buyer level.
By tracking both buyers' behavioral changes and exposure to discounted goods, the causal effect of the discount can then be teased out.

As is done in much of the interference literature and other settings where SUTVA is violated, assumptions on potential outcomes are made when the bipartite graph has a many-to-many structure in order to allow for tractable inference. 
One such assumption is existence of an exposure mapping, which posits that outcomes are some simple function of the treatment assignments of neighboring units in the bipartite graph \citep{Toulis13, Aronow2017}. 
In this work, we study estimation of an all-or-nothing treatment effect in the bipartite experimental framework under a linear exposure-response model, where exposures are linear functions of assignments and responses are linear functions of the exposures.
The main contributions of this work are summarized as follows:
\begin{itemize}
    \item We describe the Exposure-Reweighted Linear (\doseht) estimator, an unbiased linear estimator of the average total treatment effect under the linear exposure-response model. 
    We show that the \doseht estimator is consistent and asymptotically normal, provided the graph is sufficiently sparse.
    \item We describe a variance estimator, which may be used to construct confidence intervals via a normal approximation. 
    We show that under mild conditions on the exposure distribution, the variance estimator is unbiased and consistent.
    Additionally, we prove asymptotic validity of normal-based confidence intervals using the variance estimator.
    \item We describe the behavior of the \doseht estimator when the linear exposure-response model does not hold, showing that it still estimates an interpretable and policy relevant causal quantity.
    \item We describe \exposuredesign, a cluster-based design which aims to increase the precision of the \doseht estimator.
    The design achieves this by increasing the variance of individual exposures while decreasing the covariance between different exposures.
    This improves precision is several settings of interest.
\end{itemize}

\paragraph{Organization} The remainder of the paper is organized as follows:
We review related work in Section~\ref{sec:related-works}.
In Section~\ref{sec:experimental-setting}, we present the bipartite experimental setting and the linear exposure-response model.
We present and analyze the \doseht estimator in Section~\ref{sec:estimator} and its corresponding variance estimator and confidence intervals in Section~\ref{sec:confidence-intervals}.
In Section~\ref{sec:no-assumption}, we analyze the \doseht estimator when the linear response assumption does not hold.
We present our cluster-based design \exposuredesign in Section~\ref{sec:design}.
Finally, we apply this methodology on the publicly available Amazon product review graph are presented in Section~\ref{sec:simulations}, and we conclude in Section~\ref{sec:conclusion}.

\section{Related Works}\label{sec:related-works}

Within the wide-ranging causal inference literature, our work falls squarely within the subset relying on the potential outcomes framework~\citep{neyman1923applications, imbens2015causal}. The design and analysis of randomized experiments in the presence of interference has garnered plenty of attention, spanning vaccination trials~\citep{struchiner1990behaviour}, agricultural studies~\citep{kempton1997interference}, voter-mobilization field experiments~\citep{sinclair2012detecting}, and viral marketing campaigns \citep{aral2011creating, eckles2016estimating}.  It is beyond the scope of the current paper to extensively review the literature causal inference with interference. Instead, we direct readers to the review article by \citet{halloran2016dependent}.

Our work is primarily motivated by marketplace experiments. Evidence of interference in marketplaces has been noted across industries for various experimental designs~\citep{gupta2019top}. \citet{reiley2006field}, \citet{einav2011learning} and \citet{holtz2020reducing} study the interference bias that results from supply-side randomization, while \citet{blake2014marketplace} and \citet{fradkin2015search} consider this problem in the context of demand/user-side randomization. \citet{basse2016randomization} and \citet{liu2020trustworthy} compare supply-side randomization to two-sided randomization as well as to budget-split designs, showing bias can be reduced in the context of certain ad auction experiments. More recently, \citet{johari2020experimental} characterize which randomization scheme (supply-side, demand-side, or two-sided) leads to reduced bias as a function of market balance.

We consider a slightly different experimental setting, introduced by~\citet{zigler2021bipartite}, characterized by random assignment of treatment on one side of the bipartite graph (demand- or supply-side), while outcomes are measured on the other side. 
The advantage of this framework is that the bipartite graph defines an exposure function (similar to \citet{Aronow2017}), which is assumed to solely determine an unit's outcome.
\citet{zigler2021bipartite} study causal estimands which are more closely related to direct effects rather than the all-or-nothing treatment effect considered here.
A main drawback of their work is that the analysis of their estimators requires that the bipartite graph be the union of many small connected components. 

To allow for more complex bipartite graphs, methodologists opt for stronger structural assumptions on the outcomes.
An exposure-response assumption similar to the one we use is adopted by
\citet{pouget2019variance}, who introduce a cluster-based design for general bipartite graphs, consider a similar estimand, and are also motivated by marketplace experiments.
Later, \citet{doudchenko2020causal} proposed a class of generalized propensity score estimators for this framework, which are unbiased for both experimental and observational settings under standard assumptions and a similar exposure-response assumption. 

Our work is the first to propose methods for provably valid inference (e.g., confidence intervals) in the bipartite settings and to jointly consider estimators and designs which improve overall precision of treatment effect estimators.
While the cluster design of \citet{pouget2019variance} is based on the intuition of achieving a large spread of exposures, it disregards the correlation of exposures and is not directly tied to the performance of an estimator.
Additionally, while the estimators proposed by \citet{doudchenko2020causal} are unbiased, they are based on a different approach which requires fitting a generalized propensity score function. 
Neither of these papers present methods for valid inference.

\section{Experimental Setting}\label{sec:experimental-setting}
In the bipartite experimental framework, the units which receive treatment are distinct from the units on which the outcomes are measured.
For example, \cite{zigler2021bipartite} apply the framework to analyze how interventions on power plants' pollution affect the hospitalization rates among nearby hospitals.
We discuss the general bipartite framework in Section~\ref{sec:bipartite-experiments} and the linear exposure response assumption in Section~\ref{sec:linear-exposure-response}.

\subsection{Bipartite experiments}\label{sec:bipartite-experiments}
In the bipartite experiment setting, there are two groups of units: the \emph{diversion units}, to which treatment is applied, and the \emph{outcome units}, where outcomes are measured.
We denote the set of $m$ diversion units by $\divunits$ and the set of $n$ outcome units by $\outunits$.
% When it is clear from context, we index the $m$ diversion units by integers $[m] = \setb{1,2,\dots m}$ and the $n$ diversion units by integers $[n] = \setb{1, 2, \dots n}$.

Each of the $m$ diversion units receives a (random) binary treatment $\zi \in \setb{0,1}$, and we collect these treatments into a treatment vector, $\zv = \paren{\zve{1}, \zve{2}, \dots \zve{m}} \in \setb{0,1}^m$.
The distribution over the random treatment vectors is called the \emph{design} of the experiment and it is chosen by the experimenter.
Each of the outcome units $i \in \outunits$ is associated with a potential outcome function $\poi(\zv)$, which maps the treatment assignments to the observed value, which is a real number.
In the bipartite setting, we assume that each potential outcome function depends only on the treatment of a neighborhood set of diversion units.
More formally, there exists a \emph{neighborhood mapping} $\neigh: \outunits \rightarrow 2^{\divunits}$ such that for all outcome units $i \in \outunits$,
\[
\poi(\zv) = \poi(\zv') 
\quad \text{ if }
\zj = \zj' \text{ for all } j \in \neigh(i) \enspace.
\]
Throughout the paper, we assume that the neighborhood mapping is known and correctly specified, so that the above condition holds. 
We recover the standard Stable Unit Treatment Value Assumption (SUTVA) when the diversion units are identified with the outcome units and the neighborhood mapping is the identity function.

The number of potential outcomes for each outcome unit grows exponentially in the size of its neighborhood.
\cite{zigler2021bipartite} avoid this issue by assuming that the bipartite structure is the union of many small connected components.
Unfortunately, this is typically not a reasonable assumption in the marketplace settings where we know that more varied interactions occur: buyers may interact with a variety of products.
Without further restrictions on the structure of the neighborhoods or the potential outcome functions, inference of any causal estimand is impossible \citep{Basse2018Limitations, Savje2021}.
Take, for example, an instance where the neighborhood of each outcome unit is all diversion units.
For this reason, we restrict our attention to settings where a stronger assumption on the potential outcomes is reasonable.

\subsection{Linear exposure-response model}\label{sec:linear-exposure-response}
In order to admit tractable inference of causal estimands, we consider a linear exposure-response model, which consists of two underlying assumptions: a linear exposure assumption and a
linear response assumption, which we state formally below.

In the linear exposure-response model, we suppose that there is a weighted bipartite graph between diversion units and outcomes units, where the edges have non-negative weights $\wij \geq 0$, which we arrange into an $n$-by-$m$ incidence matrix $\weightM$.
An edge $\wij$ represents the influence of diversion unit $j$ on the outcome units $i$.
We say that outcome unit $i$ and diversion unit $j$ are \emph{incident} if the weight $w_{i,j}$ is positive.
The degree of a diversion unit is the number of outcome units it is incident to, and the largest degree among all diversion units is denoted $\divdeg$.
The degree of an outcome unit is defined similarly and the largest degree among all outcome units is denoted $\outdeg$.
For simplicity, we assume that each outcome unit has degree at least 1 and the weights incident to an outcome unit are normalized to sum to one, i.e. the rows of the incidence matrix $\weightM$ sum to one.
We also assume that this weighted bipartite graph is known to the experimenter.
In many market experiments, the experimenter may construct an approximation of this graph from historical data.

The \emph{linear exposure assumption} is that the treatment assignments influences the potential outcomes only through a linear combination, which is more structured than arbitrary influence.
More formally, for each outcome unit $i \in \outunits$, the \emph{exposure} of outcome unit $i$ is
\[
\dosei(\zv) = \sum_{j \in \divunits} \wij \zj \enspace,
\]
and for all pairs of assignment vectors $\zv$ and $\zv'$ with $\dosei(\zv) = \dosei(\zv')$, we have that $\poi(\zv) = \poi(\zv')$.
This implies that the neighborhood mapping is such that $\neigh(i) = \{j : \wij > 0\}$.

We arrange these $n$ exposures into an exposure vector $\dosev(\zv) = \paren{\dosee{1}(\zv), \dosee{2}(\zv) \dots \dosee{n}(\zv) }$.
Because the exposure is a function of treatment, the experimental design completely determines the exposure distribution.
This linear exposure assumption is a generalization of the partial and stratified interference assumptions discussed by \citet{hudgens2008toward}.
When the treatment assignment vector $\zv$ is clear from context, we write simply $\dosei$ and $\dosev$ for the $i$th exposure and the exposure vector, respectively.
Similarly, we write $\poi$ for the outcomes.
Using matrix-vector notation, we may write the exposure vector as $\dosev(\zv) = \weightM \zv$.
Due to the normalization of the weights and the binary values of the treatment assignment, each exposure takes values in the range $[0,1]$.

The \emph{linear response assumption} is that for each outcome unit, the potential outcome is a linear function of its exposure.
That is, for each outcome unit $i \in \outunits$, there exists parameters $\incepti$ and $\slopei$ such that
\[
\poi(\zv) = \incepti + \slopei \dosei(\zv) \enspace.
\]
We refer to $\incepti$ as the unit-specific intercept and $\slopei$ as the unit-specific slope.
Note that the linear function does not need to be the same between units.
These terms are unknown to the experimenter, and the experimenter only observes the outcome $\poi(\zv)$, along with the sampled assignment vector $\zv$ and the resulting exposure vector $\dosev$.

We refer to the \emph{linear exposure-response model} as the combination of the linear exposure assumption and
the linear response assumption.
The linear exposure-response model places certain limits on the potential outcomes, but allows for more complex structure in the bipartite graph than previous work.
This trade-off is preferable in settings such as marketplace experiments, where we know that a complex bipartite structure exists and we are more comfortable with making simplifying assumptions about potential outcomes.
For further discussion on empirical and theoretical evidence for complex structure in marketplace experiments, we refer the reader to \citet{blake2014marketplace, Andrey2017Search, johari2020experimental}.

Structural assumptions on the outcomes similar to the linear exposure-response assumption presented here are commonly made throughout the interference literature.
% Similar assumptions have been proposed in the presence of interference by \cite{Manski1993, Bramoulle009, Baird2018, Chin2019, OfferWestort2021}.
The \emph{linear-in-means} (LIM) model posits that a unit's response is a linear function of their own treatment, and the mean of the treatments of their group \citep{Manski1993}.
The LIM model has been extended in various ways in the context of partial interference \citep{Baird2018, OfferWestort2021} and social network experiments \citep{Bramoulle009, Toulis13}.
\citet{Chin2019} investigates the use of machine learning estimators for the global average treatment effect under a variation of the LIM when the terms in the linear model of arbitrary functions of treatment.
% \citet{OfferWestort2021} analyze the difference-in-means estimator for estimating the Homogeneous Assignment Average Treatment Effect (HAATE) under LIM with partial interference.
% In randomized experiments on social networks, the exposures are comprised of two distinct components: a \emph{direct} treatment assigned to a unit and an \emph{indirect} treatment, which is a weighted sum of the treatments of its neighbors.
% \citet{Toulis13}, who adopt a model where a unit's outcome is a linear function of its direct and indirect treatments, which are themselves linear functions of the treatment assignments.
% \citet{Ugander2013Graph} construct cluster-based designs under the assumption that exposures are defined by linear direct and indirect treatments.
\citet{basse2016randomization} study model-assisted estimators and designs under the ``normal sum-model'' which is similar to the linear exposure-response considered here, but with a normal noise term.
% \citet{Basse2018Limitations} show that without assumptions on the underlying interference, consistent estimation of causal effects is not possible; in this sense, some structural assumptions on the outcomes must be made.
We remark that the bipartite setting with the linear exposure-response assumption recovers the standard SUTVA setting when diversion units are identified with the outcome units and the weight matrix is the identity.

From one perspective, the linear exposure-response model is a strong assumption. 
It requires that the response for each unit is exactly a linear function in the exposure. 
This rules out, for example, that different diversion units have different impacts on a single outcome unit. 
But from another perspective, the model is completely unrestrictive: it does not limit the heterogeneity between units at all. 
That is, knowing the response function for one unit tells us nothing about the response function of other units.
While there are few settings in which the linear exposure-response model will hold exactly, it will often be a useful approximation given its unrestrictiveness with respect to heterogeneity. 
In Section~\ref{sec:no-assumption}, we analyze the behavior of the \doseht estimator under a general non-linear response assumption, finding that it estimates a best linear approximation to the average response. 
However, we leave it to future work to more finely characterize the behavior of estimator under general responses and we assume the linear exposure-response model holds exactly throughout the paper, unless otherwise stated.

\subsection{Causal estimand}\label{sec:causal-estimand}
We are interested in understanding the contrast between two possible worlds: one where all diversion units receive treatment and one where they all receive control.
For an individual outcome unit, this contrast is captured by the individual treatment effect, $\itei = \poi(\zv = \onevec) - \poi(\zv = \zerovec)$ for $i \in \outunits$.
Just as in the typical SUTVA setting, we cannot hope to estimate the individual treatment effects well because only one potential outcome is observed for any one unit.
In light of this, we opt to estimate an aggregated causal quantity.
In this paper, we are interested in the Average Total Treatment Effect (ATTE), which is the average contrast between the scenario that all diversion units receive treatment and all diversion units receive control. More precisely, ATTE is defined as
\[
\ate 
= \frac{1}{n} \sum_{i=1}^n \itei
= \frac{1}{n} \sum_{i=1}^n \bracket[\Big]{\poi(\zv = \onevec) - \poi(\zv = \zerovec)}
\]

Under the linear exposure-response assumption, the ATTE is proportional to the average of the slope terms, as shown in the following proposition. 

\begin{prop}\label{prop:ate-under-linear-exposure-response}
Under the linear exposure-response assumption, the ATTE is 
$ \ate = \frac{1}{n} \sum_{i=1}^n \slopei $.
\end{prop}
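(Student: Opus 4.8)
The plan is a direct computation, chaining the two assumptions of the model evaluated at the two extreme treatment vectors $\zv = \onevec$ and $\zv = -\onevec$. First I would compute the exposures at these points. By the linear exposure assumption, $\dosei(\onevec) = \sum_{j \in \divunits} \wij \cdot 1 = \sum_{j \in \divunits} \wij$, and since $\weightM$ is row-stochastic this sum equals $1$; likewise $\dosei(-\onevec) = -\sum_{j \in \divunits} \wij = -1$. (This is the one place the normalization of the weights is used, so I would flag it explicitly.)

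Next, I would substitute these exposure values into the linear response assumption $\poi(\zv) = m_i \dosei(\zv) + b_i$. This gives $\poi(\onevec) = m_i + b_i$ and $\poi(-\onevec) = -m_i + b_i$, so that the per-unit contrast is $\poi(\onevec) - \poi(-\onevec) = (m_i + b_i) - (-m_i + b_i) = 2 m_i$, the intercept terms cancelling.

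Finally, I would average this identity over the $n$ outcome units using the definition of $\ate$ given just above the statement:
\[
\ate = \frac{1}{n} \sum_{i=1}^n \bracket[\big]{\poi(\onevec) - \poi(-\onevec)} = \frac{1}{n} \sum_{i=1}^n 2 m_i = \frac{2}{n} \sum_{i=1}^n m_i \enspace,
\]
which is the claimed expression. There is no real obstacle here: every step is a one-line substitution, and the only subtlety worth stating clearly is that row-stochasticity of $\weightM$ is what forces the all-treatment and all-control exposures to be exactly $+1$ and $-1$, which in turn makes the intercepts drop out cleanly.
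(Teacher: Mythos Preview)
Your proposal is correct and follows essentially the same approach as the paper: compute the individual treatment effect as $2m_i$ by plugging the extreme exposures $\dosei(\onevec)=1$ and $\dosei(-\onevec)=-1$ into the linear response, then average. The only minor difference is that you make the role of row-stochasticity explicit, which the paper leaves implicit in the remark that $\dosei(\zv=\onevec)=1$ and $\dosei(\zv=-\onevec)=-1$.
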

\begin{proof}
The individual treatment effect of outcome unit $i$ is equal to its slope, as
\[
\itei 
= \poi(\zv = \onevec) - \poi(\zv = \zerovec) \\
= \bracket[\big]{ \slopei \dosei(\zv = \onevec) + \incepti} - \bracket[\big]{ \slopei \dosei(\zv = \zerovec) + \incepti}
= \slopei \enspace,
\]
where we have used that $\dosei(\zv = \onevec) = 1$ and $\dosei(\zv = \zerovec) = 0$.
The result follows by taking the average of the individual treatment effects.
\end{proof}

There are two main challenges in estimating the ATTE in this setting: we want to estimate the average of the slopes of many different linear response functions, but only one point from each of the distinct linear response functions is observed.
Although stated in somewhat unfamiliar terms, this is the fundamental problem of causal inference \citep{Holland84statistics}.
The second challenge is that of constructing a treatment design which realizes a desirable exposure distribution.
This is a difficult task when the bipartite weight matrix has non-trivial overlapping structures.
In the remainder of the paper, we focus on addressing these two challenges by developing an estimator and a class of designs which together accurately estimate the ATTE. 

\subsection{Cluster designs}\label{sec:cluster-designs}

Some of the analysis in this paper assumes that the treatment is assigned according to a \emph{independent cluster design}, where the diversion units are grouped into clusters and treatment is assigned to an entire cluster.
More formally, we say that a partition $\cluste{1}, \cluste{2}, \dots, \cluste{\ell}$ of the diversion units is a \emph{clustering}, which we denote as $\clustering = \setb{\cluste{1}, \cluste{2}, \dots, \cluste{\ell}}$.
That is, all clusters are disjoint and the union of all clusters is set of diversion units $\divunits$.
Given a clustering $\clustering$, a treatment assignment from the corresponding \emph{independent cluster design} is drawn in the following way: independently for each cluster, we assign all diversion within a cluster to have either treatment $\zi = 1$ with probability $p$ and treatment $\zi = 0$ with probability $1-p$. 
For notational simplicity, we consider the treatment probability $p$ to be fixed for all clusters, but our results extend to the setting where each cluster has its own treatment probability.
Note that the class of independent cluster designs is completely specified by $\clustering$ and $p$.

\section{The Exposure Reweighted Linear Estimator}\label{sec:estimator}
We introduce the Exposure Reweighted Linear (\doseht) estimator, which is an unbiased estimator of the ATTE under the linear exposure-response assumption, defined as
\begin{equation} \label{eq:ht-exposure-def}
\dhtest 
= \frac{1}{n} \sum_{i=1}^n \poi \paren[\bigg]{ \frac{ \dosei - \Exp{\dosei}}{\Var{\dosei}} }
\enspace.
\end{equation}
The \doseht estimator requires knowledge of the mean and variance of each of the marginal exposure distributions under the treatment design.
For several commonly used designs such as Bernoulli and independent cluster designs, these exposure characteristics may be computed directly; however, for arbitrary designs, the expectation and variance of the exposures may need to be estimated to high precision using samples drawn from the treatment design.
We assume here that these exposure characteristics are known exactly.
We emphasize that the \doseht estimator may be used under any treatment design and not just the cluster-based treatment design we propose in Section~\ref{sec:design}.

The \doseht estimator belongs to the class of linear estimators, as it is a (random) linear function of the observed outcomes.
It shares similarities with the style of Horvitz--Thompson estimators \citep{Narain51sampling, HT52generalization}:
Horvitz--Thompson estimators re-weight an outcome by the probability of observing that outcome, while the \doseht estimator re-weights an outcome by the normalized distance of the exposure from its mean.
When there are many possible values of exposures, such as under the linear exposure-response model, the type of re-weighting done by the Horvitz--Thompson estimator would lead to excessively large variance.

\subsection{Unbiasedness and consistency of the \doseht estimator} \label{sec:basic-stats}
In this section, we analyze the behavior of the \doseht estimator as a point estimator of the average total treatment effect (ATTE).
First, we show that the \doseht estimator is unbiased. 
Then we show consistency and asymptotic normality of the \doseht estimator, provided that the bipartite graph is not too dense. 
Theorem~\ref{thm:exposure-ht-unbiased} below ensures that under mild conditions on the treatment design, there is no systematic bias in the \doseht estimator.

\newcommand{\exposurehtunbiased}{
Suppose the design is such that each exposure has positive variance, $\Var{\dosei} > 0$.
Under the linear response assumption, the \doseht estimator is unbiased for the ATTE: $\Exp{\dhtest} = \ate$.
}
\begin{thm}[Unbiasedness] \label{thm:exposure-ht-unbiased}
\exposurehtunbiased
\end{thm}

Next, we analyze the asymptotic behavior of the \doseht estimator.
In the asymptotic analysis, we suppose that there is a sequence of bipartite experiments, in which the number of units is growing to infinity.
Strictly speaking, all quantities of the experiment such as the bipartite graph, the outcomes, the treatment design, etc, should be indexed by an integer $N$; however, we drop these subscripts for notational clarity.

We make two additional assumptions about the bipartite experiments in this asymptotic sequence.
The first is that the potential outcomes are bounded.
The second is that the design has limited dependence between treatment assignments.

\begin{assumption}[Bounded Potential Outcomes]\label{assumption:bounded-potential-outcomes}
The potential outcomes are bounded in absolute value $\abs{ \poi(\zv) } \leq \maxpo$, where $\maxpo$ is a constant.
\end{assumption}

\begin{assumption}[Design Conditions] \label{assumption:design}
The treatments assignments are distributed according to an independent cluster design, where the probability of treatment $p$ is bounded away from $0$ and $1$ by a constant in the asymptotic sequence.
Additionally, the sizes of clusters are bounded by $\maxcorr$, a constant in the asymptotic sequence. 
\end{assumption}

Assumption~\ref{assumption:design} rules out certain classes of treatment designs, such as complete randomization (i.e. group balanced designs).
While it may be possible to obtain similar asymptotic results under such designs, we limit our consideration to those satisfying Assumption~\ref{assumption:design}.
Under these assumptions, we prove that \doseht is consistent when the bipartite graph is not too dense.

\newcommand{\consistencythm}{
Under Assumptions~\ref{assumption:bounded-potential-outcomes} and \ref{assumption:design}, the mean squared error of the \doseht estimator is bounded as
$ \Exp{ \paren{\dhtest - \ate}^2 } = \bigO[\big]{\divdeg \outdeg^3 / n} $.
Thus, the estimator is consistent if $\divdeg \outdeg^3 = \littleO{n}$.
}
\begin{thm}[Consistency] \label{thm:consistency}
\consistencythm
\end{thm}

Theorem~\ref{thm:consistency} shows that the convergence rate of the \doseht estimator is at least $\sqrt{\divdeg \outdeg^3 / n}$, where $\divdeg$ and $\outdeg$ are the maximum degrees of the diversion and outcome units, respectively.
The main technical assumption that we require for consistency is that $\divdeg \outdeg^3 = \littleO{n}$ in the asymptotic sequence.
Informally, this condition ensures that the bipartite graph is not too dense as it grows.
Indeed, we expect the convergence of any estimator to worsen as the graph becomes more complex and dense.
While consistency may hold under weaker conditions for particular designs , an assumption on the graph density must be made:
in the complete bipartite graph where all outcome units receive the same exposure, consistent estimation is impossible.

We now discuss a setting where this condition $\divdeg \outdeg^3 = \littleO{n}$ holds.
Suppose that each diversion unit has fixed degree $\divdeg$, which is a constant with respect to $m$ and $n$.
The average degree of an outcome unit is then $\bar{\outdeg} = \divdeg m/n$.
Assuming that the maximum outcome degree $\outdeg$ is within a constant factor of the average, this yields that the term $ \outdeg = \bigO{m/n}$.
Using that the diversion degrees are constant, we get that $\divdeg \outdeg^3 = \littleO{n}$ if $m = \littleO{n^{4/3}}$.
Thus, in graphs with constant diversion degrees where the edges are roughly evenly distributed between outcome units, the premise of Theorem~\ref{thm:consistency} holds when $m$ grows at a rate slower than $n^{4/3}$.

\subsection{Asymptotic normality of the \doseht estimator}

We next characterize the asymptotic distribution of the estimator.
In particular, we show that the \doseht estimator converges to a normal distribution as the size of the bipartite experiment grows, provided that the graph remains sparse.
This result is derived under the same asymptotic regime as above.
In order to prove the central limit theorem, we require an additional assumption on the asymptotic sequence of bipartite experiments.
Namely, we require that the variance of the \doseht estimator decreases no faster than the parametric rate.

\begin{assumption}\label{assumption:mse-rate}
The normalized variance of the \doseht estimator $n \cdot \Var{\dhtest}$ is bounded away from zero asymptotically.
\end{assumption}

Assumption~\ref{assumption:mse-rate} rules out settings in which we can estimate the ATTE at an unusually fast rate. 
It is theoretically possible to estimate ATTE at a faster than parametric rate, but these settings are not practically relevant. 
In particular, Assumption~\ref{assumption:mse-rate} rules out three scenarios. 
The first is when the magnitude of the potential outcomes approaches zero in the asymptotic sequence. 
This requires that almost all potential outcomes approach zero; the magnitude of the potential outcomes are generally non-zero even when their average is zero. 
The second scenario is when the design almost perfectly pinpoints the potential outcomes. 
This can be formalized as the variance of each individual term of the estimator diminishes asymptotically, i.e. $\Var{\dhtesti} \rightarrow 0$, where $\dhtesti = \poi(\zv) (\dosei - \Exp{\dosei}) / \Var{\dosei}$. 
The third scenario is when the \doseht estimator converges at a parametric rate, but the asymptotic variance happens to be exactly zero.
All of these scenarios are knife-edge cases that we have good reason to believe would not materialize in practice. 
Even if they do, the estimator would still be unbiased and consistent, but its asymptotic distribution might not be normal.

We are now ready to present a central limit theorem which states that under mild regularity conditions, the \doseht estimator is asymptotically normal.

\newcommand{\asymptoticanormalitythm}{
Under Assumptions~\ref{assumption:bounded-potential-outcomes}, \ref{assumption:design}, and \ref{assumption:mse-rate}, and supposing that $\divdeg^4 \outdeg^{10} = \littleO{n}$, the \doseht estimator is asymptotically normal:
\[
\frac{\dhtest - \ate}{\sqrt{\Var{\dhtest}}} \xrightarrow[]{d} \mathcal{N}(0, 1) \enspace.
\]
}
\begin{thm}[Asymptotic Normality] \label{thm:asymptotic-normality}
\asymptoticanormalitythm
\end{thm}

The proof relies on Stein's method for bounding distances between distributions (see, e.g. \citet{ross2011fundamentals}).
We use Stein's method because standard techniques for establishing central limit theorems which heavily rely on independence are not applicable in the bipartite experimental framework where exposures are necessarily correlated.
We remark that Stein's method has been recently used for obtaining limiting behavior of other estimators in the interference literature \citep{Aronow2017, chin2019central, ogburn2020causal}.

The assumptions on the asymptotic growth of the bipartite graph may be interpreted similarly as those appearing in Theorem~\ref{thm:consistency}.
Namely, they prevent the bipartite graph from becoming too dense.
We remark that the growth assumptions required for asymptotic normality (Theorem~\ref{thm:asymptotic-normality}) are stronger than those required required for consistency (Theorem~\ref{thm:consistency}).
The growth conditions in Theorem~\ref{thm:asymptotic-normality} are only sufficient and we conjecture that they are not necessary for asymptotic normality.
One aspect the growth conditions ensure is that the variance of the exposures does not converge to zero at a too fast rate.
If we ensure positivity of the exposures through some other means, then the growth conditions can be weakened.
For example, assuming that $\Var{\dosei} \geq c > 0$, the growth conditions are relaxed to $\divdeg^4 \outdeg^{4} = \littleO{n}$.
Furthermore, the proof of Theorem~\ref{thm:asymptotic-normality} assumes that the \doseht estimator attains a parametric rate of convergence.
This is not always be the case, and accounting for the true convergence rate will generally weaken the growth conditions for asymptotic normality.
However, weakening the growth conditions beyond this would require a different analysis, either by a more careful application of Stein's method or by different means all together.

% \textbf{\color{red} JPA: did you really mean to talk about Assumption 2 in this section?}
Assumption~\ref{assumption:design} allows for a broad class of designs.
For example, unit-level Bernoulli randomization falls into this class, but this design does not consider the structure of the bipartite graph and will generally perform poorly.
To derive analytical results for this broad class of designs, the growth conditions on the bipartite graph are quite restrictive, and they may be too restrictive in certain settings where more dense interaction patterns occur.
If one restricts focus to a smaller class of designs, these growth conditions could potentially be weakened.
The key implication of Assumption~\ref{assumption:design} together with the growth conditions is that the variance of the exposures is large and the correlation between most pairs of exposures is small.
Heuristically, these conditions on the exposure distribution are the main aspects required for consistency and normality.
We describe a design in Section~\ref{sec:design} that directly targets the exposure distribution to satisfy these conditions, and it will therefore be better behaved than many of the designs allowed by Assumption~\ref{assumption:design}.

\section{Variance Estimation and Confidence Intervals}\label{sec:confidence-intervals}

In this section, we present methods for constructing confidence intervals for the ATTE in the bipartite setting under the linear exposure-response assumption.
If we knew the variance of the \doseht estimator, we could use Theorem~\ref{thm:asymptotic-normality} directly to construct asymptotically valid confidence intervals.
Because the variance of the \doseht estimator depends on the unobserved potential outcomes, we must construct an estimator of the variance.

In the finite population experimental settings with binary treatments, unbiased variance estimation is not possible without strong assumptions on the heterogeneity between units \citep{imbens2015causal}.
In light of this negative result, experimenters tend to favor conservative variance estimators that lead to valid but overly wide confidence intervals.
In contrast to the typical experimental settings with binary treatments, we show that unbiased variance estimation is possible in the bipartite setting under the linear response assumption when the exposures take many (i.e. more than two) values.

\subsection{Derivation of the variance estimator}

To the best of our knowledge, this approach to constructing a variance estimator is new and so we highlight its derivation here.
Our approach to constructing a variance estimator begins by decomposing the \doseht estimator into a weighted average of individual effect estimators, and to decompose the variance of the \doseht estimator as the average of covariances of these individual effect estimators.
To this end, define $\dhtesti \triangleq \poi \paren{\dosei - \Exp{\dosei}} / \Var{\dosei}$ to be the individual terms in the \doseht estimator.
We may interpret $\dhtesti$ as an unbiased, but very imprecise, estimator of the individual treatment effect $\itei$.
The \doseht estimator can be written as the average of these quantities: $\dhtest = (1/n) \sum_{i=1}^n \dhtesti$.
The variance of the \doseht estimator may be written as
\[
\Var{ \dhtest }
= \Var[\bigg]{ \frac{1}{n} \sum_{i=1}^n \dhtesti }
= \frac{1}{n^2} \sum_{i=1}^n \sum_{j=1}^n \Cov{ \dhtesti, \dhtestj}
% = \frac{1}{n^2} \sum_{i=1}^n \bracket[\bigg]{ \Var{\dhtesti}
%  + \sum_{j \neq i} \Cov{\dhtesti, \dhtestj}} 
 \enspace.
\]

Our approach to constructing a variance estimator will be to construct simple weighting estimators for each of the $\Cov{\dhtesti, \dhtestj}$ terms.
These weighting estimators will be expressed as $\covestij = \poi \poj R_{i,j}(\dosei, \dosej)$, where $\poj \poj$ is the product of observed outcomes and $R_{i,j}(\dosei, \dosej)$ is a weighting function which takes the observed exposures as inputs, such that
our variance estimator will be the simple average of these weighted products of outcomes:
\[
\erlvarest 
\triangleq \frac{1}{n^2} \sum_{i=1}^n \sum_{j=1}^n \covestij
\triangleq \frac{1}{n^2} \sum_{i=1}^n \sum_{j=1}^n \poi \poj R_{i,j}(\dosei, \dosej) \enspace.
\]
The goal is to make each individual estimator $\covestij$ unbiased for each individual covariance $\Cov{\dhtesti, \dhtestj}$, so that the entire variance estimator $\erlvarest$ will be unbiased.
Moreover, if the individual estimators $\covestij$ are sufficiently uncorrelated, then the overall variance estimator will achieve high precision.
%We proceed defining our proposed weighting function.

We consider three different types of terms in the double sum.
The first type is when the covariance between unit-level estimators $\Cov{\dhtesti, \dhtestj}$ is known to be zero.
In particular, the linear response assumption implies that $\Cov{\dhtesti, \dhtestj} = 0$ when $\Cov{\dosei, \dosej} = 0$.
Therefore, we set the weighting function $R_{i,j}(\dosei, \dosej)$ to be exactly zero whenever $\Cov{\dosei, \dosej} = 0$, which means that $\covestij = 0$.

The second type is the non-zero, off-diagonal terms: $\Cov{\dhtesti, \dhtestj} \neq 0$ and $i \neq j$.
For these pairs, we define the overall weighting function $R_{i,j}(\dosei, \dosej)$ to be
\[
R_{i,j}(\dosei, \dosej)
= Q_{i,j}(\dosei, \dosej) - S_{i,j}(\dosei, \dosej)
   \enspace,
\]
where the two weighting functions $Q_{i,j}(\dosei, \dosej)$ and $S_{i,j}(\dosei, \dosej)$ are defined as
\begin{align}
    Q_{i,j}(\dosei, \dosej) &= \frac{\dosei - \Exp{\dosei}}{\Var{\dosei}} \cdot \frac{\dosej - \Exp{\dosej}}{\Var{\dosej}} ,\\
    S_{i,j}(\dosei, \dosej) &= a_{i,j} \paren{\dosei \dosej - \Exp{\dosei \dosej}}
    + b_{i,j} \paren{\dosei - \Exp{\dosei}}
    + c_{i,j} \paren{\dosej - \Exp{\dosej}} \enspace,
\end{align}
and the coefficients $a_{i,j}$, $b_{i,j}$, $c_{i,j}$ are obtained as solutions to the system of linear equations:

\newcommand{\varestlinearsystem}{
\[
\covmatij \begin{bmatrix}
a_{i,j} \\
b_{i,j} \\
c_{i,j}
\end{bmatrix}
=
\begin{bmatrix}
\Var{\dosei \dosej} & \Cov{\dosei, \dosei \dosej} & \Cov{\dosej, \dosei \dosej} \\
\Cov{\dosei \dosej, \dosei} & \Var{\dosei} & \Cov{\dosej, \dosei} \\
\Cov{\dosei \dosej, \dosej} & \Cov{\dosei, \dosej} & \Var{\dosej}
\end{bmatrix}
\begin{bmatrix}
a_{i,j} \\
b_{i,j} \\
c_{i,j}
\end{bmatrix}
=
\begin{bmatrix}
1 \\
0 \\
0
\end{bmatrix}
\enspace.
\]
}
\varestlinearsystem

The two weighting functions have been constructed such that $\poi \poj Q_{i,j}(\dosei, \dosej)$ and $\poi \poj S_{i,j}(\dosei, \dosej)$ are unbiased estimators of $\Exp{\dhtesti \dhtestj}$ and $\Exp{\dhtesti} \Exp{\dhtestj}$, respectively.
Because $\Cov{\dhtesti, \dhtestj} = \Exp{\dhtesti \dhtestj} - \Exp{\dhtesti} \Exp{\dhtestj}$, this means that the overall estimator $\covestij \triangleq \poi \poj R_{i,j}(\dosei, \dosej) = \poi \poj Q_{i,j}(\dosei, \dosej) - \poi \poj S_{i,j}(\dosei, \dosej)$ is an unbiased estimator of the covariance $\Cov{\dhtesti, \dhtestj}$.
We refer the reader to Appendix~\ref{sec:interval-proofs} for more details.

The third case is the diagonal terms: $i = j$.
In principle, we could use the same functions $Q_{i,j}(\dosei, \dosej)$ and $S_{i,j}(\dosei, \dosej)$ as for the off-diagonal terms, but the system of equations is underdetermined, so the coefficients $a_{i,j}$, $b_{i,j}$, and $c_{i,j}$ are not uniquely determined.
To address this, we set $c_{i,i} = 0$ when $i = j$, and obtain the coefficients $a_{i,i}$ and $b_{i,i}$ as solutions to the system of linear equations:
\[
\covmatii \begin{bmatrix}
a_{i,i} \\
b_{i,i}
\end{bmatrix}
=
\begin{bmatrix}
\Var{\dosei^2} & \Cov{\dosei^2, \dosei} \\
\Cov{\dosei^2, \dosei} & \Var{\dosei} \\
\end{bmatrix}
\begin{bmatrix}
a_{i,i} \\
b_{i,i}
\end{bmatrix}
=
\begin{bmatrix}
1 \\
0
\end{bmatrix}
\enspace.
\]

The existence of unique solutions to these systems of linear equations requires that the exposures and their products are not perfectly correlated.
The matrices $\covmatij$ allow us to capture this.
Note that $\covmatij$ is the 3-by-3 covariance matrix of the two exposures $\dosei$, $\dosej$, and their product $\dosei \dosej$ when $i \neq j$, and that it is the 2-by-2 covariance matrix of the exposure $\dosei$ and its square $\dosei^2$ when $i = j$.
For a given pair $i, j \in \outunits$ (either distinct or not), a unique solution to the corresponding system of linear equations above exists if $\det(\covmatij) > 0$.
To understand this, note that the determinant of a covariance matrix is a quantitative measure of the linear dependence of a set of random variables, which prompted \citet{Wilks1932Certain} to refer to $\det(\Sigma)$ as the ``generalized variance''.

In Appendix~\ref{sec:closed-form-coefficients}, we derive the coefficients in closed form.
The coefficients are there expressed as simple functions of various statistics of the joint distribution of exposure pairs.
The exposure distribution depends on the underlying bipartite graph and the experimental design, both of which are known to the experimenter.
Thus, the coefficients of the variance estimator can be computed before the experiment begins.
In the case of an independent cluster design, the coefficients may be computed exactly; for more complicated designs, the statistics in the coefficients may be estimated to high precision via a Monte Carlo procedure \citep{Fattorini2006}.
Throughout the paper, we assume that the exact coefficients are used.

\subsection{Unbiased and consistent variance estimation}

The following theorem demonstrates that, under certain condition on the exposure distribution described above, the variance estimator is unbiased.

\begin{assumption}[Non-degenerate Exposures]\label{assumption:non-degenerate-exposures}
For each pair of outcome units $i, j \in [n]$, the joint distribution of their exposures $\dosei, \dosej$ satisfies the non-degeneracy condition $\det(\covmatij) > 0$.
\end{assumption}

\newcommand{\varestthm}{
Under Assumption~\ref{assumption:non-degenerate-exposures} and the linear response assumption, the variance estimator of the \doseht point estimator is unbiased, i.e.
$
\Exp{\erlvarest} 
= \Var{\dhtest}
$.
}
\begin{thm}[Unbiased Variance Estimator] \label{thm:variance-estimate}
\varestthm
\end{thm}

As discussed above, the exposure distribution is known to the experimenter and so they can check whether the non-degeneracy conditions in Assumption~\ref{assumption:non-degenerate-exposures} hold before the experiment is run.
In particular, this information may inform the experimenter's choice of design.
In the case of a single exposure (i.e. $i = j$), the condition states that $\dosei$ and $\dosei^2$ are non-perfectly correlated random variables, which occurs exactly when $\dosei$ is supported on at least three distinct values.
Typically, it may be possible to ensure this condition holds through careful selection of the experimental design; however, such a condition cannot be satisfied when an outcome unit is incident to only a single diversion unit, as only two exposures are ever observed in that case.
This impossibility is in line with the fact that unbiased variance estimation is generally not possible in experimental settings with binary treatments.
In the case of a pair of exposures (i.e. $i \neq j$), the condition states that exposures $\dosei$, $\dosej$, and their product $\dosei \dosej$ are not perfectly correlated random variables.
Again, this may be achieved by careful selection of the experimental design; however, this condition cannot be satisfied when two outcome units have identically weighted edges (i.e. $w_{i,k} = w_{j,k}$ for all $k \in \divunits$), because this would imply that the exposures are identical, $\dosei = \dosej$.

When Assumption~\ref{assumption:non-degenerate-exposures} does not hold, our proposed variance estimator will be ill-defined or biased. 
Indeed, it is possible that no unbiased variance estimator exists in such settings.
In this case, one can replace the problematic $\Cov{\dhtesti, \dhtestj}$ terms, which cannot be estimated directly, with upper bounds that can be estimated.
For example, if the exposure $\dosei$ takes only two values so that $\det(\covmatii) = 0$, then one can replace the problematic term in the variance with the upper bound: $\Cov{\dhtesti, \dhtesti} =\Var{\dhtesti} = \Exp{\dhtesti^2} - \Exp{\dhtesti}^2 \leq \Exp{\dhtesti^2}$, as  \citet{Aronow2013Conservative} do when they invoke Young's inequality.
To unbiasedly estimate this term, we can modify our weighting estimator as $R_{i,i}(\dosei) = Q_{i,i}(\dosei)$.
Similarly, if $\det(\covmatij) = 0$ for some distinct outcome units $i \neq j$, then one can replace the corresponding covariance term with an upper bound obtained from Cauchy-Schwarz and AM--GM inequalities:
\[
  \Cov{\dhtesti, \dhtestj}
  \leq \sqrt{\Var{\dhtesti} \Var{\dhtestj}}
  \leq \frac{1}{2} \paren[\Big]{\Var{\dhtesti} + \Var{\dhtestj}} \enspace.
\]
An unbiased estimator for the above term may be obtained by modifying the weighting function $R_{i,j}(\dosei, \dosej) = 1/2 \cdot \paren{R_{i,i}(\dosei, \dosei) + R_{j,j}(\dosej, \dosej)}$.
Under Assumptions~\ref{assumption:bounded-potential-outcomes} and \ref{assumption:design}, replacing one of these individual terms in this way leads to a positive bias of the normalized variance estimator $n \cdot \erlvarest$ which is on the order $\bigO{1/n}$.
Thus, the variance estimator remains asymptotically unbiased as long as we apply these upper bounds to $\littleO{n}$ terms.
This is stated formally in the proposition below:

\begin{prop}\label{prop:conservative-var-estimator}
Let $\mathcal{E} = \setb{ (i,j) \in \outunits \times \outunits : \det(\covmatii) = 0}$ be the pairs of outcome units for which Assumption~\ref{assumption:non-degenerate-exposures} is not satisfied.
Consider the alternative variance estimator $\erlvarest'$ defined by the new weighting function:
\[
R'_{i,j}(\dosei, \dosej)
= \left\{
     \begin{array}{ll}
     R_{i,j}(\dosei, \dosej) &\text{if } (i,j) \notin \mathcal{E} \\
    Q_{i,j}(\dosei, \dosei) &\text{if } (i,j) \in \mathcal{E} \text{ and } i = j\\
     \frac{1}{2} \paren{R'_{i,i}(\dosei, \dosei) + R'_{j,j}(\dosei, \dosej)} &\text{if } (i,j) \in \mathcal{E} \text{ and } i \neq j\\
     \end{array}
  \right.
  \enspace.
\]
Then, under Assumptions~\ref{assumption:bounded-potential-outcomes} and \ref{assumption:design}, then normalized alternative variance estimator is conservative in expectation with bounded bias: $0 \leq \Exp{n \cdot \erlvarest'} - n \cdot \Var{\dhtest} = \bigO{\abs{\mathcal{E}} / n}$.
Thus, the normalized alternative variance estimator is asymptotically unbiased if $\abs{\mathcal{E}} = \littleO{n}$.
\end{prop}

Even when the terms can be estimated without bias, it could still be preferable to use the bound here if $\det(\covmatij) \approx 0$, as the individual covariance estimators would have high variance in this case, causing the overall variance estimator to be imprecise.
For the rest of the section, we analyze the variance estimator assuming that Assumption~\ref{assumption:non-degenerate-exposures} holds exactly, but we conjecture that many of our results will go through for the conservative variance estimator described above.

We now present conditions under which our proposed variance estimator is consistent in mean squared error.
To this end, we define $\Delta = \min_{i,j \in [n]} \det(\covmatij)$ to be the smallest non-degeneracy measure amongst all distinct pairs ($i \neq j$) of exposures and single exposures ($i=j$).

\newcommand{\consistentvarestimation}{
Under Assumptions~\ref{assumption:bounded-potential-outcomes}, \ref{assumption:design}, and \ref{assumption:non-degenerate-exposures}, the mean squared error of the normalized variance estimator is bounded as 
\[
\Exp[\Big]{\paren[\big]{n \cdot \Var{\dhtest} - n \cdot \erlvarest}^2} 
= \bigO[\Big]{ \frac{1}{n} \cdot \paren[\Big]{\divdeg^3 \outdeg^7 + \frac{1}{\Delta^2}}}
\enspace.
\]
Thus, the normalized variance estimator is consistent if $\divdeg^3 \outdeg^7 = \littleO{n}$ and $\Delta = \omega \paren{n^{-1/2}}$.
}
\begin{thm} \label{thm:consistent-var-estimation}
\consistentvarestimation
\end{thm}

In Theorem~\ref{thm:consistent-var-estimation}, we analyze convergence of the normalized variance estimator to the normalized variance, which is bounded below by a positive constant, according to Assumption~\ref{assumption:mse-rate}.
This normalization ensures that both the variance and its estimator are on appropriate scales so that the mean squared error does not trivially approach zero.
In light of Assumption~\ref{assumption:mse-rate}, this says that the variance estimator converges to the variance at a faster rate than the variance of the \doseht estimator converges to zero.
A stronger requirement for the rate of convergence yields a stronger restriction on the growth conditions of the bipartite graph.
We also require that the degeneracy measure $\Delta$ does not approach zero too quickly, which is a quantitative strengthening of Assumption~\ref{assumption:non-degenerate-exposures}.

Theorem~\ref{thm:consistent-var-estimation} holds for a broad class of designs under Assumption~\ref{assumption:design}, including Bernoulli randomization at the level of individual diversion units which ignores the structure of the bipartite graph.
An experimental design which takes the structure of the bipartite graph into account would generally yield consistency under weaker conditions than those needed for Theorem~\ref{thm:consistent-var-estimation}.
In particular, a design which decorrelates the individual covariance estimators $\covestij$ and make their variances small will yield improved precision of the variance estimator.
Interestingly, improving the precision of the \doseht estimator and improving the precision of its variance estimator are generally different goals: a design which improves one may not necessarily improve the other.
A detailed investigation into this trade-off is beyond the scope of the current paper.

\subsection{Asymptotically valid confidence intervals}

We may now use our variance estimator together with the asymptotic normality to construct well-motivated confidence intervals.
We may construct $1 - \alpha$ confidence intervals by
\[
\dhtest \pm \Phi^{-1}(1 - \alpha/2) \sqrt{\erlvarest} \enspace,
\]
where $\Phi^{-1}:[0,1] \rightarrow \Reals $ is the quantile function of the standard normal deviate.
The following result follows from the consistency of the variance estimator (Theorem~\ref{thm:consistent-var-estimation}) together with asymptotic normality of the \doseht estimator (Theorem~\ref{thm:asymptotic-normality}).

\newcommand{\asymptoticvalidity}{
Under Assumptions~\ref{assumption:bounded-potential-outcomes}-\ref{assumption:non-degenerate-exposures} and further supposing that $\outdeg^4 \divdeg^{10} = \littleO{n}$ and $\Delta = \omega \paren{n^{-1/2}}$, a Wald-type confidence interval using the proposed variance estimator is asymptotically valid:
\[
\lim_{n \rightarrow \infty}
\Pr \paren[\Big]{ \ate \in \bracket[\Big]{ \dhtest \pm \Phi^{-1}(1 - \alpha/2) \sqrt{\erlvarest}}  } = 1 - \alpha \enspace.
\]
}
\begin{corollary}\label{cor:asymptotic-validity}
\asymptoticvalidity
\end{corollary}

In practice, it is possible that the proposed variance estimator may take negative values; in particular, this may happen when the variance is near zero and the variance estimator is imprecise relative to the variance.
When the variance estimator takes a negative value, this construction of confidence intervals is not well-defined.
We suggest two possible alternatives here: the experimenter may either use the absolute value of the variance estimator under the square root, or the experimenter may use a more conservative variance estimate.

\section{Analyzing \doseht Without the Linear Response Assumption} \label{sec:no-assumption}

Our previous analysis of the \doseht estimator relied on the linear response assumption.
In this section, we show that without the linear response assumption, the \doseht estimator can be interpreted as capturing as an average of linear approximations of each unit's dose response function to treatment intensities among the relevant diversion units.
The following theorem derives the expectation of the \doseht estimator without the linear response assumption.

\newcommand{\expectationnoassumption}{
Let the potential outcome functions be arbitrary functions of the exposures: $\poi(\zv) = \poi(\dosei)$.
Then, the expectation of the \doseht estimator is 
\[
\Exp{\dhtest} 
= \frac{1}{n} \sum_{i=1}^n \tilde{\beta}_i %\paren[\Big]{ \frac{\Cov{ \dosei, \poi(\dosei) }}{\Var{\dosei}} } 
\enspace,
\]
where $\tilde{\beta}_i$ is the coefficient of the exposure $\dosei$ in an OLS regression of $\poi$ on $\dosei$: $\tilde{\beta}_i = \frac{\Cov{ \dosei, \poi }}{\Var{\dosei}} $.
}
\begin{thm}[Arbitrary Responses] \label{thm:expectation-no-assumption}
\expectationnoassumption
\end{thm}

Theorem~\ref{thm:expectation-no-assumption} shows that under a general (non-linear) response assumption, the \doseht estimator may be interpreted as estimating the average of the slopes of the best linear fit of the outcome to the exposure.
We emphasize that this regression cannot be run by the experimenter because the outcomes are not known.

This result is related to several previous results within and outside causal inference.
Realizing that most conditional expectation functions are not linear, some statisticians and econometricians have advocated for an interpretation of linear regression as capturing an interpretable approximation of the underlying relationship between the outcome and the regressors \citep{Chamberlain1984Panel,Manski1991Regression,Goldberger1991Course}.
Specifically for causal inference, \citet{Angrist1998Estimating} highlights that when linear regression is used to estimate treatment effects in an observational setting, the estimator captures a variance-weighted average of unit-level causal effects (see also \citealp{Aronow2015Does} and \citealp{Sloczynski2020Interpreting}).
In a similar vein to these results, Theorem~\ref{thm:expectation-no-assumption} shows that the \doseht estimator captures a policy-relevant causal quantity even if the linear response assumption does not hold.
The difference is that the effect it captures is an unweighted average over the units, and the approximation is with respect to each unit's response function.

Under the linear response assumption, this regression-based estimand is equal to the average total treatment effect (ATTE) defined in Section~\ref{sec:causal-estimand}.
However, these two estimands will not coincide for arbitrary response functions and designs.
Aside from the linear response assumption, there are several scenarios where we will expect the ATTE and the regression-based estimand to be similar.
The first scenario is when the design very closely approximates the Bernoulli design so that exposures have mean $1/2$ and concentrate around $0$ and $1$.
When the design is exactly Bernoulli, one can verify that the regression-based estimand is exactly equal to the ATTE, which matches the  intuition from the no-interference setting.
The second scenario is when the response function is well-approximated by a linear function.
An extensive investigation into formal conditions under which the regression-based estimand and the causal estimand (ATTE) are equivalent is beyond the scope of the current paper.

\section{A Cluster Design for Targeting Exposure Distribution}\label{sec:design}

In this section, we describe \exposuredesign, an independent cluster design which aims to improve precision of the \doseht estimator by constructing a desirable exposure distribution.
To this end, we first show in Section~\ref{sec:precision} that increasing the variance of exposures and decreasing the covariance between exposures can lead to improved precision of the \doseht estimator in settings of interest.
In Section~\ref{sec:clustering-objective}, we present a clustering objective which aims to achieve such exposure distributions, thereby improving the precision of \doseht estimator.
Finally, we present a heuristic algorithm for optimizing this clustering objective in Section~\ref{sec:corr-clustering-implementation}.

\subsection{An ideal exposure distribution}\label{sec:precision}

Like all linear estimators, the \doseht estimator will incur a large mean squared error when the coefficients for the observed outcomes are large.
In particular, if the variance of an exposure $\Var{\dosei}$ is close to zero, the corresponding term of the estimator in \eqref{eq:ht-exposure-def} will become large, yielding a high mean squared error even though the estimator is unbiased.
In general, experimenters should use designs for which the corresponding exposure variances are large.

However, large exposure variances should not be the only property of the exposure distribution that experimenters focus on.
Consider a naive design that places equal probability on two treatment vectors: either all diversion units receive treatment ($\zv = \onevec$) or all diversion units receive control ($\zv = \zerovec$).
Under this design, all of the exposure variances are 1, which is the largest possible variance.
However, we observe either all of the treatment outcomes or all of the control outcomes, but never a mix of the two; in fact, the estimator itself takes only two values.
Thus, the \doseht estimator will suffer very large MSE under this design, despite the individual exposure variances being as large as possible.
This raises the question: how should we construct a design that improves the precision of the \doseht estimator?

This is a challenging task, since the precision of the \doseht estimator depends on the unobserved outcomes.
Indeed, a universally optimal design does not exist \citep{Harshaw2021Balancing}.
However, we argue that a good heuristic is to construct the design so that the variance of the exposures are large and the covariance between most pairs of exposures are close to zero.
As discussed at the end of Section~\ref{sec:basic-stats}, a design which directly targets these aspects of the exposure distribution may hope to ensure high precision of the \doseht estimator under weaker growth conditions on the bipartite graph than those presented in our analysis.

As an illustration to motivate this heuristic, consider the scenario where all of the individual treatment effects are zero, i.e. the response functions are of the form $\poi(\dosei) = \incepti$.
Studies of these sort are sometimes called uniformity trails or A/A tests.
In this scenario, one may derive the MSE of the \doseht estimator as
\[
\Exp{\paren{ \dhtest - \ate}^2 } = 
\frac{1}{n^2} \bracket[\Bigg]{
    \sum_{i=1}^m \incepti^2 \frac{1}{\Var{\dosei}} 
    + 2 \sum_{i < j} \incepti \inceptj \frac{\Cov{\dosei, \dosej} }{\Var{\dosei} \Var{\dosej}}
} \enspace.
\]

As the individual variance terms increase, the first sum decreases.
The effect of the second term depends on the sign of the product of intercepts, $\incepti \inceptj$.
Generally speaking, these intercepts are unknown to the experimenter.
For the sake of this discussion, consider when the outcomes $\poi(\zv)$ are non-negative, in which case all intercepts $\incepti$ and their products $\incepti \inceptj$ also are non-negative.
In this case, decreasing the correlation between exposures would decrease the second term, leading to an overall decrease in the MSE of the \doseht estimator.

\subsection{Clustering objective for targeting exposure distribution}\label{sec:clustering-objective}

In the previous section, we noted that a reasonable heuristic is to assign treatments to the diversion units so that the variance of exposures is large and the covariance between most exposures is small.
However, as argued in Section~\ref{sec:linear-exposure-response}, constructing a treatment distribution which realizes a desired exposure distribution is generally not possible due to overlapping structures in the bipartite graph.
In this section, we present an optimization formulation for an independent cluster design which aims to achieve large exposure variance and small correlations between exposures, to the extent that this is possible given the bipartite graph.

We propose choosing a cluster design which maximizes the following objective function:
\begin{equation}\label{eq:cluster_objective}\tag{\exposuredesign}
\max_{\text{clustering } \clustering}
 \quad \sum_{i=1}^n \bracket[\bigg]{\Var{\dosei} - \phi \sum_{i \neq j} \Cov{ \dosei, \dosej}}
\end{equation}
The variance and covariance of the exposures above are with respect to the random treatment assignments of the corresponding independent cluster design.
The first term in the objective is the sum of the exposure variances, so maximizing this term will encourage large exposure variances.
The second term penalizes positive correlation between exposures, and maximizing it encourages small correlation.
The correlation penalizing parameter $\phi \geq 0$ controls the relative emphasis between large exposure variances and small exposure correlations.
When $\phi = 0$, then the emphasis is placed entirely on increasing individual exposure variance; this is typically undesirable, as the optimal solution is often a single cluster containing all diversion units, which results in a design where either all diversion units receive treatment or all diversion units receive control.
Increasing $\phi$ places more emphasis on decorrelating exposures.

A key insight to solving the \exposuredesign formulation is that it may be reformulated as a \emph{correlation clustering} problem, which is well-studied in the algorithms literature \citep{BBC02correlation,  Swamy04correlation, CGV05clustering}.
The existing computational understanding of these correlation clustering problems is another reason to use the \exposuredesign objective.
The following proposition states the re-formulation of the \exposuredesign objective into the correlation clustering variant, denoted \corrclust.

\begin{prop}\label{prop:corr_cluster_instance}
For each pair of diversion units $i,j \in \divunits$, define the value $\omega_{i,j} \in \Reals$ as
\begin{equation} \label{eq:corr-clustering-weights}
\omega_{i,j} 
= \paren{1 + \phi} \sum_{k=1}^m w_{k,i} w_{k,j}
- \phi \paren[\Big]{\sum_{k=1}^m w_{k,i}} \paren[\Big]{\sum_{k=1}^m w_{k,j}} \enspace,
\end{equation}
where $w_{k,i}$ is the weight of the edge between the $k$th outcome unit and the $i$th diversion unit.
\ref{eq:cluster_objective} is equivalent to the following clustering problem:
\begin{equation}\label{eq:corr_clustering_obj}\tag{\corrclust}
\max_{\text{clusterings } \clustering }
\quad \sum_{\cluste{r} \in \clustering} \sum_{i, j \in \cluste{r}} \omega_{i,j} 
\enspace.
\end{equation}
\end{prop}

Although \corrclust is a variant of the weighted maximization-type correlation clustering problems previously studied in the literature \citep{CGV05clustering, Swamy04correlation}, 
it is not equivalent to previously studied formulations in an approximation-preserving sense, as it takes positive and negative values.
Given that weighted maximization correlation clustering is NP-Hard \citep{CGV05clustering}, it is reasonable to presume that our formulation \corrclust is also computationally hard.
However, these computational complexity considerations are beyond the scope of this paper.

We remark that \exposuredesign places no explicit constraint on the number of clusters produced by the clustering algorithm.
However, our analysis in Section~\ref{sec:basic-stats} suggests that limiting the cluster sizes, and thereby reducing correlation between exposures, helps to achieve consistency and normality of the \doseht estimator.
This desirable cluster structure is not captured by the optimization problem itself, but we handle it through our local search heuristic described in Section~\ref{sec:corr-clustering-implementation}.
We also remark that the \exposuredesign objective does not directly minimize MSE of the \doseht estimator, but should instead be understood as a useful heuristic.
Finally, we remark that it is impossible to induce a negative correlation between exposures in the class of independent cluster designs, so the second term of the objective attains its maximum when the exposures are uncorrelated.

The \exposuredesign is conceptually similar to the correlation-clustering based design presented in \cite{pouget2019variance}, but it differs in several key ways.
\exposuredesign provides experimenters the flexibility to trade-off larger exposure variances with more de-correlated exposures by setting the parameter $\phi$.
In contrast, the cluster design of \cite{pouget2019variance} focuses solely on the exposure variance by maximizing which is referred to as ``empirical dose variance’’ in their paper. 
As we demonstrate in Appendix~\ref{sec:design-proofs}, their objective is equal to ours when the trade-off parameter is set to $\phi = 1/(n-1)$.
In this sense, their cluster design can be viewed as a specific instance of the more general \exposuredesign, where a greater emphasis is placed on maximizing the exposure variances.
More importantly, the \exposuredesign presented in this paper is designed to increase the precision of the \doseht estimator, while the correlation-clustering based design of \cite{pouget2019variance} is motivated by the intuition that extreme exposures are helpful in this setting.

\subsection{Local search heuristic for \exposuredesign} \label{sec:corr-clustering-implementation}
We now describe a local search heuristic for optimizing \exposuredesign.
The local search is initialized with the singleton clustering and iteratively seeks to improve the clustering.
In each iteration, the algorithm loops through random pairs of diversion units $i,j \in \divunits$ and moves diversion unit $j$ to the cluster currently containing diversion unit $i$ if that change improves the objective value, subject to a user-defined constraint on the clusters.
The local search algorithm is presented more formally below as Algorithm~\ref{alg:local-search}.

\begin{algorithm}[H]
    \caption{Local Search($\weightM, \phi, \maxcorr, T$, cluster constraints)} \label{alg:local-search}
    Initialize singleton clustering $\clustering = \setb{ \setb{1}, \setb{2}, \dots \setb{m} }$ \\
    \For{iterations $t = 1 \dots T$}{
        Choose a uniformly random permutation $\pi$ on the diversion units. \\
        \For{diversion units $i \in \pi$}{
            Randomly select a diversion unit $j$ with probability proportional to $\paren{\weightM^\tran \weightM}_{i,j}$ \\ %$= \sum_{k=1}^n w_{k,i} w_{k,j}$. \\ % \quad (wedge sampling) \\
            Let $\cluste{}$ and $\cluste{}'$ be the clusters containing diversion units $i$ and $j$, respectively. \\
            \If{moving $j$ from cluster $\cluste{}'$ to $\cluste{}$ increases objective value \& satisfies user-defined constraints}{
                Move diversion unit $j$ from cluster $\cluste{}'$ to $\cluste{}$. \\
                % Update $\cluste{} \gets \cluste{} \cup \setb{j}$ \\.
                % Update $\cluste{}' \gets \cluste{}' \setminus \setb{j}$.
            }
        }
    }
    \Return{clustering $\clustering$}
\end{algorithm}

Given a diversion unit $i$, we use \emph{wedge sampling} to select unit $j$ proportional to $\paren{\weightM^\tran \weightM}_{i,j} = \sum_{k=1}^n w_{k,i} w_{k,j}$ \citep{CL99approximating}.
We use wedge sampling because picking pairs of units for which $\paren{\weightM^\tran \weightM}_{i,j}$ is large often results in a large correlation clustering weight $\omega_{i,j}$.
Performance improvements are obtained by computing the correlation clustering weights $\omega_{i,j}$ only when they are needed to evaluate changes in the objective.
In particular, the first term of \eqref{eq:corr-clustering-weights} is an inner product whose computation scales with the sparsity of the bipartite graph and the second term is the product of sums which may be pre-computed.

Diversion unit $j$ is moved into the cluster containing diversion unit $i$ if two conditions are met: the objective increase and the user-defined cluster constraints are satisfied.
We recommend that experimenters choose constraints which limit the cluster sizes in some way.
For example, the experimenter may choose to constraint the number of diversion units within a cluster.
In our implementation, we constrain the sum of the (unweighted) degrees of diversion units within a cluster to be a fixed fraction of the total number of edges.
In this way, no cluster has too many outgoing edges to outcome units.
Constraining the clusters in this way implicitly limits the amount of dependence between exposures, which is one of the key aspects underlying the design conditions in Assumption~\ref{assumption:design} of our analysis.

This local search algorithm is different from the one presented in \cite{pouget2019variance}, which approximates the Gram matrix $\weightM^\tran \weightM$ offline as the sum of a sparse matrix and a rank-one matrix, so that the algorithm works with an approximation to the objective.
In contrast, our algorithm accepts and rejects changes based on the exact value of the objective.
Relative to \cite{ES09bounding}, this local search does not consider moving units to new empty clusters, nor does it consider merging clusters. 
Moves of the first type seem consistently unprofitable in our setting. 
As for merges, we find that the algorithm is able to essentially perform them by moving one diversion unit at a time.

\section{An Application to Online Marketplace Experiments}\label{sec:simulations}

In this section, we apply our proposed methodology to a simulated marketplace experiment based on a product review dataset from the Amazon marketplace \citep{mcauley2015image, he2016ups}.
The Amazon product review dataset contains 83 million reviews made by 121 thousand customers on 9.8 million items.
In this application, we imagine running an experiment where we change the pricing mechanism of items in the marketplace, and are interested in how a customer's reported satisfaction is affected by this change in pricing mechanism.
The items sold in the marketplace are the diversion units and the customers in the marketplace are the outcome units.
An edge is present in the bipartite graph if a customer reviewed an item and all edges incident to an outcome unit are uniformly weighted.
Thus, a customer's exposure is the unweighted average of the treatment status of the items they have previously reviewed.

In our simulated marketplace experiment, we generate potential outcomes via an exposure-response function.
The outcomes themselves are the satisfaction score of a customer given their exposure; the responses in this study are simulated, but we can imagine that they are either reported directly by a customer or constructed based upon text analysis of the customer's review.
In the case of a linear response, a positive slope indicates an increase in customer satisfaction as a result of the new pricing mechanism, while a negative slope indicates a decrease in satisfaction as a result of the new pricing mechanism.

We preprocess the Amazon produce review dataset for computational tractability in the same manner as \cite{pouget2019variance}.
We begin by removing customers that have reviewed fewer than 100 items. % in order to obtain richer exposure distributions mimicking those of users of an online platform.
Next, we execute a balanced partitioning algorithm \citep{aydin2019distributed} on the entire bipartite graph to create groups of customers and groups of items.
After this preprocessing, we define the diversion units to be the item groups and the outcome units to be the customer groups.
The resulting bipartite graph has 1 thousand outcome units, 2.4 million diversion units, and 7.1 million edges.
We emphasize that this bipartite graph does not appear to satisfy the growth conditions (specified in Section~\ref{sec:basic-stats}) required for consistency and normality under the broad class of designs captured by Assumption~\ref{assumption:design}.
In this sense, this application may be considered a test of the efficacy of the proposed \exposuredesign under weaker growth conditions.

We investigate the statistical properties of the \doseht estimator, the variance estimator, and the resulting confidence intervals under various treatment designs in this application.
In particular, we compare our proposed \exposuredesign to several existing designs: the Bernoulli design, the correlation clustering design of \citet{pouget2019variance}, and the balanced partitioning cluster design of \cite{eckles2016design}, as implemented by \citet{aydin2019distributed}.
Although the balanced partitioning design was not developed for the bipartite setting, we may expect it to achieve high precision estimates if the clustering produces decorrelated exposures with large variances.

We generate the potential outcomes in three simulations, where we vary the response functions that are used.
The first two simulations feature linear response functions and the third simulation features a non-linear response function.
We remark that although the parameters of the response functions are randomly chosen in our simulations, this random parameter draw is made only once and the outcomes themselves are fixed across all sampled assignments of all designs.
These simulations are listed below.

\begin{itemize}
    \item 
    \textbf{S1: (Mostly) Positive Treatment Effect.}
    In this simulation, we set almost all of the individual treatment effects to be positive across units, while varying the responses amongst the units. 
    More precisely, we sample the slope terms as $\slopei \sim \mathcal{N}(2, 1)$ and the intercept terms as $\incepti \sim \mathcal{N}(-1, 3/8)$.
    % This simulation setting is similar to the conditions in Proposition~\ref{claim:doseht-variance-no-intercept}.
    \item
    \textbf{S2: (Nearly) Zero Treatment Effect.}
    In this simulation, we set all the individual treatment effects close to zero, while varying the baseline outcomes.
    The outcomes are chosen to be mostly positive.
    More precisely, we sample the slope terms as $\slopei \sim \mathcal{N}(0, 1/2)$ and the intercept terms as $\incepti \sim \mathcal{N}(2, 3/8)$.
    % This simulation setting is similar to the conditions in Proposition~\ref{claim:doseht-variance-no-slope}.
    % \item
    % \textbf{Graph-Dependent Outcomes} 
    % In this simulation, we cluster the customers according to their overlapping items and assign the same exposure-response functions within these customer clusters.
    % We obtain $15$ customer clusters by first using a complete linkage hierarchical clustering on the similarity score $s_{i,j} = \sum_{k=1}^m w_{i,k} w_{j,k}$, then flattening to 15 clusters.
    % Next, we assign the same exposure-response function to each unit in the cluster, where the exposure-response functions in the cluster are chosen as slope terms $\slopei \sim \mathcal{N}(1, \nicefrac{1}{2})$ and intercept terms $\incepti \sim \mathcal{N}(0, \nicefrac{1}{8})$.
    \item 
    \textbf{S3: Non-Linear Response.}
    In this simulation, we use a non-linear response function to specify the potential outcomes.
    In particular, the response of outcome unit $i$ is $\poi(\dosei) = 4 \dosei (\dosei - 1) + \incepti$, where $\incepti \sim \mathcal{N}(0, 1/8)$.
    Under this response, all individual treatment effects are 0.
    Because the linear response assumption is not satisfied, we should not expect our statistical analysis (unbiasedness, consistency, normality, etc) to hold exactly.
\end{itemize}

% DESCRIBE DESIGN PARAMETERS + SAMPLING
We run \exposuredesign with different values of the correlation penalty parameter $\phi$, chosen from a grid of 10 points between $[0,2]$.
% When using the \exposuredesign, we set the correlation penalty parameter to $\phi = 0.223$, chosen from a grid of 10 points between $[0,2]$.
The clustering itself is obtained using our local search heuristic presented in Section~\ref{sec:corr-clustering-implementation}.
Recall that the correlation clustering objective of \citet{pouget2019variance} may be obtained by setting $\phi = 1/(n-1)$.
For this reason, we compute the corresponding cluster by running our local search heuristic with $\phi = 0.001 \approx 1/(n-1)$.

\begin{table}[h]
    \centering
	\begin{tabular}{clccccc}
		\multicolumn{1}{l}{} &  & \begin{tabular}[c]{@{}c@{}}Exposure\\ Design \\ ($\phi = 0.223$) \end{tabular} &
		\begin{tabular}[c]{@{}c@{}}Exposure\\ Design \\ ($\phi = 1.0$) \end{tabular} &
		\begin{tabular}[c]{@{}c@{}}Correlation \\ Clustering\end{tabular} & \begin{tabular}[c]{@{}c@{}}Balanced \\ Partitioning\end{tabular} & Bernoulli \\ \cline{2-7} 
		\multirow{3}{*}{\textbf{S1}} 
		& RMSE & 0.049 & 0.057 & 0.087 & 0.0718 & 0.659 \\
		& CI Width & 0.220 & 0.239 & 0.329 & 0.284 & 2.576 \\
		& CI Coverage & 91.5\% & 91.5\% & 94.0\% & 94.1\% & 95.1\% \\ \cline{2-7} 
		\multirow{3}{*}{\textbf{S2}} 
		& RMSE & 1.81 & 2.24 & 2.05 & 1.86 & 43.83 \\
		& CI Width & 7.04 & 8.72 & 8.00 & 7.21 & 190.8 \\
		& CI Coverage & 94.8\% & 94.8\% & 95.0\% & 94.5\% & 95.1\% \\ \cline{2-7} 
		\multirow{3}{*}{\textbf{S3}} 
		& RMSE & 0.86 & 1.09 & 0.90 & 0.78 & 24.37 \\
		& CI Width & 3.47 & 4.35 & 3.82 & 3.39 & 95.15 \\
		& CI Coverage & 95.7\% & 95.5\% & 96.9\% & 96.9\% & 95.1\% \\ \cline{2-7} 
	\end{tabular}
	\caption{Simulation results}
	\label{table:summary-stats}
\end{table}

% DISCUSSION OF SUMMARY STATISTICS
A summary of the main results from these simulations appears in Table~\ref{table:summary-stats}.
For each treatment design and simulation, we sample $15,000$ exposure vectors, compute the observed outcomes, and construct the corresponding \doseht and variance estimators.
Given the \doseht and variance estimators, we construct the confidence intervals as described in Section~\ref{sec:confidence-intervals}, with absolute value corrections when the variance estimator takes a negative value.
For each simulation and treatment design, we report the root mean square error (RMSE) of the \doseht estimator, the average width of the $95\%$ confidence intervals, and the coverage of the $95\%$ confidence intervals.

In Table~\ref{table:summary-stats}, we show results for \exposuredesign with parameters $\phi = 0.223$ and $\phi = 1.0$.
The parameter value $\phi = 1.0$ is chosen arbitrarily, while the value $\phi = 0.223$ is the value for which \exposuredesign achieves the smallest mean squared error across all simulations.
We emphasize that selecting $\phi$ in this way (i.e. a course grid search to find the smallest MSE) cannot be performed by an experimenter and is presented here only to inform our discussion.
Still, it is interesting that a single penalty parameter minimized MSE for all simulations considered here.
See Figure~\ref{fig:mse-changing-phi} for comparison of the mean squared error of \exposuredesign when the correlation penalty parameter $\phi$ is varied.

We draw particular attention to a few features in these results.
\exposuredesign achieves the smallest RMSE in the simulations which satisfy the linear response assumption.
All cluster-based designs achieve significantly smaller RMSE than the Bernoulli design, which emphasizes the importance of carefully considering the exposure distribution when the growth conditions (specified in Section~\ref{sec:basic-stats}) are not satisfied.
The confidence intervals in Simulation 1 under \exposuredesign cover below the nominal 95\% level, indicating that either the sampling distribution of the \doseht estimator isn't sufficiently approximated by a normal or the variance estimator isn't sufficiently concentrated at this sample size.
The confidence intervals in Simulation 3 cover slightly above the nominal 95\% level, which is a result of conservative bias in the variance estimate due to non-linearity of the response. 

% ERL HISTOGRAMS
\begin{figure}[h!tb]
    \centering
\begin{subfigure}{0.25\textwidth}
\centering
  Positive Treatment Effect
\end{subfigure}\hfil
\begin{subfigure}{0.25\textwidth}
\centering
  Zero Treatment Effect
\end{subfigure}\hfil 
\begin{subfigure}{0.25\textwidth}
\centering
  Non-Linear Response
\end{subfigure}
\bigskip

\begin{subfigure}{0.25\textwidth}
  \includegraphics[width=\linewidth]{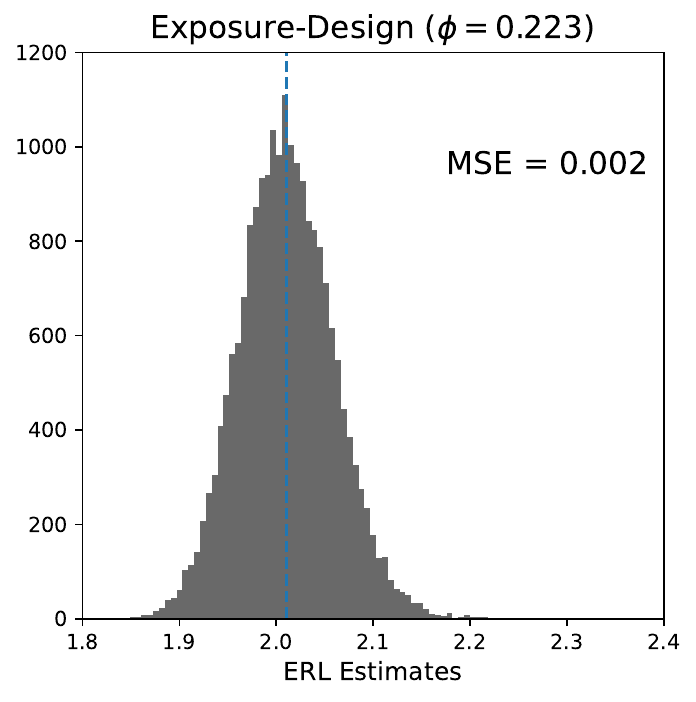}
  \label{fig:sim1_d1}
\end{subfigure}\hfil
\begin{subfigure}{0.25\textwidth}
  \includegraphics[width=\linewidth]{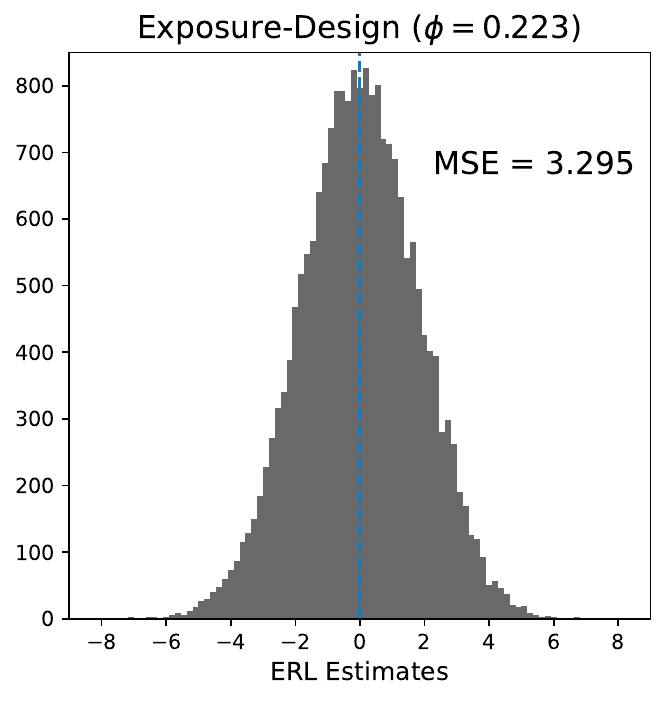}
  \label{fig:sim2_d1}
\end{subfigure}\hfil 
\begin{subfigure}{0.25\textwidth}
  \includegraphics[width=\linewidth]{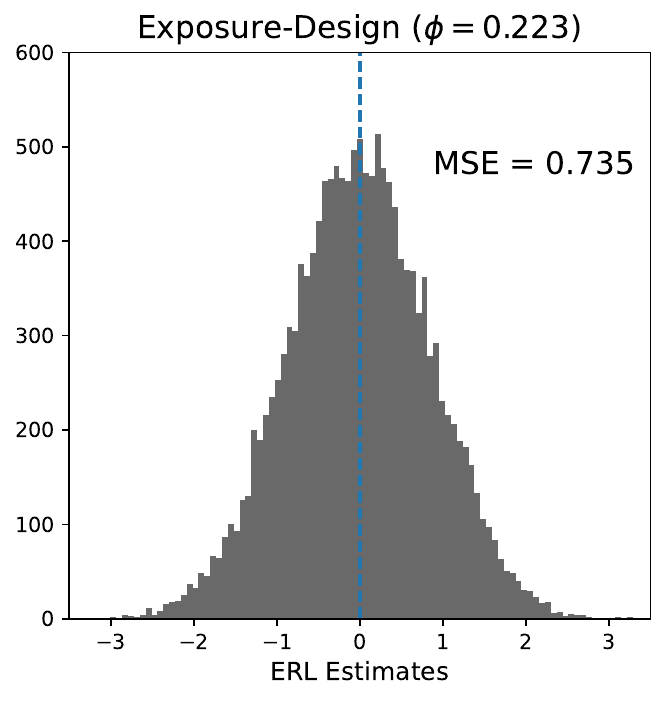}
  \label{fig:sim3_d1}
\end{subfigure}

\vspace{-.3cm}
\begin{subfigure}{0.25\textwidth}
  \includegraphics[width=\linewidth]{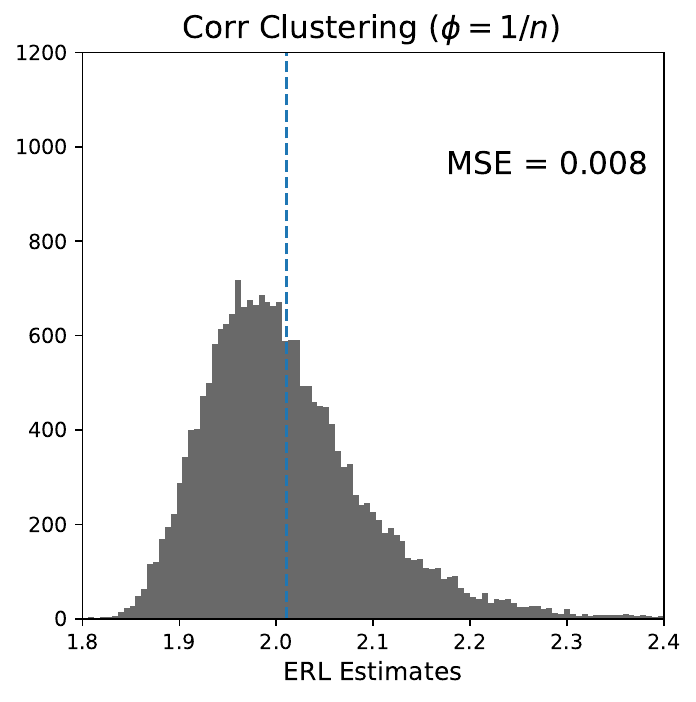}
  \label{fig:sim1_d2}
\end{subfigure}\hfil 
\begin{subfigure}{0.25\textwidth}
  \includegraphics[width=\linewidth]{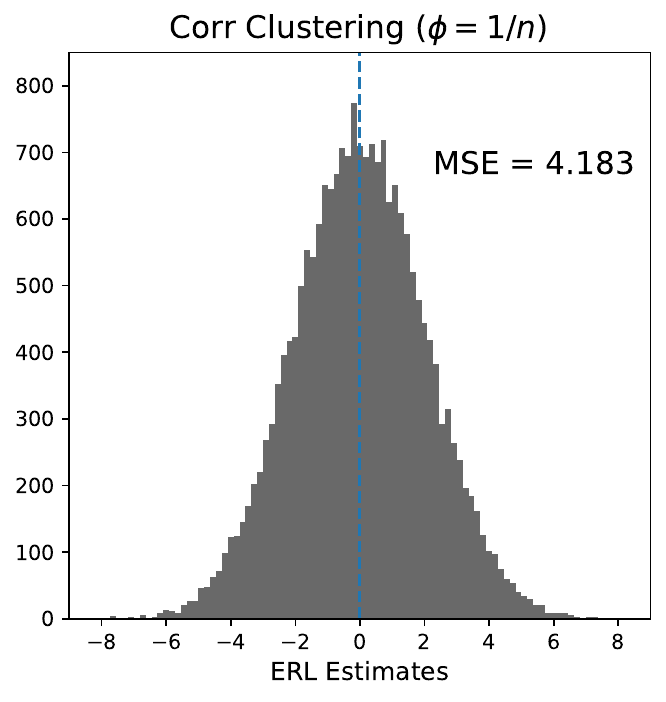}
  \label{fig:sim2_d2}
\end{subfigure}\hfil
\begin{subfigure}{0.25\textwidth}
  \includegraphics[width=\linewidth]{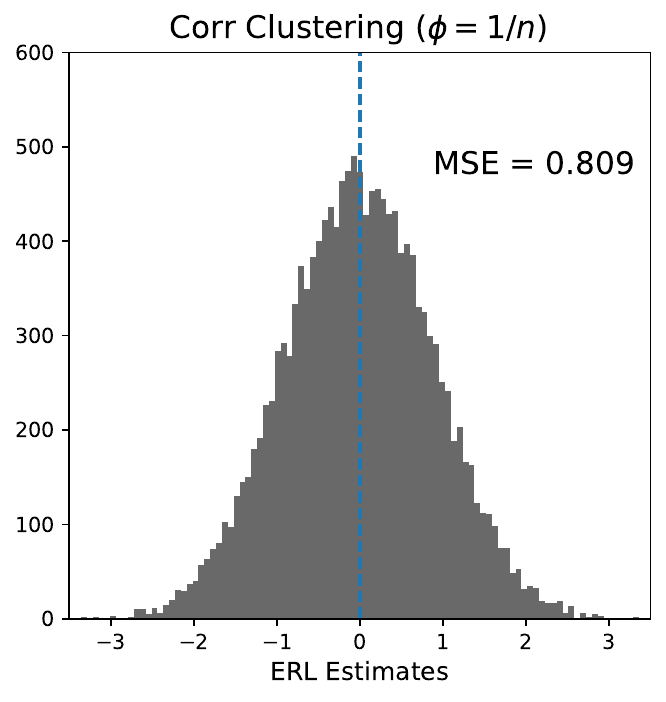}
  \label{fig:sim3_d2}
\end{subfigure}

\vspace{-.3cm}
\begin{subfigure}{0.25\textwidth}
  \includegraphics[width=\linewidth]{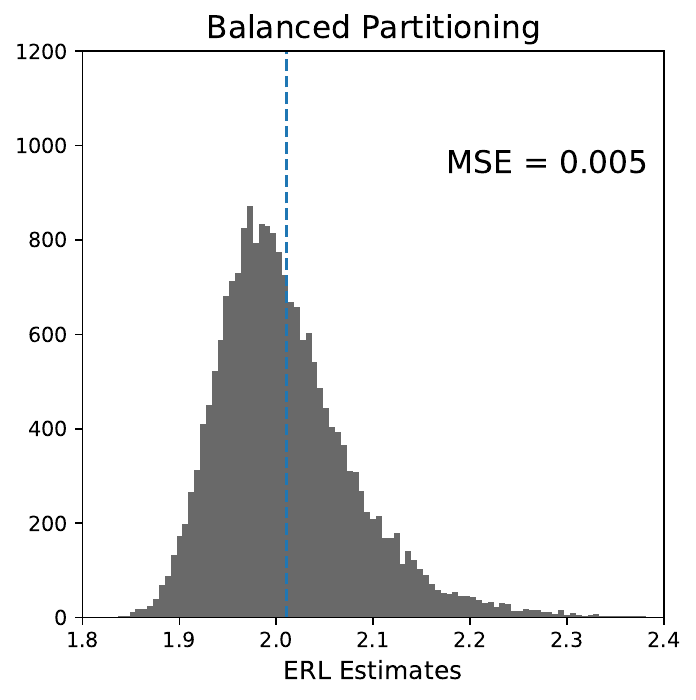}
  \label{fig:sim1_d3}
\end{subfigure}\hfil 
\begin{subfigure}{0.25\textwidth}
  \includegraphics[width=\linewidth]{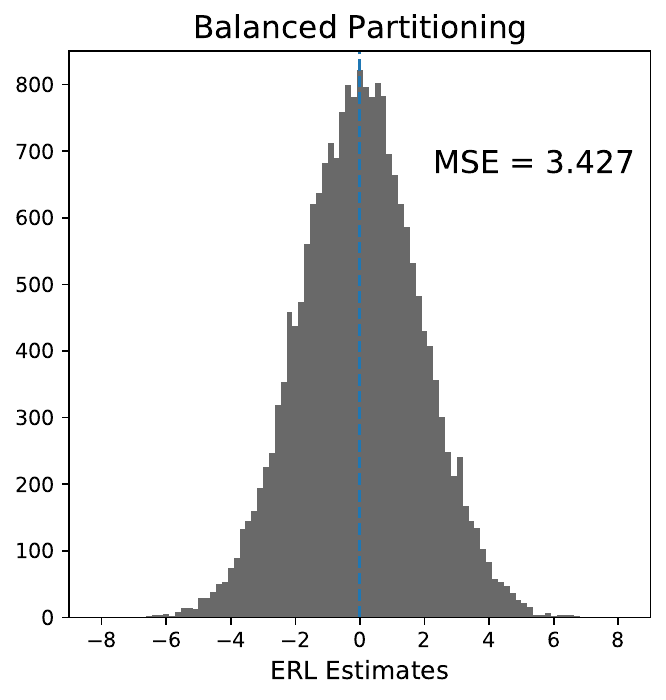}
  \label{fig:sim2_d3}
\end{subfigure}\hfil
\begin{subfigure}{0.25\textwidth}
  \includegraphics[width=\linewidth]{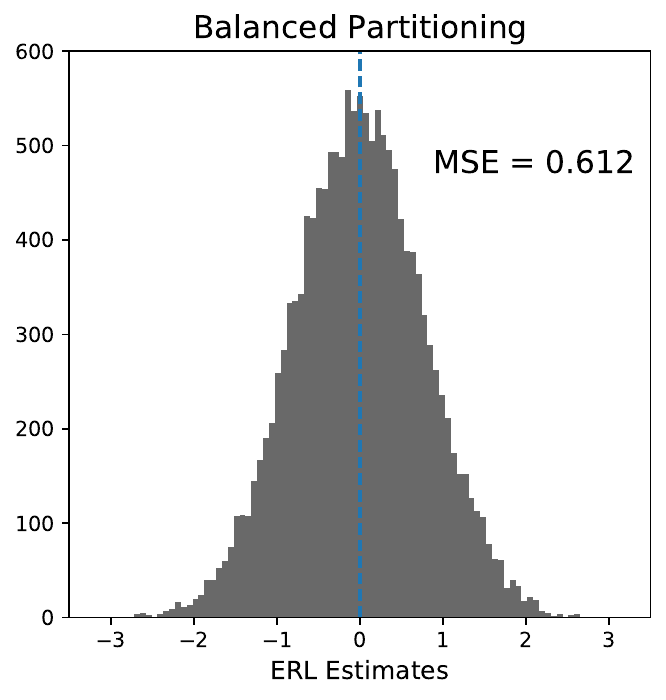}
  \label{fig:sim3_d3}
\end{subfigure}
\caption{Histograms of the \doseht estimator in simulations.}
\label{fig:erl-histograms}
\end{figure}

% DISCUSS ERL HISTOGRAMS
Figure~\ref{fig:erl-histograms} contains histograms of the \doseht estimator for each simulation and design, where the rows correspond to the designs and the columns correspond to the simulations.
The dotted vertical line in the plot is the true ATTE.
In all simulations, the distribution of the \doseht estimator appears unimodal, centered around the ATTE, and (roughly) normal, which is empirical evidence that the normal approximation used to derive confidence intervals may be well-motivated for \exposuredesign and other cluster-based designs.
This is to be expected for Simulations~1 and 2 where the linear response assumption holds, but is perhaps surprising in Simulation 3, which features a highly non-linear response.
This unbiasedness may be explained by Theorem~\ref{thm:expectation-no-assumption} in the following way: the quadratic responses in Simulation~3 yield zero treatment effect for all units.
Although the best linear approximation to each quadratic response does not well-approximate the quadratic response itself, the linear approximation has zero slope and so, in this sense, captures the ITE exactly.
% It is possible that the apparent unbiasedness in Simulation 3 is due to symmetry in the response and the exposure distributions, but rigorously investigating this is beyond the scope of the current paper.

\begin{figure}[h!tb]
    \centering
\begin{subfigure}{0.25\textwidth}
\centering
  Positive Treatment Effect
\end{subfigure}\hfil
\begin{subfigure}{0.25\textwidth}
\centering
  Zero Treatment Effect
\end{subfigure}\hfil 
\begin{subfigure}{0.25\textwidth}
\centering
  Non-Linear Response
\end{subfigure}
\bigskip

\begin{subfigure}{0.25\textwidth}
  \includegraphics[width=\linewidth]{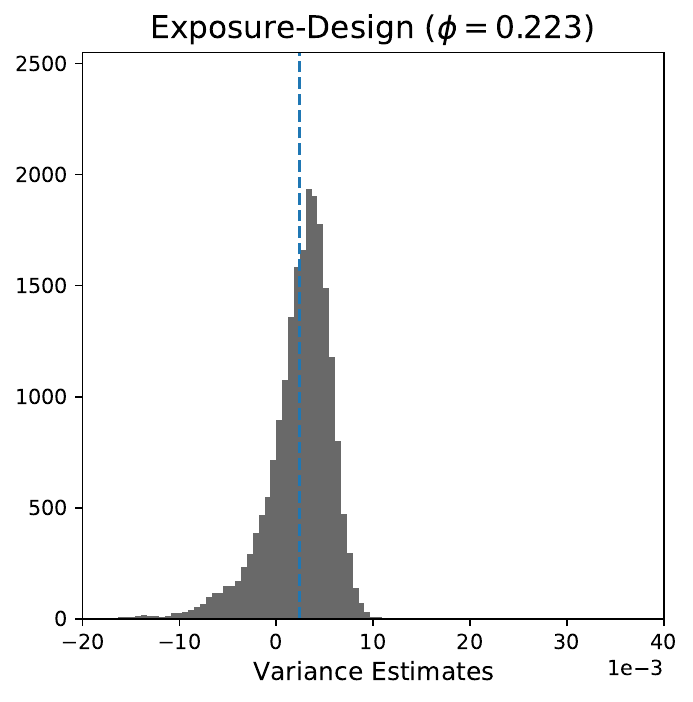}
  \label{fig:var-sim1_d1}
\end{subfigure}\hfil
\begin{subfigure}{0.25\textwidth}
  \includegraphics[width=\linewidth]{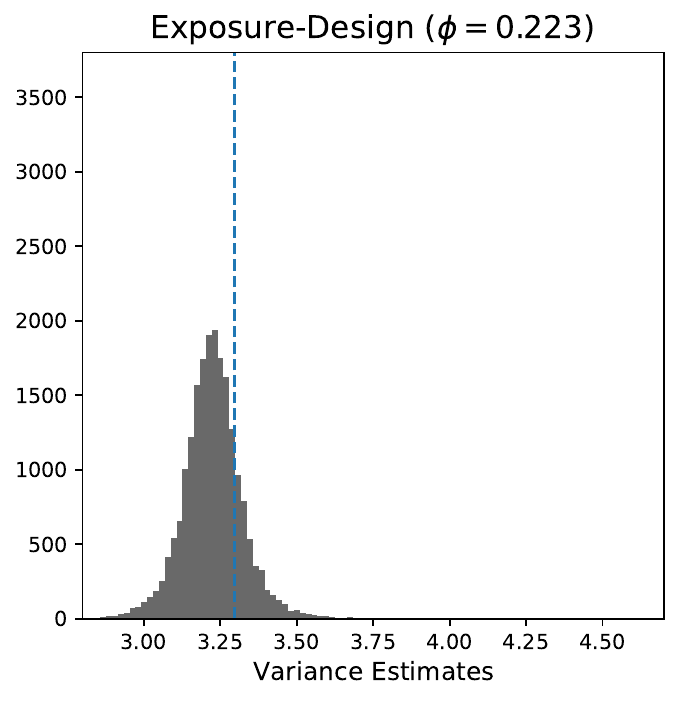}
  \label{fig:var-sim2_d1}
\end{subfigure}\hfil 
\begin{subfigure}{0.25\textwidth}
  \includegraphics[width=\linewidth]{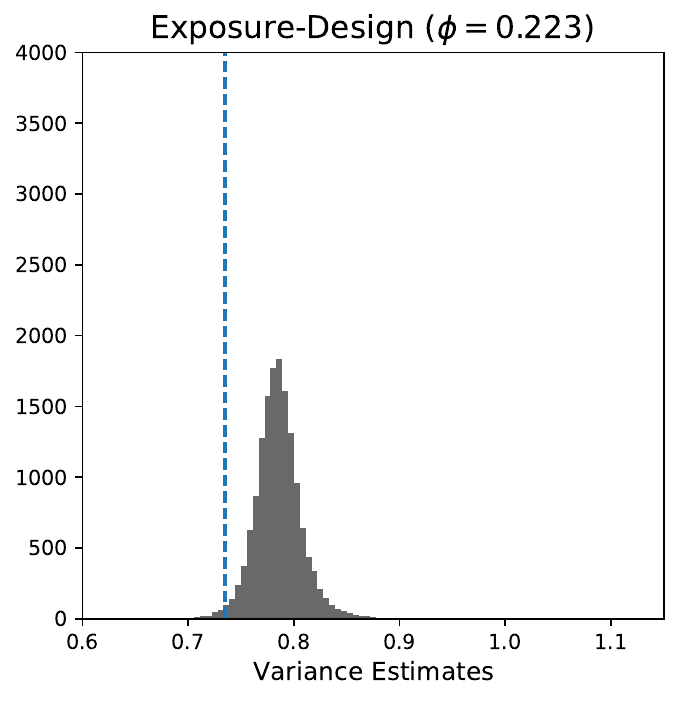}
  \label{fig:var-sim3_d1}
\end{subfigure}

\vspace{-.3cm}
\begin{subfigure}{0.25\textwidth}
  \includegraphics[width=\linewidth]{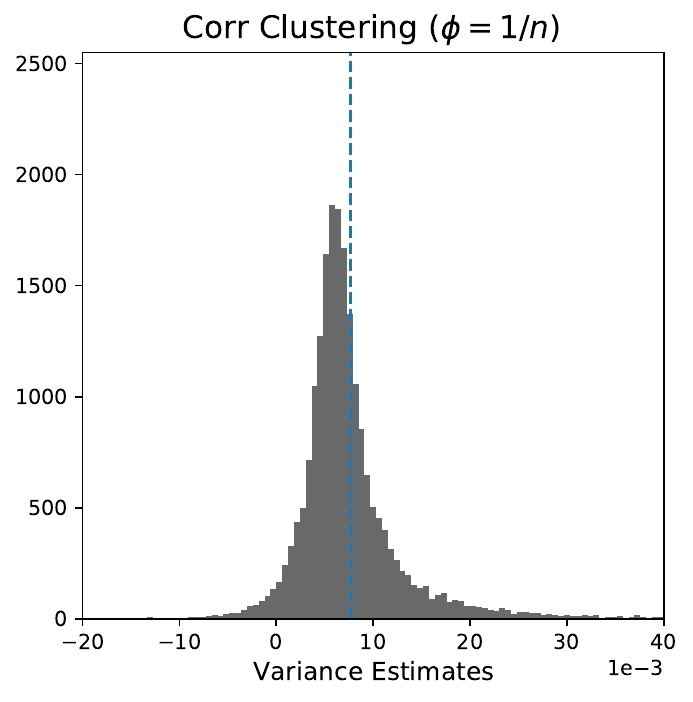}
  \label{fig:var-sim1_d2}
\end{subfigure}\hfil 
\begin{subfigure}{0.25\textwidth}
  \includegraphics[width=\linewidth]{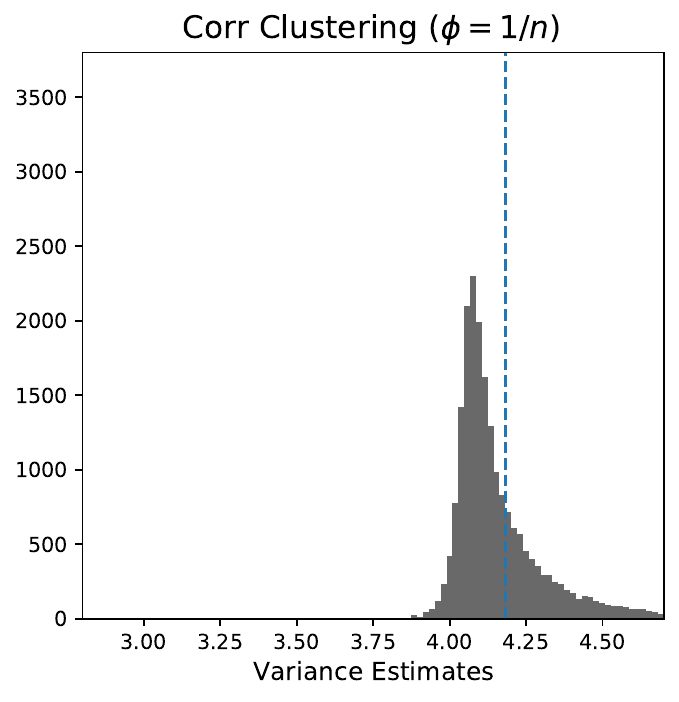}
  \label{fig:var-sim2_d2}
\end{subfigure}\hfil
\begin{subfigure}{0.25\textwidth}
  \includegraphics[width=\linewidth]{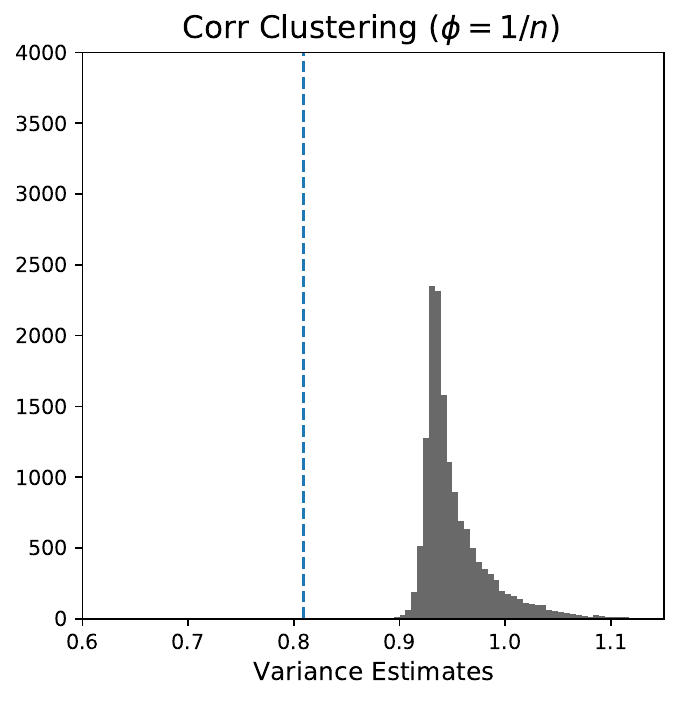}
  \label{fig:var-sim3_d2}
\end{subfigure}

\vspace{-.3cm}
\begin{subfigure}{0.25\textwidth}
  \includegraphics[width=\linewidth]{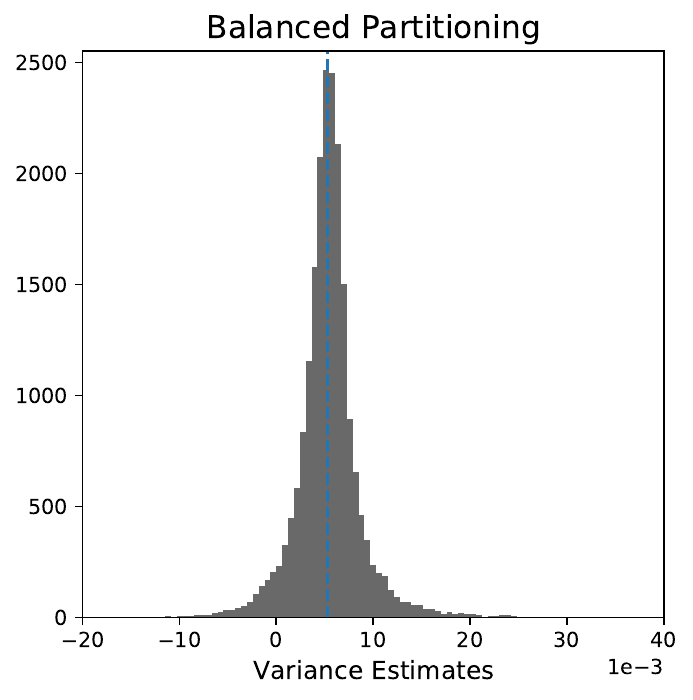}
  \label{fig:var-sim1_d3}
\end{subfigure}\hfil 
\begin{subfigure}{0.25\textwidth}
  \includegraphics[width=\linewidth]{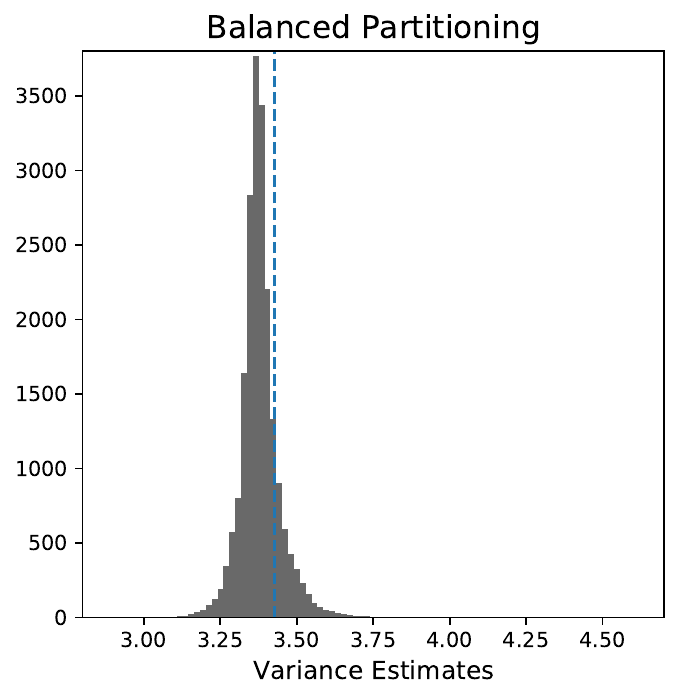}
  \label{fig:var-sim2_d3}
\end{subfigure}\hfil
\begin{subfigure}{0.25\textwidth}
  \includegraphics[width=\linewidth]{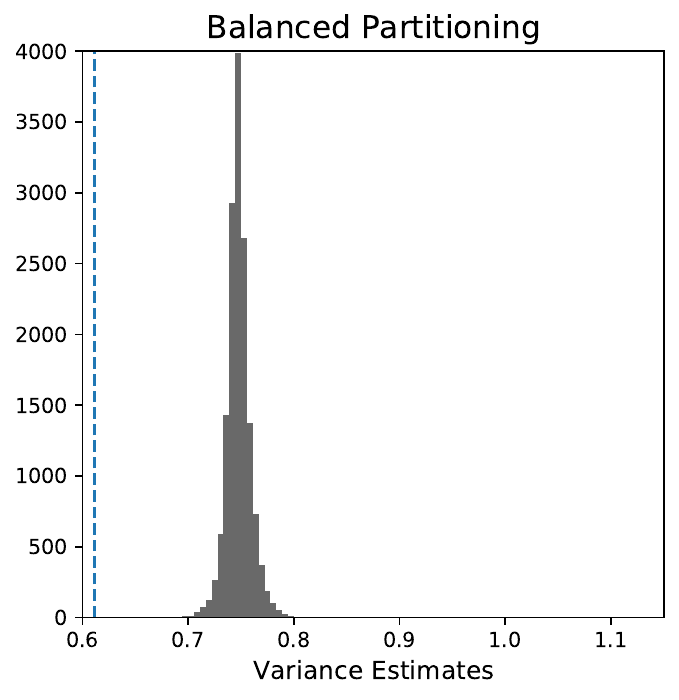}
  \label{fig:var-sim3_d3}
\end{subfigure}
\caption{Histograms of the variance estimator in simulations.}
\label{fig:var-histograms}
\end{figure}

% DISCUSS VARIANCE HISTOGRAMS
Figure~\ref{fig:var-histograms} contains histograms of the variance estimator for each simulation and treatment design, where the rows correspond to the designs and the columns correspond to the simulations.
The dotted vertical line in the plot is the empirical estimate of the variance of the \doseht estimator, computed from samples.
The variance estimator is unbiased in Simulations~1 and 2, which aligns with Theorem~\ref{thm:variance-estimate}; however, because the empirical variance estimate is used, the blue line is close to (but not exactly) the true variance.
Increasing the number of sampled exposure vectors decreases this error, but drawing more than 20 thousand samples is prohibitively expensive given the size of the data.
In Simulation~1, the mean squared error of all cluster-based designs is so small that the variance estimator takes negative values.
In Simulation 3, the response is highly non-linear so that the variance estimator incurs a positive bias, which results in a coverage slightly above the nominal level.
Interestingly, the variance estimator under the balanced partitioning design is more concentrated around its mean, which is worth further investigation but is beyond the scope of this paper.

\begin{figure}[h!tb]
    \centering 
\begin{subfigure}{0.3\textwidth}
  \includegraphics[width=\linewidth]{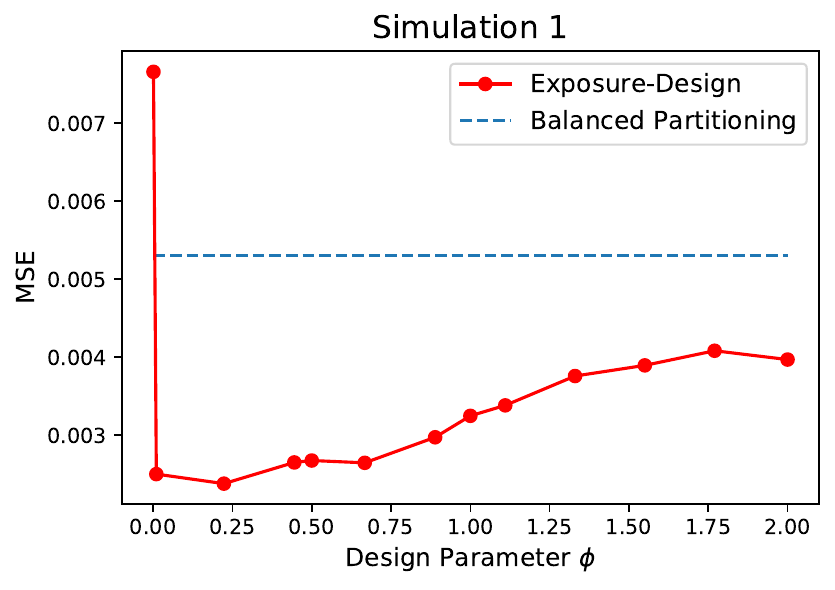}
  \label{fig:mse1}
\end{subfigure}\hfil 
\begin{subfigure}{0.3\textwidth}
  \includegraphics[width=\linewidth]{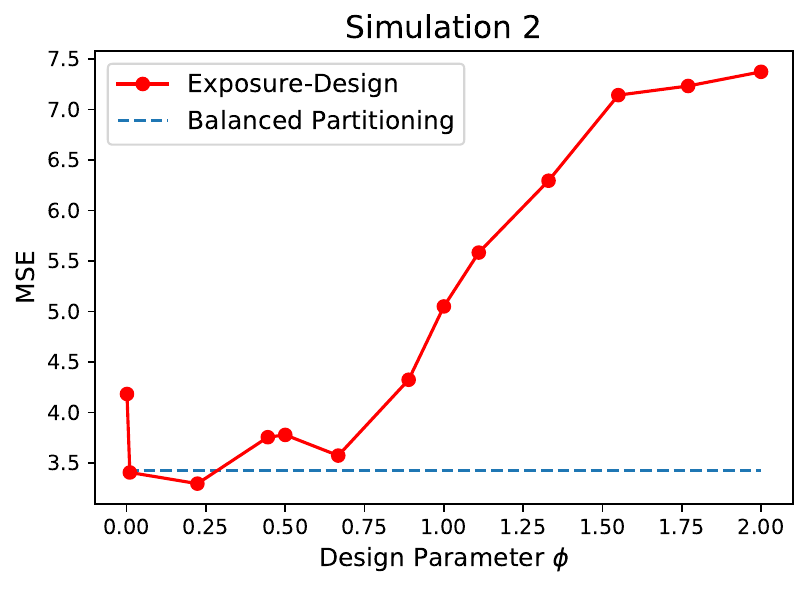}
  \label{fig:mse2}
\end{subfigure}\hfil 
\begin{subfigure}{0.3\textwidth}
  \includegraphics[width=\linewidth]{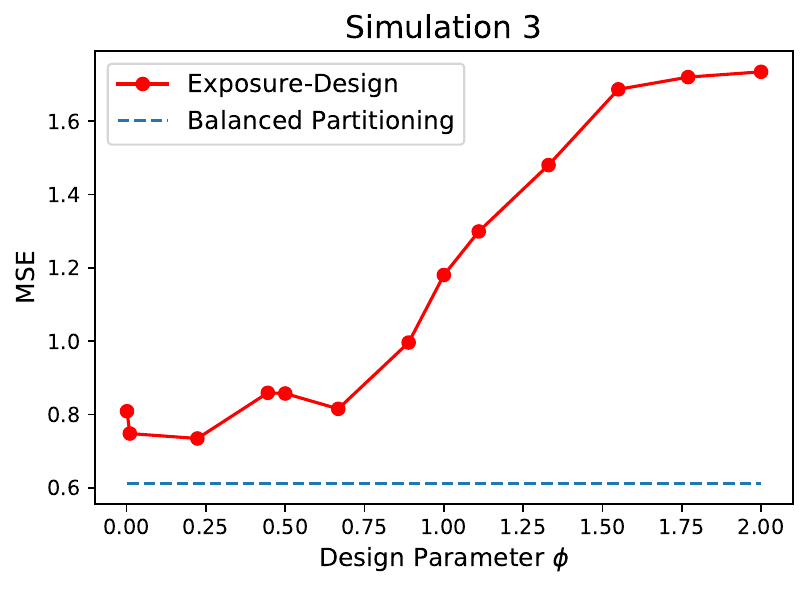}
  \label{fig:mse3}
\end{subfigure}

\caption{MSE of the \doseht estimator as trade-off parameter $\phi$ is varied. First values of $\phi$ are $.001$ and $.01$.}
\label{fig:mse-changing-phi}
\end{figure}

% DISCUSS CHOICE OF PARAMETER 
Figure~\ref{fig:mse-changing-phi} contains a plot for each simulation, where the mean squared error is plotted against the correlation penalizing parameter $\phi$.
The mean squared error of the balanced partitioning design appears as a dotted blue line.
In Simulations~1 and 2 where the linear response assumption holds, there is a range of values of $\phi$ where \exposuredesign achieves lower mean squared error than the balanced partitioning design.
In our simulations, the choice of $\phi \approx 1/4$ typically achieves lowest mean squared error.
However, no design is optimal across all types of potential outcomes \citep{Harshaw2021Balancing} and so we encourage experimenters to select $\phi$ (and more generally, select designs) by running tests on simulated data.

\section{Concluding remarks} \label{sec:conclusion}

In this paper, we have presented methodological contributions towards estimating treatment effects in the linear-exposure response model for bipartite experiments: namely, the Exposure Reweighted Linear (\doseht) estimator for consistent and unbiased estimation of average total treatment effect, an unbiased and consistent variance estimator which facilitates construction of asymptotically valid normal-based confidence intervals, and the \exposuredesign that aims to increase precision of this estimator in various settings of interest.

When employing this design in practice, we recommend that the experimenter choose the value of the trade-off parameter $\phi$ by running simulations of the experiment using available models of the outcomes when possible. 
When this is not possible, we find in our simulations that setting $\phi \approx 1/4$ typically yields improvements in the precision of the \doseht estimator over the previously correlation clustering design of \citet{pouget2019variance} where $\phi  = 1/(n-1)$.
We suspect that in most settings of interest, the \doseht estimator will enjoy increased precision under any treatment design which (either explicitly or implicitly) ensures that exposures have large variance and are decorrelated. 
% We suspect that most reasonable estimators beyond \doseht estimator will also benefit from the de-correlation of exposures that \ref{eq:cluster_objective} aims to provide.

There are several open questions suggested by this work. 
Given that the design we describe in this paper is heuristically motivated, there is likely room for improved designs.
One technical challenge is to construct a design for which consistency and asymptotic normality of the \doseht estimator may be established under weaker growth conditions on the bipartite graph.
Another key open question is developing methods for valid inference in bipartite experiments which go beyond the linear exposure-response assumption.
Finally, it would be of practical and methodological interest to develop estimation techniques that are robust to misspecification in the bipartite graph as well as estimation techniques which perform well in the presence of greater structural interference (i.e. when outcomes are influenced by the exposures of other units).
The results in \citet{Savje2021Causal} regarding estimation of treatment effects under a misspecified exposure mapping might extend to this setting, but that remains to be shown.
Answering these methodological questions around the bipartite experimental framework will increase its relevance and applicability in practice.

% \paragraph{Acknowledgements}
% \input{tex/acknowledgement}

\bibliographystyle{apalike}
\bibliography{main.bib}

\newpage
\appendix

\section{Analysis of the \doseht Estimator}\label{sec:estimator-proofs}

In this section, we present proofs of unbiasedness, consistency, and asymptotic normality of \doseht estimator appearing in Section~\ref{sec:estimator} of the main paper.
Before continuing, we introduce some notation used in the proofs.
We begin by defining for each outcome unit $i \in \outunits$, an estimate of the individual treatment effect $\itei$, which is
\[
\dhtesti \triangleq \poi(\zv) \paren[\Big]{ \frac{ \dosei(\zv) - \Exp{\dosei(\zv)}}{\Var{\dosei(\zv)}} } \enspace.
\]
Observe that the \doseht estimator is the average of these estimates of the individual treatment effects, i.e. $\dhtest = (1/n) \sum_{i=1}^n \dhtesti$.
Throughout the proofs, we will often reason about the behavior of the \doseht estimator through the properties of the individual treatment effect estimates.

Next, we introduce the concept of \emph{dependency neighborhoods} \citep{ross2011fundamentals}. 
Let $\erre{1}, \erre{2}, \dots \erre{n}$ be random variables indexed by the integers $[n]$ and collect these random variables into the set $\mathcal{A} = \setb{a_i : i \in [n]}$.
For each variable $a_i$, we define the \emph{dependency neighborhood} as
\[
\depNi \subset \mathcal{A} \text{ such that } a_i \text{ is jointly independent of the variables }  \mathcal{A} \setminus \depNi
\enspace.
\]
In other words, a random variable $a_i$ is jointly independent of all variables not contained in its dependency neighborhood, but is dependent on variables contained in its dependency neighborhood.
We take the convention that $i \in \depNi$ and so that each dependency neighborhood as cardinality at least 1.
A measure of dependence between the random variables is the \emph{maximum dependency degree}, which is $\depdeg = \max_{i \in [n]} \abs{\depNi}$.
Note that independent random variables satisfy $D = 1$ and that completely dependent random variables have $D =n$.

For the remainder of the proof, we focus our discussion of dependency neighborhoods and degrees to the collection of errors of the individual treatment effects,
\[
\erre{1} = \itee{1} - \dhteste{1}, 
\quad \erre{2} = \itee{2} - \dhteste{2}, 
\quad \dots 
\quad \erre{n} = \itee{n} - \dhteste{n} \enspace. 
\]
We begin by showing that in this case, the maximum dependency degree may be bounded in terms of the degrees of the bipartite graph and the dependence in the treatment assignments.

\begin{lemma}\label{lem:dependence-degree-bound}
The dependency degree of the individual treatment effect errors is bounded by $\depdeg \leq \maxcorr \divdeg \outdeg$ .
\end{lemma}
\begin{proof}
The first part of this proof is to establish a necessary condition for an individual treatment effect error $\errj$ to be in the dependency neighborhood of $\erri$, i.e. $\errj \in \depNi$.
We begin by re-writing the exposures under a cluster design.
Recall that the exposures are defined as $\dosei = \sum_{j=1}^m w_{i,j} z_j$.
For each cluster $\cluste{} \in \clustering$, define $w_{i,\cluste{}} = \sum_{j \in \cluste{}} w_{i,j}$ and define $z_{\cluste{}}$ to be the $\pm 1$ cluster treatment assignment variable which is $1$ if diversion units in $\cluste{}$ are treated and $-1$ otherwise.
If $w_{i,\cluste{}} \neq 0$, then we say that cluster $\cluste{}$ is \emph{incident} to outcome unit $i$.
Define $S(i) = \setb{ z_{\cluste{}} :  w_{i,\cluste{}} \neq 0}$ to be the cluster treatment assignments which influence the exposure $\dosei$.
Under the cluster design, the exposure for outcome unit $i$ may be written as
\[
\dosei 
= \sum_{\cluste{} \in \clustering} w_{i,\cluste{}} z_{\cluste{}}
= \sum_{\cluste{} \in S(i)} w_{i,\cluste{}} z_{\cluste{}}
\enspace.
\]
By the linear-response assumption, the individual treatment effect error $\erri$ is a function of the exposure $\dosei$.
Moreover, $\erri$ is a function of the cluster treatment assignment variables in $S(i)$.
Let us denote this relationship by writing $\erri = g_i( S(i) )$, where $g_i$ is a function of the cluster treatment variables $ z_{\cluste{}} \in S(i)$.
Let $B \subset \outunits$ be a collection of outcome units.
We remark that joint independence of cluster treatment assignments implies joint independence of individual treatment effect errors:
\[
S(i) \indep \setb{S(j) : j \in B}
\Rightarrow
\erri \indep \setb{\errj : j \in B} \enspace.
\]
Under an independent cluster design, the cluster treatment assignments $S(i)$ are jointly independent of the cluster treatment assignments $\setb{S(j) : j \in B}$ when the corresponding sets of clusters are disjoint, i.e. $S(i) \cap \paren{ \cup_{j \in B} S(j)} = \emptyset$.
Thus, the individual treatment effect estimate $a_i$ is jointly independent of the collection of individual treatment effect estimates $\setb{a_j : j \in B}$ when outcome unit $i$ is not incident to any cluster that is incident to an outcome unit in $B$.
In other words, $\errj \in \depNi$ only if outcome units $i$ and $j$ are incident to a common cluster.

Fix an outcome unit $i \in \outunits$.
The remainder of the proof is a simple counting argument which uses this necessary condition to establish that $\abs{\depNi} \leq \maxcorr \divdeg \outdeg$.
In particular, we will count the number of outcome units that are incident to one of the clusters that are incident to $i$.
Because the degree of outcome unit $i$ is at most $\outdeg$, it is incident to at most $\outdeg$ clusters.
Each of these clusters has at most $\maxcorr$ diversion units, by Assumption~\ref{assumption:design}.
Because the degree of all diversion units $j$ is at most $\divdeg$, the number of outcome units which are incident to at least one of these clusters is at most $\maxcorr \divdeg \outdeg$.
Thus, we have established that
\[
D 
= \max_{i \in \outunits} \abs{\depNi} 
\leq \maxcorr \divdeg \outdeg \enspace. \qedhere
\]
\end{proof}

The following lemma derives a lower bound the exposure variances in terms of the treatment assignment probability and the maximum degree of the outcome units.

\begin{lemma}\label{lem:dose-variance-lower-bound}
If each pair of treatment assignments is non-negatively correlated, then each exposure variance is lower bounded as $\Var{\dosei} \geq \frac{p (1-p)}{\outdeg}$.
\end{lemma}
\begin{proof}
We begin by expanding the variance of the exposure $\dosei$ by
\begin{align*}
\Var{\dosei}
&= \Var{\sum_{j=1}^m w_{i,j} \zj}
    &\text{(definition of exposure)} \\
&= \sum_{i=1}^m \bracket[\Big]{ \Var{w_{i,j} \zj} + \sum_{\ell \neq j} \Cov{ w_{i,j} \zj, w_{i,\ell} \zl} }
    &\text{(properties of variance)} \\
&= \sum_{i=1}^m \bracket[\Big]{ w_{i,j}^2 \Var{\zj} + \sum_{\ell \neq j} w_{i,j} w_{i,\ell} \Cov{\zj,\zl} } 
    &\text{(properties of variance and covariance)} \\
&\geq \sum_{i=1}^m w_{i,j}^2 \Var{\zj} 
    &\text{(non-negative weights and correlations)}\\
&= p (1-p) \sum_{i=1}^m w_{i,j}^2 \enspace,
\end{align*}
where the inequality follows because the weights $w_{i,j}$ are non-negative and the assignments are non-negatively correlated and the last equality follows because $\zi$ are $0,1$ random variables with $\Pr(\zi = 1) = p$.

We complete the proof by lower bounding the sum of the squares of the weights.
Recall that the sum of the weights is $1$ and there are at most $\outdeg$ non-negative terms in the sum.
Using this together with the inequality that relates $\ell_2$ to $\ell_1$ norms in $d$-dimensions, $\norm{\cdot}_2^2 \geq \frac{1}{d} \norm{\cdot}_1^2$, we have that
\[
\sum_{i=1}^m w_{i,j}^2 
\geq \frac{1}{\outdeg} \sum_{i=1}^m w_{i,j}
= \frac{1}{\outdeg} \enspace. \qedhere
\]
\end{proof}

The following lemma is a bound on the moments of the errors of the individual treatment effects.

\begin{lemma}\label{lem:ite-error-moments}
The $s$th moment of the error of the individual treatment effect estimates is bounded by
\[
\Exp{ \abs{\itei - \dhtesti}^s }
\leq \bracket[\Bigg]{\maxpo \paren[\Big]{2 + \frac{\outdeg}{p(1-p)}}}^s \enspace.
\]
\end{lemma}
\begin{proof}
We begin by remarking that $\abs{\poi(\zv)} \leq \maxpo$ implies that each of the individual slopes are also bounded in absolute value as $\abs{\slopei} \leq 2 \maxpo$.
Recall that by the linear response assumption, $\poi(\zv) = \slopei \dosei + \incepti$ and by the linear exposure assumption (along with the normalization of the edge weights), setting $\zv = \zerovec, \onevec$ results in an exposure of $\xi = 0,1$.
Thus, when considering $\zv = \zerovec, \onevec$, the bound $\abs{\poi(\zv)} \leq M$ implies that $\abs{\slopei + \incepti} \leq \maxpo$ and $\abs{\incepti } \leq \maxpo$, which is enough to establish that $\abs{\slopei} \leq 2\maxpo$.

We now proceed by proving a bound on $\abs{\itei - \dhtesti}$, which holds for any realization of the random variables:
\begin{align*}
\abs{\itei - \dhtesti}
&= \abs[\Big]{\slopei - \poi(\zv) \paren[\Big]{\frac{\dosei - \Exp{\dosei}}{\Var{\dosei}}} } \\
&\leq \abs{\slopei} + \frac{\abs{\poi(\zv)} \cdot \abs{\dosei - \Exp{\dosei}}}{\Var{\dosei}} 
    &\text{(triangle inequality)}\\
&\leq 2 \maxpo + \frac{\maxpo}{\Var{\dosei}} 
    &\text{(definition of $\maxpo$ and above)}\\
&\leq \maxpo \paren[\Big]{ 2 + \frac{1}{\Var{\dosei}} } 
    &\text{(collecting terms)}\\
&\leq \maxpo \paren[\Big]{2 + \frac{\outdeg}{p(1-p)}}
    &\text{(Lemma~\ref{lem:dose-variance-lower-bound})}
\end{align*}
The moment bound follows by applying the bound above.
\end{proof}

\subsection{Expectation of the \doseht estimator (Theorems~\ref{thm:exposure-ht-unbiased} and \ref{thm:expectation-no-assumption}) }

In this section, we derive the expectation of the \doseht estimator, both with and without the linear exposure-response assumption.
First, we derive the expectation under the linear exposure-response assumption.

\begin{manualtheorem}{\ref{thm:exposure-ht-unbiased}}
	\exposurehtunbiased
\end{manualtheorem}
\begin{proof}
By linearity, the expectation of the estimator is 
\[
\Exp{\dhtest}
= \frac{1}{n} \sum_{i=1}^n \Exp[\Bigg]{  \poi(\zv) \paren[\Bigg]{ \frac{\dosei(\zv) - \Exp{\dosei(\zv)}}{{ \Var{\dosei(\zv)}} } } }
\enspace.
\]
By Proposition~\ref{prop:ate-under-linear-exposure-response}, the ATTE is the average of the slope terms $\slopei$.
Thus, to complete the proof we show that each expectation terms inside the sum is equal to the corresponding slope $\slopei$.
Using the linear response assumption,
\begin{align*}
\Exp[\Bigg]{  \poi(\zv) \paren[\Bigg]{ \frac{\dosei(\zv) - \Exp{\dosei(\zv)}}{{ \Var{\dosei(\zv)}} } } }
&= \Exp[\Bigg]{  \paren{\slopei \dosei(\zv) + \incepti} \paren[\Bigg]{ \frac{\dosei(\zv) - \Exp{\dosei(\zv)}}{{ \Var{\dosei(\zv)}} } } } \\
&= \slopei \Exp[\Bigg]{ \dosei(\zv) \paren[\Bigg]{ \frac{\dosei(\zv) - \Exp{\dosei(\zv)}}{{ \Var{\dosei(\zv)}} } } } 
    + \incepti \Exp[\Bigg]{ \paren[\Bigg]{ \frac{\dosei(\zv) - \Exp{\dosei(\zv)}}{{ \Var{\dosei(\zv)}} } } } \\
&= \slopei \paren[\Bigg]{ \frac{\Exp{\dosei(\zv)^2} - \Exp{\dosei(\zv)}^2 }{\Var{\dosei(\zv)}}}
    + \incepti \paren[\Bigg]{ \frac{\Exp{\dosei(\zv)} - \Exp{\dosei(\zv)} }{\Var{\dosei(\zv)}}} \\
&= \slopei 
\qedhere
\end{align*}
\end{proof}

Next, we derive the expectation of the \doseht estimator under a general (non-linear) response assumption.

\begin{manualtheorem}{\ref{thm:expectation-no-assumption}}
	\expectationnoassumption
\end{manualtheorem}
\begin{proof}
	We begin by deriving the expectation of an individual term in the \doseht estimator.
	To this end, observe that
	\[
	\Exp{\dhtesti}
	= s\Exp[\Big]{\poi \paren[\Big]{ \frac{\dosei - \Exp{\dosei}}{\Var{\dosei}} } }
	= \frac{\Exp{\poi \dosei} - \Exp{\poi} \Exp{\dosei}}{\Var{\dosei}}
	= \frac{\Cov{\dosei, \poi}}{\Var{\dosei}} \enspace.
	\]
	The proof is completed by linearity of expectation.
\end{proof}

We remark that Theorem~\ref{thm:exposure-ht-unbiased} follows from Theorem~\ref{thm:expectation-no-assumption} by observing that under a linear response assumption that $\poi = \slopei \dosei + \incepti$, we have that $\Cov{\dosei, \poi} = \Cov{\dosei, \slopei \dosei + \incepti} = \slopei \Var{\dosei}$.

\subsection{Consistency of \doseht estimator (Theorem~\ref{thm:consistency})} \label{sec:consistency-proof}

We are now ready to establish the consistency of the \doseht estimator.
Before doing so, we restate the theorem here.

\begin{manualtheorem}{\ref{thm:consistency}}
\consistencythm
\end{manualtheorem}
\begin{proof}
We begin by proving a finite sample bound on the mean squared error of the \doseht estimator, and then we finish the proof by taking the limit in the asymptotic sequence.
Note that the mean squared error may be broken down into the errors of the individual treatment effect estimates via
\[
\Exp{(\ate - \dhtest)^2}
= \Exp[\Big]{\paren[\Big]{ \frac{1}{n} \sum_{i=1}^n \paren{ \itei - \dhtesti} }^2 }
= \frac{1}{n^2} \sum_{i=1}^n \sum_{j=1}^n \Exp{(\itei - \dhtesti)(\itej - \dhtestj)}
\enspace.
\]
Note that the term in the inner sum is the covariance of the errors in the individual treatment effect estimators.
By definition of the dependency neighborhoods, only terms $j \in \depNi$ are dependent and so only these terms will have non-zero covariance.
Using this and the second moment bound in Lemma~\ref{lem:ite-error-moments}, we have that
\begin{align*}
\Exp{(\ate - \dhtest)^2}
&= \frac{1}{n^2} \sum_{i=1}^n \sum_{j \in \depNi} \Exp{(\itei - \dhtesti)(\itej - \dhtestj)} 
    &\text{(dependency neighborhoods)} \\
&\leq \frac{1}{n^2} \sum_{i=1}^n \sum_{j \in \depNi} \sqrt{ \Exp{(\itei - \dhtesti)^2} \Exp{(\itej - \dhtestj)^2}}
    &\text{(Cauchy-Schwarz)} \\
&\leq \frac{1}{n^2} \sum_{i=1}^n \abs{\depNi} \cdot \bracket[\Big]{M \paren[\Big]{2 + \frac{\outdeg}{p(1-p)}} }^2
    &\text{(Lemma~\ref{lem:ite-error-moments})} \\
&\leq \frac{\depdeg}{n} \bracket[\Big]{M \paren[\Big]{2 + \frac{\outdeg}{p(1-p)}} }^2 \enspace.
    &\text{(max dependency degree)}
\end{align*}
By using the bound $\depdeg \leq \maxcorr \divdeg \outdeg$ given in Lemma~\ref{lem:dependence-degree-bound}, we have the finite-sample bound on the mean squared error:
\[
\Exp{(\ate - \dhtest)^2}
\leq \frac{\maxcorr \divdeg \outdeg  }{n} \bracket[\Big]{M \paren[\Big]{2 + \frac{\outdeg}{p(1-p)}} }^2
= \bigO[\big]{\divdeg \outdeg^3 / n} \enspace,
\]
where the final equality comes from interpreting the finite-sample bound in the context of the asymptotic sequence.
In particular, by Assumptions~\ref{assumption:bounded-potential-outcomes} and \ref{assumption:design}, we have that $M$ is a constant, $p$ is bounded away from $0$ and $1$ by a constant, and $\maxcorr$ is a constant.
\end{proof}

\subsection{Asymptotic normality of the \doseht estimator (Theorem~\ref{thm:asymptotic-normality})} \label{sec:normality-proof}

We establish asymptotic normality of the \doseht estimator by using Stein's method.
In particular, we use the following result from \citet{ross2011fundamentals}:

\begin{lemma}[Lemma 3.6 of \citet{ross2011fundamentals}] \label{lem:steins-method}
Let $\erre{1}, \erre{2}, \dots \erre{n}$ be random variables such that $\Exp{\erri^4} < \infty$, $\Exp{\erri} = 0$, $\sigma^2 = \Var{\frac{1}{n} \sum_{i=1}^n \erri}$, and define $X = \paren{\frac{1}{n} \sum_{i=1}^n \erri} / \sigma$.
Then for a standard normal $Z \sim \mathcal{N}(0,1)$, we have
\[
d_W(X,Z) \leq 
\frac{D^2}{\sigma^3 n^3} \sum_{i=1}^n \Exp{\abs{\erri}^3} 
+ \sqrt{\frac{28}{\pi}} \cdot \frac{\depdeg^{3/2}}{n^2 \sigma^2} \sqrt{ \sum_{i=1}^n \Exp{\erri^4} }
\enspace,
\]
where $\depdeg$ is the maximum dependency degree of the random variables and $d_W(\cdot, \cdot)$ is the Wasserstein distance.
\end{lemma}

We will use Lemma~\ref{lem:steins-method} to prove asymptotic normality of the \doseht estimator.
Before continuing, let us restate the theorem.

\begin{manualtheorem}{\ref{thm:asymptotic-normality}}
\asymptoticanormalitythm
\end{manualtheorem}
\begin{proof}
Our strategy may be described in two main steps: first, we use Lemma~\ref{lem:steins-method} to derive a finite-sample bound on the Wasserstein distance between the distribution of $(\ate - \dhtest) / \sqrt{\Var{\dhtest}}$ and a standard normal. Next, we us this bound to argue that this Wasserstein distance approaches $0$ in the limit of the asymptotic sequence under the above conditions.

We seek to apply Lemma~\ref{lem:steins-method} where the random variables are the errors of the individual treatment effect estimates; that is,
\[
\erre{1} = \itee{1} - \dhteste{1}, \erre{2} = \itee{2} - \dhteste{2}, \dots \erre{n} = \itee{n} - \dhteste{n} \enspace. 
\]
Note that $\frac{1}{n} \sum_{i=1}^n \erri = \frac{1}{n} \sum_{i=1}^n \itei - \dhtesti = \ate - \dhtest$ and $\Var{\frac{1}{n} \sum_{i=1}^n \erri} = \Var{\dhtest}$ so that the random variable $X$ in Lemma~\ref{lem:steins-method} is equal to $(\ate - \dhtest) / \sqrt{\Var{\dhtest}}$, which is indeed the random variable we wish to characterize.
Let us show that the conditions of Lemma~\ref{lem:steins-method} are satisfied: first, recall that $\dhtesti$ are unbiased estimates of $\itei$ so that $\erri$ has mean zero.
Second, because the potential outcomes are bounded by a constant $\maxpo$, the support of $\erri$ is bounded so the fourth moments are finite.
Thus, we may apply Lemma~\ref{lem:steins-method} in this setting.

We will use the Lemma~\ref{lem:ite-error-moments} to bound the sum of the third and fourth moments.
In particular, Lemma~\ref{lem:ite-error-moments} implies that
\[
\sum_{i=1}^n \Exp{\abs{\erri}^3} \leq n \cdot \bracket[\Bigg]{ \maxpo \paren[\Big]{2 + \frac{\outdeg}{p(1-p)}}}^3 
\quad \text{ and } \quad
\sum_{i=1}^n \Exp{\abs{\erri}^4} \leq n \cdot \bracket[\Bigg]{ \maxpo \paren[\Big]{2 + \frac{\outdeg}{p(1-p)}}}^4
\enspace.
\]
Using this moment bound together with the bound on the maximum dependence degree $\depdeg$ (Lemma~\ref{lem:dependence-degree-bound}) on the result of Lemma~\ref{lem:steins-method}, we obtain that
\[
d_W\paren[\Big]{\frac{\ate - \dhtest}{\sqrt{\Var{\dhtest}}}, Z} \leq
\frac{(\maxcorr \divdeg \outdeg)^2}{\sigma^3 n^2} \cdot \bracket[\Bigg]{\maxpo \paren[\Big]{2 + \frac{\outdeg}{p(1-p)}}}^3 
+
\sqrt{\frac{28}{\pi}} \cdot
\frac{(\maxcorr \divdeg \outdeg)^{3/2}}{\sigma^2 n^{3/2}} \cdot \bracket[\Bigg]{\maxpo \paren[\Big]{2 + \frac{\outdeg}{p(1-p)}}}^2
\]

We now interpret this finite-sample bound in the context of the asymptotic sequence.
By Assumptions~\ref{assumption:bounded-potential-outcomes} and \ref{assumption:design}, we have that $M$ is a constant, $p$ is bounded away from $0$ and $1$ by a constant, and $\maxcorr$ is a constant.
It follows that the Wasserstein distance between $(\ate - \dhtest) / \sqrt{\Var{\dhtest}}$ and a standard normal is asymptotically bounded as
\[
d_W\paren[\Big]{\frac{\ate - \dhtest}{\sqrt{\Var{\dhtest}}}, Z}
= \bigO[\Big]{
\frac{ \divdeg^2 \outdeg^5}{\sigma^3 n^2}
+ \frac{\divdeg^{3/2} \outdeg^{7/2}}{\sigma^2 n^{3/2}}
}
\]
By Assumption~\ref{assumption:mse-rate}, we have that $\Var{\dhtest} = \bigOmega{1 / n}$, which means that this bound becomes
\[
d_W\paren[\Big]{\frac{\ate - \dhtest}{\sqrt{\Var{\dhtest}}}, Z}
= \bigO[\Big]{
\frac{ \divdeg^2 \outdeg^5}{n^{1/2}}
+ \frac{\divdeg^{3/2} \outdeg^{7/2}}{n^{1/2}}
}
= \bigO[\Big]{\frac{ \divdeg^2 \outdeg^5}{n^{1/2}}}
\enspace.
\]
By assumption, the asymptotic sequence satisfies $\divdeg^{4} \outdeg^{10} = \littleO{n}$.
Thus, the Wasserstein distance between $(\ate - \dhtest) / \sqrt{\Var{\dhtest}}$ and a standard normal approaches $0$ in this asymptotic sequence.

\end{proof}
\section{Variance Estimation and Confidence Intervals}\label{sec:interval-proofs}

In this section, we present the proofs of unbiasedness and consistency of the variance estimator together with a proof of the asymptotic validity of the normal based confidence intervals.

\subsection{Closed form expressions for coefficients}\label{sec:closed-form-coefficients}

We begin by deriving closed form expressions for the coefficients $a_{i,j}$, $b_{i,j}$, and $c_{i,j}$ which are obtained as solutions to the system of linear equations:

\varestlinearsystem

\paragraph{Distinct Outcome Units $(i \neq j)$} Note that the matrix in the linear system above is the 3-by-3 covariance matrix of the exposures $\dosei$, $\dosej$ and their product $\dosei \dosej$. Recall that we have defined this matrix to be $\covmatij$. A unique solution exists if and only if $\det(\covmatij) > 0$, which is the say that the generalized variance of $\dosei$, $\dosej$, and $\dosei \dosej$ is nonzero.
In this case, the solution coefficients may be explicitly derived via Cramer's rule for solving linear systems as
\begin{align*}
a_{i,j} & = \frac{\Var{\dosei} \Var{\dosej} - \Cov{\dosei,\dosej}^2}{\det(\covmatij)}, \\
b_{i,j} & = \frac{\Cov{\dosei,\dosej} \Cov{ \dosei \dosej, \dosej} - \Var{\dosej} \Cov{\dosei \dosej, \dosei}}{\det(\covmatij)},\\
c_{i,j} & = \frac{\Cov{\dosei,\dosej} \Cov{ \dosei \dosej, \dosei} - \Var{\dosei} \Cov{\dosei \dosej, \dosej}}{\det(\covmatij)}
\enspace.
\end{align*}
The determinant $\det(\covmatij)$ is a polynomial in the entries of the matrix $\covmatij$.
For completeness, we present the determinant calculation:
\begin{align*}
    \det(\covmatij) 
    &= \Var{\dosei \dosej} \paren{ \Var{\dosei}\Var{\dosej} - \Cov{\dosei, \dosej}^2} - \Var{\dosei} \Cov{\dosei \dosej, \dosej}^2  \\
    & \quad \quad
    - \Var{\dosej} \Cov{\dosei \dosej, \dosei}^2 
    + 2 \Cov{\dosei, \dosej} \Cov{\dosei \dosej,\dosej} \Cov{\dosei \dosej, \dosei}  \enspace.
\end{align*}

\paragraph{Same Outcome Unit ($i=j$)} 
Note that the $3$-by-$3$ covariance matrix defining the system of linear equation above will have zero determinant, as $\dosei = \dosej$.
Nevertheless, we show that a solution to the linear system may still be obtained under certain conditions on the distribution of the exposure $\dosei$.
Observe that the $b_{i,i}$ and $c_{i,i}$ terms are redundant, as $\dosei = \dosej$.
By taking $c_{i,i} = 0$, we reduce the 3-by-3 linear system in Proposition~\ref{prop:unbiased-individual-covest} to the following 2-by-2 linear system:
\[
\begin{bmatrix}
\Var{\dosei^2} & \Cov{\dosei, \dosei^2} \\
\Cov{\dosei^2, \dosei} & \Var{\dosei}
\end{bmatrix}
\begin{bmatrix}
a_{i,i} \\
b_{i,i}
\end{bmatrix}
=
\begin{bmatrix}
1 \\
0
\end{bmatrix}
\]
Observe that the matrix in this linear system is the covariance matrix between the exposure $\dosei$ and its square $\dosei^2$.
Recall that in the main body, we defined this 2-by-2 matrix to be $\covmatii$.
Note that a solution exists when the $\det(\covmatii) > 0$, which is to say that the exposure $\dosei$ and its square $\dosei^2$ are not perfectly correlated.
In this case, the solution to the coefficients may be obtained as
\[
a_{i,i} = \frac{\Var{\dosei}}{\det(\covmatii)}, \quad
b_{i,i} = \frac{\Cov{\dosei, \dosei^2}}{\det(\covmatii)}, \quad 
c_{i,i} = 0
\quad
\text{where} \quad
\det(\covmatii) = \Var{\dosei} \Var{\dosei^2} - \Cov{\dosei, \dosei^2}^2 
\enspace.
\]

\subsection{Unbiasedness of variance estimator (Theorem~\ref{thm:variance-estimate})}

In this section, we present proofs of Proposition~\ref{prop:unbiased-individual-covest} and Theorem~\ref{thm:variance-estimate}, which establish unbiasedness of the proposed variance estimator under certain conditions on the exposure distribution.

\newcommand{\unbiasedcovestprop}{
Under the linear response assumption, $\covestij = \poi(\zv) \poj(\zv) R_{i,j}(\dosei, \dosej)$ is an unbiased estimator for the individual covariance term $\Cov{\dhtesti, \dhtestj}$ if the coefficients $a_{i,j}, b_{i,j}, c_{i,j}$ in the weighting function $S_{i,j}(\dosei,\dosej)$ satisfy the system of linear equations:
\varestlinearsystem
}
\begin{prop}\label{prop:unbiased-individual-covest}
\unbiasedcovestprop
\end{prop}

\begin{proof}

If $\Cov{\dosei, \dosej} = 0$, then by the linear response assumption $\Cov{\dhtesti, \dhtestj} = 0$.
In this case, the weighting function is identically zero and thus $\covestij = 0$, which is trivially unbiased.
The remainder of the proof focuses on the case where $\Cov{\dosei, \dosej} \neq 0$ and the weighting function $R_{i,j}(\dosei, \dosej) = Q_{i,j}(\dosei, \dosej) - S_{i,j}(\dosei, \dosej)$.
Observe that the expectation of the individual covariance estimator $\covestij$ is equal to

\begin{equation}\label{eq:expanding-exp-covestij}
\Exp[\Big]{\covestij}
= \Exp[\Big]{\poi(\zv) \poj(\zv) R_{i,j}(\dosei, \dosej)}
= \Exp[\Big]{\poi(\zv) \poj(\zv) Q_{i,j}(\dosei, \dosej)}
- \Exp[\Big]{\poi(\zv) \poj(\zv) S_{i,j}(\dosei, \dosej)}
\enspace.
\end{equation}

By construction of the weighting function $Q_{i,j}(\dosei, \dosej)$, we can compute the expectation of the first term in \eqref{eq:expanding-exp-covestij} as 
\[
\Exp[\Big]{\poi(\zv) \poj(\zv) Q_{i,j}(\dosei, \dosej)}
= \Exp[\Bigg]{\poi(\zv) \poj(\zv) \paren[\Big]{\frac{\dosei - \Exp{\dosei}}{\Var{\dosei}}} \paren[\Big]{\frac{\dosej - \Exp{\dosej}}{\Var{\dosej}} } }
= \Exp[\big]{\dhtesti \dhtestj} \enspace.
\]

Next, we evaluate the expectation of the second term in \eqref{eq:expanding-exp-covestij}.
Before doing so, observe that $\Exp{S_{i,j}(\dosei, \dosej)} = 0$ by construction.
Moreover, the coefficients used in the $S_{i,j}(\dosei, \dosej)$ weighting function satisfy the system of linear equations by assumption, which is equivalent to the following three equations:
\begin{itemize}
    \item $\Exp{\dosei S_{i,j}(\dosei, \dosej)} = 0$
    \item $\Exp{\dosej S_{i,j}(\dosej, \dosej)} = 0$
    \item $\Exp{\dosei \dosej S_{i,j}(\dosei, \dosej)} = 1$
\end{itemize}
Using these four equations together with the linear response assumption, we evaluate the expectation of the second term in \eqref{eq:expanding-exp-covestij} as
\begin{align*}
    &\Exp[\Big]{\poi(\zv) \poj(\zv) S_{i,j}(\dosei, \dosej)} \\
    &\quad = \Exp[\Big]{(\slopei \dosei + \incepti) (\slopej \dosej + \inceptj) S_{i,j}(\dosei, \dosej)} \\
    &\quad= \slopei \slopej \Exp[\Big]{\dosei \dosej S_{i,j}(\dosei, \dosej)}
    + \incepti \inceptj \Exp[\Big]{S_{i,j}(\dosei, \dosej)} 
    % \\&\quad 
    + \slopei \inceptj \Exp[\Big]{\dosei S_{i,j}(\dosei, \dosej)}
    + \incepti \slopej \Exp[\Big]{\dosej S_{i,j}(\dosei, \dosej)} \\
    &\quad= \slopei \slopej \\
    % &= \itei \itej 
    &\quad= \Exp{\dhtesti} \Exp{\dhtestj} \enspace,
\end{align*}
where the last inequality follows from the unbiasedness of the individual treatment effect estimators $\dhtesti$ and $\dhtestj$.
Thus, substituting these two calculations into \eqref{eq:expanding-exp-covestij} yields the desired result:
\[
\Exp[\Big]{\covestij}
= \Exp[\big]{\dhtesti \dhtestj}
- \Exp{\dhtesti} \Exp{\dhtestj}
= \Cov{\dhtesti, \dhtestj}
\qedhere
\]
\end{proof}

We are now ready to prove Theorem~\ref{thm:variance-estimate}, which establishes unbiasedness of the variance estimator.
\begin{manualtheorem}{\ref{thm:variance-estimate}}
\varestthm
\end{manualtheorem}

\begin{proof}
Note that Assumption~\ref{assumption:non-degenerate-exposures}, together with the linear response assumption, ensure that the conditions of Proposition~\ref{prop:unbiased-individual-covest} hold for every pair $i,j \in [n]$ so that $\Exp{\covestij} = \Cov{\dhtesti, \dhtestj}$.
Using this fact, we may calculate the expectation of the variance estimate $\erlvarest$ as
\begin{align*}
\Exp{\erlvarest} 
&= \Exp[\Bigg]{
\frac{1}{n^2} \sum_{i=1}^n \sum_{j=1}^n \poi(\zv) \poj(\zv) R_{i,j}(\dosei, \dosej)
} \\
&= \frac{1}{n^2} \sum_{i=1}^n \sum_{j=1}^n \Exp[\Big]{ \poi(\zv) \poj(\zv) R_{i,j}(\dosei, \dosej)} \\
&= \frac{1}{n^2} \sum_{i=1}^n \sum_{j=1}^n \Cov{\dhtesti, \dhtestj} 
= \Var{\dhtest} 
\qedhere
\enspace.
\end{align*}
\end{proof}

\subsection{Consistency of variance estimator (Theorem~\ref{thm:consistent-var-estimation})}

In this section, we present the proof of Theorem~\ref{thm:consistent-var-estimation}, which establishes consistency of the normalized variance estimator.
The main parts of the proof are to show that the individual covariance estimators $\covestij$ are sufficiently uncorrelated and that they have small variance.

To establish a bound on the correlation between covariance estimators $\covestij$, we use the formalism of the dependency graph. 
For each pair of outcome units $i,j \in [n]$, define $\cerrij \triangleq \Cov{\dhtesti, \dhtestj} - \covestij$ to be the error of the individual covariance estimator. 
Let $\mathcal{A}_V = \setb{\cerrij : i,j \in [n]}$ be the set of individual covariance estimator errors.
For each variable $\cerrij$, we define the \emph{dependency neighborhood} as
\[
\depNvar{i,j} \subset \mathcal{A}_V \text{ such that } \cerrij \text{ is jointly independent of the variables }  \mathcal{A}_V \setminus \depNvar{i,j}
\enspace.
\]
Unlike the dependency graph in Appendix~\ref{sec:estimator-proofs}, this dependency graph is indexed over pairs of integers $i,j \in [n]$.
Additionally, quantities associated with this dependency graph are denoted by a subscript $V$.

The following lemma bounds the maximum degree of the dependency graph and is based on a similar counting argument as that used to prove Lemma~\ref{lem:dependence-degree-bound}.

\begin{lemma}\label{lemma:covest-dep-degree-bound}
The dependency degree of individual covariance estimator errors 
% $\setb{\Cov{\dhtesti, \dhtestj} - \covestij : i,j \in [n]}$ 
is at most $\depdegvar \leq 4 (\maxcorr \divdeg \outdeg)^2$.
\end{lemma}

\begin{proof}
The first part of this proof is to establish a necessary condition for an individual covariance error $\cerrkl$ to be in the dependency neighborhood of $\cerrij$, i.e. $\cerrkl \in \depNvar{i,j}$.
We begin by re-writing the exposures under a cluster design.
Recall that the exposures are defined as $\dosei = \sum_{j=1}^m w_{i,j} z_j$.
For each cluster $\cluste{} \in \clustering$, define $w_{i,\cluste{}} = \sum_{j \in \cluste{}} w_{i,j}$ and define $z_{\cluste{}}$ to be the $\pm 1$ cluster treatment assignment variable which is $1$ if diversion units in $\cluste{}$ are treated and $-1$ otherwise.
If $w_{i,\cluste{}} \neq 0$, then we say that cluster $\cluste{}$ is \emph{incident} to outcome unit $i$.
Define $S(i) = \setb{ z_{\cluste{}} :  w_{i,\cluste{}} \neq 0}$ to be the cluster treatment assignments which influence the exposure $\dosei$.
Under the cluster design, the exposure for outcome unit $i$ may be written as
\[
\dosei 
= \sum_{\cluste{} \in \clustering} w_{i,\cluste{}} z_{\cluste{}}
= \sum_{\cluste{} \in S(i)} w_{i,\cluste{}} z_{\cluste{}}
\enspace.
\]
By the linear exposure-response assumption, the individual covariance estimator error $\cerrij$ is a function of the exposures $\dosei$ and $\dosej$.
Moreover, $\cerrij$ is a function of the cluster treatment assignment variables in $S(i,j) \triangleq S(i) \cup S(i)$.
Let us denote this relationship by writing $\cerrij = g_{i,j}( S(i,j) )$, where $g_{i,j}$ is a function of the cluster treatment variables $ z_{\cluste{}} \in S(i,j)$.
Let $B \subset \outunits \times \outunits$ be a collection of pairs of outcome units.
We remark that joint independence of cluster treatment assignments implies joint independence of individual covariance errors:
\[
S(i,j) \indep \setb{S(k,\ell) : (k,\ell) \in B}
\Rightarrow
\cerrij \indep \setb{\cerrkl : (k,\ell) \in B} \enspace.
\]
Under an independent cluster design, the cluster treatment assignments $S(i,j)$ are jointly independent of the cluster treatment assignments $\setb{S(k,\ell) : (k,\ell) \in B}$ when the corresponding sets of clusters are disjoint, i.e. $S(i,j) \cap \paren{ \cup_{(k,\ell) \in B} S(k,\ell)} = \emptyset$.
Thus, the individual covariance estimate $\cerrij$ is jointly independent of the collection of individual treatment effect estimates $\setb{\cerrkl : (k,\ell) \in B}$ when outcome units $i$ and $j$ are not incident to any cluster that is incident to an outcome unit in $B$.
In other words, $\cerrkl \in \depNvar{j,k}$ only if one of the outcome units $i,j$ and one of the outcome units $k,\ell$ are incident to a common cluster.

Fix a pair of outcome units $i,j \in \outunits$.
The remainder of the proof is a simple counting argument which uses this necessary condition to establish that $\abs{\depNvar{i,j}} \leq (2 \maxcorr \divdeg \outdeg)^2$.
In particular, we will count the number of outcome units that are incident to one of the clusters that are incident to $i$ and $j$.
Because the degrees of outcome unit $i$ and $j$ are at most $\outdeg$, they are incident to at most $2 \outdeg$ clusters.
Each of these clusters has at most $\maxcorr$ diversion units, by Assumption~\ref{assumption:design}.
Because the degree of all diversion units is at most $\divdeg$, the number of pairs of outcome units which are incident to at least one of these clusters is at most
\[
\binom{2 \maxcorr \divdeg \outdeg}{2} \leq ( 2 \maxcorr \divdeg \outdeg)^2 
\enspace.
\]
Thus, we have established that
\[
D 
= \max_{i,j \in \outunits} \abs{\depNvar{i,j}} 
\leq 4 (\maxcorr \divdeg \outdeg)^2 
\enspace. \qedhere
\]
\end{proof}

Next, we derive a bound on the variance of the individual covariance estimators.
Because these individual estimators are unbiased, this yields a bound on their mean squared error.
Recall that $\Delta = \min_{i,j \in [n]} \det(\covmatij)$ is defined to be the smallest non-degeneracy measure amongst all distinct pairs ($i \neq j$) of exposures and single exposures ($i=j$).
We begin by establishing that the coefficients of the weighting function are bounded in magnitude.

\begin{lemma}\label{lemma:bounded-var-coeff}
For each pair of outcome units $i,j \in [n]$, the absolute values of the coefficients in the weighting function $S_{i,j}(\dosei, \dosej)$ are bounded in magnitude as $\max \setb{ \abs{a_{i,j}}, \abs{b_{i,j}}, \abs{c_{i,j}}} \leq 2 / \Delta$.
\end{lemma}
\begin{proof}
Recall that the coefficients $a_{i,j}$, $b_{i,j}$, and $c_{i,j}$ in the weighting function $S_{i,j}(\dosei, \dosej)$ are of the form
\[
a_{i,j} = \frac{\tilde{a}_{i,j}}{\det(\covmatij)},
b_{i,j} = \frac{\tilde{b}_{i,j}}{\det(\covmatij)}, 
\text{ and }
c_{i,j} = \frac{\tilde{c}_{i,j}}{\det(\covmatij)} 
\enspace,
\]
where $\tilde{a}_{i,j}$, $\tilde{b}_{i,j}$, and $\tilde{c}_{i,j}$ depend on statistics of the joint distribution of the exposures $\dosei$ and $\dosej$.
We will now show that $\max \setb{\abs{\tilde{a}_{i,j}}, \abs{\tilde{b}_{i,j}}, \abs{\tilde{c}_{i,j}}} \leq 2$.
We focus only on the case of distinct exposures ($i \neq j$), as the case of a single exposure ($i=j$) follows in an identical way.
First, observe that
\[
\tilde{a}_{i,j} 
= \Var{\dosei} \Var{\dosej} - \Cov{\dosei, \dosej}^2 
\leq \Var{\dosei} \Var{\dosej}
\leq 1 \enspace.
\]
Similarly using the triangle inequality, Cauchy-Schwarz inequality, and the fact that the exposures are supported on $[-1, 1]$, we have that
\begin{align}
    \abs{\tilde{b}_{i,j}}
    &= \abs{\Cov{\dosei, \dosej} \Cov{\dosei \dosej, \dosej} - \Var{\dosej} \Cov{\dosei \dosej, \dosei}} \\
    &\leq \abs{\Cov{\dosei, \dosej} \Cov{\dosei \dosej, \dosej}} + \abs{\Var{\dosej} \Cov{\dosei \dosej, \dosei}} \\
    &\leq \sqrt{\Var{\dosei} \Var{\dosej} \Var{\dosei \dosej} \Var{\dosej}} + \Var{\dosej} \sqrt{\Var{\dosei \dosej} \Var{\dosei} } \\
    &= 2\Var{\dosej} \sqrt{\Var{\dosei} \Var{\dosei \dosej}}\\
    &\leq 2 \enspace,
\end{align}
where the last inequality follows from the fact that random variables on $[0,1]$ have variance at most 1.
The bound on $\abs{\tilde{c}_{i,j}}$ is identical.
This establishes that $\max \setb{\abs{\tilde{a}_{i,j}}, \abs{\tilde{b}_{i,j}}, \abs{\tilde{c}_{i,j}}} \leq 2$.
Thus, we have that $\max \setb{\abs{a_{i,j}}, \abs{b_{i,j}}, \abs{c_{i,j}}} \leq 2 / \det(\covmatij) \leq 2 / \Delta$.
\end{proof}

Using Lemma~\ref{lemma:bounded-var-coeff} together with previously proved lower bounds on the variance of an exposure (Lemma~\ref{lem:dose-variance-lower-bound}), we obtain the following bound on the variance of the individual covariance estimators:

\begin{lemma}\label{lemma:mse-bound-on-covest}
The variance of an individual covariance estimator is bounded by
\[
\Var{\covestij} \leq C \maxpo^4 \paren[\Bigg]{ \paren[\Big]{\frac{\outdeg}{p(1-p)}}^4 + \frac{1}{\Delta^2} }
\]
for some absolute constant $C$.
\end{lemma}

\begin{proof}
Recall that for two random variables $X$ and $Y$, we have the following inequality: $\Var{X+Y} \leq \paren{ \sqrt{\Var{X}} + \sqrt{\Var{Y}} }^2$.
Applying this inequality to the individual covariance estimator, we obtain
\begin{align}
    \Var{\covestij} 
    &= \Var[\Big]{\poi(\zv) \poj(\zv) R_{i,j}(\dosei, \dosej)} \\
    &= \Var[\Big]{\poi(\zv) \poj(\zv) Q_{i,j}(\dosei, \dosej) - \poi(\zv) \poj(\zv) S_{i,j}(\dosei, \dosej)} \\
    &\leq \paren[\Bigg]{
        \sqrt{\Var[\Big]{\poi(\zv) \poj(\zv) Q_{i,j}(\dosei, \dosej)}} 
        + \sqrt{\Var[\Big]{\poi(\zv) \poj(\zv) S_{i,j}(\dosei, \dosej)}} }^2 
\end{align}
Our next goal is to bound each of the terms appearing above.
The variance in the first term may be bounded as
\begin{align}
    \Var{\poi(\zv) \poj(\zv) Q_{i,j}(\dosei, \dosej)}
    &= \Var[\bigg]{\poi(\zv) \poj(\zv) \paren[\Big]{\frac{\dosei - \Exp{\dosei}}{\Var{\dosei}}} \paren[\Big]{\frac{\dosej - \Exp{\dosej}}{\Var{\dosej}}} } \\
    &= \frac{1}{\paren[\big]{\Var{\dosei} \Var{\dosej}}^2} \Var{\poi(\zv) \poj(\zv) (\dosei - \Exp{\dosei}) (\dosej - \Exp{\dosej})} \\
    &\leq \paren[\Big]{\frac{\outdeg}{p (1-p)}}^4 \Var{\poi(\zv) \poj(\zv) (\dosei - \Exp{\dosei}) (\dosej - \Exp{\dosej})} \\
    &\leq \paren[\Big]{\frac{\outdeg}{p (1-p)}}^4 \cdot \paren[\big]{2 \maxpo^2}^2 \\
    % &= \paren[\Big]{\frac{\outdeg}{p (1-p)}}^4 \cdot \paren[\big]{8 \maxpo^2}^2
    \enspace,
\end{align}
where the first inequality follows from using Lemma~\ref{lem:dose-variance-lower-bound} and the second inequality follows from the bound $\abs{\poi(\zv) \poj(\zv) (\dosei - \Exp{\dosei}) (\dosej - \Exp{\dosej})} \leq M^2$. to lower bound the variance of the exposures and the upper bound on the potential outcomes, i.e. $\abs{\poi(\zv)} \leq \maxpo$ for all $\zv \in \setb{0,1}^n$.

We now seek to bound the variance appearing in the second term.
Note that the magnitude of the term inside the variance may be bounded as
\begin{align}
    \abs{\poi(\zv) \poj(\zv) S_{i,j}(\dosei, \dosej)}
    &\leq \abs{\poi(\zv) \poj(\zv)} \cdot \abs{S_{i,j}(\dosei, \dosej)} \\
    &= \abs{\poi(\zv) \poj(\zv)} \cdot
    \abs{
    a_{i,j} \paren{\dosei \dosej - \Exp{\dosei \dosej}}
    + b_{i,j} \paren{\dosei - \Exp{\dosei}}
    + c_{i,j} \paren{\dosej - \Exp{\dosej}} } \\
    &\leq \abs{\poi(\zv) \poj(\zv)} \cdot 
    \paren[\big]{
    \abs{a_{i,j}} \cdot \abs{\dosei \dosej - \Exp{\dosei \dosej}}
    + \abs{b_{i,j}} \cdot \abs{\dosei - \Exp{\dosei}}
    + \abs{c_{i,j}} \cdot \abs{\dosej - \Exp{\dosej}}
    } \\
    &\leq M^2 \cdot 3 \cdot (2 / \Delta) \cdot 2 \\
    &= 12 \frac{M^2}{\Delta}\enspace,
\end{align}
where the final inequality follows from the bound on the potential outcomes, the bound on the weighting coefficients given in Lemma~\ref{lemma:bounded-var-coeff}, and the fact that the exposures take values in $[0,1]$.
Thus, the variance in the second term is at most
\[
\Var[\Big]{\poi(\zv) \poj(\zv) S_{i,j}(\dosei, \dosej)} 
\leq \paren[\Big]{24 \cdot \frac{M^2}{\Delta} }^2
% = \paren[\Big]{ 48 \cdot \frac{M^2}{\Delta} }^2
\enspace.
\]
Plugging these two bounds into the bound on $\Var{\covestij}$, and using the Arithmetic-Geometric Inequality, we obtain
\[
\Var{\covestij} 
\leq \paren[\Bigg]{ 2 \maxpo^2 \paren[\Big]{\frac{\outdeg}{p(1-p)}}^2 + 24 \cdot \frac{M^2}{\Delta} }^2
\leq C \maxpo^4 \paren[\Bigg]{ \paren[\Big]{\frac{\outdeg}{p(1-p)}}^4 + \frac{1}{\Delta^2} }
\enspace. 
\qedhere
\]
\end{proof}

Finally, we are ready to prove Theorem~\ref{thm:consistent-var-estimation}, which establishes consistency rates for the variance estimator.
At a high level, we will combine Lemmas~\ref{lemma:covest-dep-degree-bound} and \ref{lemma:covest-dep-degree-bound}, which show that the individual covariance estimators are sufficiently uncorrelated and have small variance.

\begin{manualtheorem}{\ref{thm:consistent-var-estimation}}
\consistentvarestimation
\end{manualtheorem}

\begin{proof}
By unbiasesness of the variance estimator together a decomposition of its variance, we have that
\begin{align}
    \Exp[\Big]{\paren[\big]{n \cdot \Var{\dhtest} - n \cdot \erlvarest}^2}
    &= n^2 \cdot \Var[\Big]{\erlvarest} 
    = n^2 \cdot \Var[\Bigg]{ \frac{1}{n^2} \sum_{i=1}^n \sum_{j=1}^n \covestij } \\
    &= \frac{1}{n^2} \sum_{i=1}^n \sum_{j=1}^n \sum_{k=1}^n \sum_{\ell=1}^n \Cov{\covestij, \covestkl} \enspace.
    \intertext{
    We now discuss which terms are zero in the sum.
    Recall that when $\Cov{\dosei, \dosej} = 0$, then $\covestij = 0$. 
    Moreover, $\Cov{\dosei, \dosej} = 0$ for all $j \notin \depN{i}$. 
    Thus, $\Cov{\covestij, \covestkl} = 0$ for any $j \notin \depNi$.
    Additionally, the individual covariance estimators $\covestij$ and $\covestkl$ are uncorrelated if $(k,l) \notin \depNvar{i,j}$.
    Thus, we may simplify terms as}
    &= \frac{1}{n^2} \sum_{i = 1}^n \sum_{j \in \depN{i}} \sum_{(k, \ell) \in \depNvar{i,j}} \Cov{\covestij, \covestkl} \enspace.
    \intertext{Next, we use Cauchy-Schwarz inequality on the covariances together with the upper bound the variances $\Var{\covestij}$ provided by Lemma~\ref{lemma:mse-bound-on-covest} to obtain}
    &\leq \frac{1}{n^2} \sum_{i=1}^n  \sum_{j \in \depN{i}} \sum_{(k, \ell) \in \depNvar{i,j}} \sqrt{\Var{\covestij} \Var{\covestkl}} \\
    &\leq \frac{1}{n^2} \sum_{i=1}^n \sum_{j \in \depN{i}} \sum_{(k, \ell) \in \depNvar{i,j}} C \maxpo^4 \paren[\Bigg]{ \paren[\Big]{\frac{\outdeg}{p(1-p)}}^4 + \frac{1}{\Delta^2} } \enspace. \\
    \intertext{Finally, we use the maximum dependency degree bounds $\max_{i \in [n]} \abs{\depNi} \triangleq \depdeg \leq \maxcorr \divdeg \outdeg$ and $\max_{i,j \in [n]} \abs{\depNvar{i,j}} \triangleq \depdegvar \leq 4(\maxcorr \divdeg \outdeg)^2$ and rearrange terms to obtain}
    &\leq \frac{1}{n^2} n \depdeg \depdegvar \cdot  C \maxpo^4 \paren[\Bigg]{ \paren[\Big]{\frac{\outdeg}{p(1-p)}}^4 + \frac{1}{\Delta^2} } \\
    &\leq \frac{4 C}{n} \paren{\divdeg \outdeg}^3 \cdot \maxpo^4 \paren[\Bigg]{ \paren[\Big]{\frac{\outdeg}{p(1-p)}}^4 + \frac{1}{\Delta^2} } \enspace.
\end{align}

Under Assumptions~\ref{assumption:bounded-potential-outcomes} and \ref{assumption:design}, the magnitude of the potential outcomes $\maxpo$, the size of the clusters $\maxcorr$, and the term $1/p(1-p)$ are constants in the asymptotic sequence.
Thus, mean squared error of the normalized variance estimator is bounded as
\[
\Exp[\Big]{\paren[\big]{n \cdot \Var{\dhtest} - n \cdot \erlvarest}^2} 
= \bigO[\Big]{ \frac{1}{n} \cdot \paren[\Big]{\divdeg^3 \outdeg^7 + \frac{1}{\Delta^2}}}
\enspace.
\]
\end{proof}

\subsection{Asymptotic validity of confidence intervals (Theorem~\ref{thm:asymptotic-normality})}

In this section, we present the proof of Corollary~\ref{cor:asymptotic-validity} which establishes asymptotic validity of the Wald-type confidence intervals using the variance estimator.

\begin{lemma}\label{lemma:ratio-var-estimator}
Under Assumptions~\ref{assumption:bounded-potential-outcomes}, \ref{assumption:design}, \ref{assumption:mse-rate}, and \ref{assumption:non-degenerate-exposures} and further supposing that $\outdeg^3 \divdeg^7 = \littleO{n}$ and $\Delta = \omega \paren{n^{-1/2}}$, the ratio of the variance estimator and the true estimator converges to 1 in probability: $\frac{\Var{\dhtest}}{\erlvarest} \xrightarrow[]{p} 1$.
\end{lemma}

\begin{proof}
This may be shown by applying the continuous mapping theorem to the result of Theorem~\ref{thm:consistent-var-estimation}, as Assumption~\ref{assumption:mse-rate} bounds the normalized variance away from zero.
However, we take a more elementary approach using Chebyshev's inequality.

Let $\epsilon > 0$ be given.
Chebyshev's inequality states that $\Pr \paren{ \abs{X - \mu} \geq k \sigma }$ for any random variable $X$ with mean $\mu$ and standard deviation $\sigma$: .
For random variables with positive mean, rearranging terms yields $\Pr \paren{ \abs{\frac{X}{\mu} - 1} > \epsilon} \leq \frac{\sigma^2}{\epsilon^2 \mu^2}$.
Applying Chebyshev's inequality together with the bound on the mean squared error of the variance estimator (Theorem~\ref{thm:consistent-var-estimation}) and Assumption~\ref{assumption:mse-rate}, we have
\begin{align}
    \Pr \paren[\Bigg]{\abs[\Big]{ \frac{\Var{\dhtest}}{\erlvarest} - 1} > \epsilon}
    &\leq \frac{\Exp[\Big]{\paren[\big]{\Var{\dhtest} - \erlvarest}^2}}{\epsilon^2 \Var{\dhtest}^2} 
        &\text{(Chebyshev's Inequality)} \\
    &= \frac{n^2}{n^2} \cdot \frac{\Exp[\Big]{\paren[\big]{\Var{\dhtest} - \erlvarest}^2}}{\epsilon^2 \Var{\dhtest}^2} \\
    &= \frac{\Exp[\Big]{\paren[\big]{n \cdot \Var{\dhtest} - n \cdot \erlvarest}^2}}{\epsilon^2 \paren{n \cdot \Var{\dhtest}}^2} \\
    &\leq \frac{1}{\epsilon^2} \cdot \bigO[\Big]{ \frac{1}{n} \cdot \paren[\Big]{\divdeg^3 \outdeg^7 + \frac{1}{\Delta^2}}}     &\text{(Theorem~\ref{thm:consistent-var-estimation} and Assumption~\ref{assumption:mse-rate})}\\
    &= \frac{1}{\epsilon} \cdot \littleO{1} 
    \enspace,
\end{align}
where the final inequality follows from the assumptions that 
$\outdeg^3 \divdeg^7 = \littleO{n}$ and $\Delta = \omega \paren{n^{-1/2}}$. 
This establishes that the ratio of the variance estimator and the true estimator converges to 1 in probability.
\end{proof}

We are now ready to prove Corollary~\ref{cor:asymptotic-validity}, which establishes asymptotic validity of the Wald-based confidence intervals using the proposed variance estimator.
For completeness, we restate the corollary below.

\begin{manualcorr}{\ref{cor:asymptotic-validity}}
\asymptoticvalidity
\end{manualcorr}

\begin{proof}
Define the random variable $Z = \frac{\ate - \dhtest}{\sqrt{\Var{\dhtest}}}$.
By Theorem~\ref{thm:asymptotic-normality}, $Z$ converges in distribution to a standard normal, $Z \xrightarrow[]{d} \mathcal{N}(0,1)$.
Define $Z' = \frac{\ate - \dhtest}{\sqrt{\erlvarest}}$ and observe that 
\[
    Z' 
    = \frac{\ate - \dhtest}{\sqrt{\erlvarest}}
    % = \frac{\ate - \dhtest}{\sqrt{\erlvarest}} \cdot \frac{\sqrt{\Var{\dhtest}}}{\sqrt{\Var{\dhtest}}}
    = \frac{\ate - \dhtest}{\sqrt{\Var{\dhtest}}} \cdot \frac{\sqrt{\Var{\dhtest}}}{\sqrt{\erlvarest}}
    = Z \cdot \sqrt{ \frac{\Var{\dhtest}}{\erlvarest}}
\]
By Lemma~\ref{lemma:ratio-var-estimator}, the ratio of the variance and the variance estimator converges to 1 in probability.
Thus, by Slutsky's theorem, $Z' \xrightarrow[]{d} \mathcal{N}(0,1)$.

Now, we evaluate the probability of coverage in the limit.
By rearranging terms, we can re-write the coverage probability in terms of the tails of $Z'$ as follows:
\begin{align}
\lim_{n \rightarrow \infty} \Pr \paren[\Big]{ \ate \in \bracket[\Big]{ \dhtest \pm \Phi^{-1}(1 - \alpha/2) \sqrt{\erlvarest}}  }
&= \lim_{n \rightarrow \infty}
\Pr \paren[\Big]{ \Phi^{-1}(1 - \alpha/2) \leq \frac{\ate - \dhtest}{\sqrt{\erlvarest}} \leq \Phi^{-1}(1 - \alpha/2) } \\
&= \lim_{n \rightarrow \infty}
\Pr \paren[\Big]{ \Phi^{-1}(1 - \alpha/2) \leq Z' \leq \Phi^{-1}(1 - \alpha/2) } \\
&= \lim_{n \rightarrow \infty}
\Pr \paren[\Big]{ \Phi^{-1}(1 - \alpha/2) \leq Z' \leq \Phi^{-1}(\alpha/2) } \enspace,
\intertext{where the last equality follows from symmetry of the normal distribution. Let $F_n$ be the cumulative distribution function of $Z'$. By the convergence of $Z'$ in distribution to a standard normal, we have that}
&= \lim_{n \rightarrow \infty} F_n \paren[\Big]{\Phi^{-1}(1 - \alpha/2)} - F_n \paren[\Big]{\Phi^{-1}(\alpha/2) } \\
&= \Phi \paren[\Big]{\Phi^{-1}(1 - \alpha/2)} - \Phi \paren[\Big]{\Phi^{-1}(\alpha/2)} \\
&= (1 - \alpha/2) - (\alpha / 2) \\
&= 1 - \alpha 
\enspace. \qedhere
\end{align}
\end{proof}
\section{\exposuredesign and Correlation Clustering}\label{sec:design-proofs}

In this section, we prove the relationship between \exposuredesign, its reformulation \corrclust, the previously proposed correlation clustering design of \cite{pouget2019variance}, and other correlation clustering variants.
A summary of the results are:
\begin{itemize}
    \item In Section~\ref{sec:reformulation}, we show that the \exposuredesign may be reformulated as the clustering problem, \corrclust.
    \item In Section~\ref{sec:prev_instance}, we compare \exposuredesign to the correlation clustering-based design presented in \cite{pouget2019variance}. In particular, we prove that their design is equivalent to \exposuredesign when the trade-off parameter is set as $\phi = 1/(n-1)$ and no constraint is placed on cluster sizes, i.e. $\maxcorr = m$.
    \item In Section~\ref{sec:comparing_variants}, we compare \corrclust to other correlation clustering variants. In particular, we prove that (unconstrained) \corrclust may be viewed as an instance of the weighted maximization correlation clustering considered by \cite{CGV05clustering, Swamy04correlation} but with a possibly large additive constant which prevents an approximation-preserving reduction.
\end{itemize}

To begin, we demonstrate how to re-write the \corrclust objective using matrix notation.
Let $\omega_{i,j} \in \Reals$ be the weights for pairs $i,j \in [m]$ and let $\Omega$ be the $m$-by-$m$ matrix whose $(i,j)$th entry is $\omega_{i,j}$.
For a partition $\clustering$ of the indices $[m]$, let $Z_\clustering$ be the $m$-by-$m$ matrix where the $(i,j)$th entry is $1$ if $i$ and $j$ are in the same cluster of $\clustering$ and $0$ otherwise.
Then, we may express the \corrclust objective as
\[
\sum_{\cluste{r} \in \clustering} \sum_{i,j \in \cluste{r}} \omega_{i,j}
=
\sum_{i=1}^n \sum_{j=1}^n \omega_{i,j} \bracket{Z_\clustering}_{(i,j)}
= \tr{ \Omega \ Z_\clustering } \enspace.
\]
Throughout the remainder of the section, it will be useful to write the \corrclust objective using this matrix notation.

\subsection{Reformulating \exposuredesign as \corrclust}\label{sec:reformulation}
We are now ready to prove Proposition~\ref{prop:corr_cluster_instance}, which we restate here for completeness.

\begin{manualprop}{\ref{prop:corr_cluster_instance}}
For each pair of diversion units $i,j \in \divunits$, define the value $\omega_{i,j} \in \Reals$ as
\begin{equation} \tag{\ref{eq:corr-clustering-weights}}
\omega_{i,j} 
= \paren{1 + \phi} \sum_{k=1}^m w_{k,i} w_{k,j}
- \phi \paren[\Big]{\sum_{k=1}^m w_{k,i}} \paren[\Big]{\sum_{k=1}^m w_{k,j}} \enspace,
\end{equation}
where $w_{k,i}$ is the weight of the edge between the $k$th outcome unit and the $i$th diversion unit.
\ref{eq:cluster_objective} is equivalent to the following clustering problem:
\begin{equation}\tag{\corrclust}
\max_{\text{clusterings } \clustering}
\quad \sum_{\cluste{r} \in \clustering} \sum_{i, j \in \cluste{r}} \omega_{i,j} 
\enspace.
\end{equation}
\end{manualprop}
\begin{proof}
Recall that the objective of \exposuredesign is defined as
\[
\sum_{i=1}^n \Var{\dosei} - \phi \sum_{i \neq j} \Cov{ \dosei, \dosej} \enspace,
\]
where the expectation in the variance and covariance terms is taken with respect to the random assignment vector $\zv \in \setb{0, 1}^m$, which is drawn from the cluster design given by $\clustering$.
Recall that the exposures are given by $\dosev = \weightM \zv$.
Using matrix notation, we can more compactly represent this objective as
\begin{align*}
\sum_{i=1}^n \Var{\dosei} - \phi \sum_{i \neq j} \Cov{ \dosei, \dosej}
&= \tr[\Big]{ \paren[\big]{\idM - \phi \paren{\onevec \onevec^\tran - \idM}} \Cov{\dosev} } 
    &\text{(rewriting in terms of $\trm$)}\\
&= \tr[\Big]{ \paren[\big]{\paren{1+\phi}\idM - \phi \onevec \onevec^\tran} \Cov{\dosev} }
    &\text{(rearranging terms)} \\
&= \tr[\Big]{ \paren[\big]{\paren{1+\phi}\idM - \phi \onevec \onevec^\tran} \Cov{\weightM \zv} }
    &\text{(definition of exposure)} \\
&= \tr[\Big]{ \paren[\big]{\paren{1+\phi}\idM - \phi \onevec \onevec^\tran} \weightM \Cov{\zv} \weightM^\tran }
    &\text{(property of covariance)} \\
&= \tr[\Big]{ \weightM^\tran \paren[\big]{\paren{1+\phi}\idM - \phi \onevec \onevec^\tran} \weightM \Cov{\zv}}
    &\text{(cyclic property of trace)} \\
\end{align*}
Because $\zv$ is drawn from an independent cluster design, the $(i,j)$th entry of the covariance matrix $\Cov{\zv}$ is $1/2$ if diversion units $i$ and $j$ are in the same cluster and $0$ otherwise.
Thus, by the observation above, this clustering objective is a correlation clustering where the weights are given by the matrix
\[
\Omega = \weightM^\tran \paren[\big]{\paren{1+\phi}\idM - \phi \onevec \onevec^\tran} \weightM \enspace.
\]
By inspection, we have that the $(i,j)$th entry of this matrix $\Omega$ is 
\[
\omega_{i,j} 
= \paren{1 + \phi} \sum_{k=1}^n w_{k,i} w_{k,j}
- \phi \paren[\Big]{\sum_{k=1}^n w_{k,i}} \paren[\Big]{\sum_{k=1}^n w_{k,j}} \enspace,
\]
as desired.
\end{proof}

\subsection{An instance of \exposuredesign when $\phi = 1/(n-1)$}\label{sec:prev_instance}

Now we demonstrate that the correlation clustering objective proposed in \cite{pouget2019variance} is a special case of \exposuredesign when $\phi = 1/(n-1)$ and no constraint is placed on cluster sizes, i.e. $\maxcorr = m$.
Before giving the formal statement, we re-introduce the clustering objective in that paper; that is,
\begin{equation}\label{eq:exposure_spread_obj}\tag{\textsc{Exposure-Spread}\xspace}
\max_{\text{clusterings } \clustering} 
\Exp[\Bigg]{\sum_{i=1}^n \paren[\Bigg]{\dosei - \paren[\Big]{ \frac{1}{n} \sum_{j=1}^n \dosej} }^2}
\enspace,
\end{equation}
where the expectation is with respect to the treatment vector $\zv \in \setb{\pm 1}^m$ drawn according to the independent cluster design given by $\clustering$.
The quantity in the expectation is a measure of the spread of the exposures.
We remark that in \cite{pouget2019variance}, the exposures are called ``doses'' and the quantity in the expectation is referred to as the ``empirical dose variance''.

\begin{prop}\label{prop:more-general-clustering-obj}
Up to additive and multiplicative constants, \ref{eq:exposure_spread_obj} is equivalent to \exposuredesign when the trade-off parameter is set to $\phi = 1/(n-1)$.
\end{prop}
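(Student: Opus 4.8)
The plan is to expand the \ref{eq:exposure_spread_obj} objective algebraically, push the expectation inside, and then discard every term that does not depend on the clustering $\clustering$. First I would apply the standard empirical-variance identity
\[
\sum_{i=1}^n \paren[\Big]{\dosei - \frac{1}{n}\sum_{j=1}^n \dosej}^2
= \sum_{i=1}^n \dosei^2 - \frac{1}{n}\paren[\Big]{\sum_{i=1}^n \dosei}^2 \enspace,
\]
take expectations, and use $\Exp{X^2} = \Var{X} + \Exp{X}^2$ for both $X = \dosei$ and $X = \sum_i \dosei$, noting that $\Var{\sum_i \dosei} = \sum_i \Var{\dosei} + \sum_{i\neq j}\Cov{\dosei,\dosej}$.

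The key observation is that $\Exp{\dosei}$ does not depend on $\clustering$: since $\dosei = \sum_j w_{i,j}\zj$ and $\Exp{\zj} = 2p-1$ under any independent cluster design (clustering affects only the correlations among the $\zj$, not their marginals), the mean exposure $\Exp{\dosei}$ is fixed by the bipartite weights and the treatment probability $p$ alone. Consequently, all terms in the expansion that involve only first moments of the exposures — namely $\sum_i \Exp{\dosei}^2$ and $\paren{\sum_i \Exp{\dosei}}^2$ — are additive constants with respect to the maximization over $\clustering$ and can be dropped. What remains is
\[
\sum_{i=1}^n \Var{\dosei} - \frac{1}{n}\paren[\Big]{\sum_{i=1}^n \Var{\dosei} + \sum_{i\neq j}\Cov{\dosei,\dosej}}
= \frac{n-1}{n}\paren[\Bigg]{\sum_{i=1}^n \Var{\dosei} - \frac{1}{n-1}\sum_{i\neq j}\Cov{\dosei,\dosej}} \enspace,
\]
which is precisely $\tfrac{n-1}{n}$ times the \exposuredesign objective evaluated at $\phi = 1/(n-1)$.

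Finally, since maximizers are invariant under positive multiplicative rescaling and additive shifts of the objective, this chain of equalities establishes the claimed equivalence. I do not anticipate a genuine obstacle here; the calculation is routine bookkeeping, and the only point that needs to be stated carefully is the constancy of $\Exp{\dosei}$ across clusterings, which is what licenses discarding the mean-only terms and thereby turns the "spread'' objective into the variance-minus-covariance form.
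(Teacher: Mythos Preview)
Your proposal is correct. The computation you sketch is sound: the empirical-variance identity, the decomposition $\Exp{X^2}=\Var{X}+\Exp{X}^2$, and the observation that $\Exp{\dosei}$ is fixed by the marginals of the $\zj$ (and hence independent of the clustering) together yield exactly the stated factor $\tfrac{n-1}{n}$ in front of the \exposuredesign objective with $\phi=1/(n-1)$.

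The paper reaches the same endpoint by a different, more matrix-oriented route. It recognizes the exposure spread as $\norm{(\idM-\tfrac{1}{n}\onevec\onevec^{\tran})\dosev}^2$, uses idempotence of the centering projector, and then applies the cyclic property of trace to write the expected spread as $\tr{\weightM^{\tran}(\idM-\tfrac{1}{n}\onevec\onevec^{\tran})\weightM\,\Exp{\zv\zv^{\tran}}}$; the split $\Exp{\zv\zv^{\tran}}=\Cov{\zv}+\Exp{\zv}\Exp{\zv}^{\tran}$ isolates the clustering-independent constant, and the remaining trace is compared directly to the matrix form of \exposuredesign derived in the proof of Proposition~\ref{prop:corr_cluster_instance}. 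Your argument is more elementary and self-contained (no trace identities, no reliance on the matrix reformulation), while the paper's version has the virtue of unifying this result with the \corrclust derivation through the same $\tr{\Omega\,\Cov{\zv}}$ template. Both hinge on the identical substantive point you flagged: the marginal means $\Exp{\zj}$ (hence $\Exp{\dosei}$) do not depend on $\clustering$.
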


\begin{proof}
Let us denote the exposure spread by
\[
Q = \sum_{i=1}^n \paren[\Bigg]{\dosei - \paren[\Big]{ \frac{1}{n} \sum_{j=1}^n \dosej} }^2 \enspace,
\]
Note that the exposure spread is equal to the $\ell_2$ norm of the \emph{de-meaned} exposure vector $\bar{\dosev} = \paren{\bar{\dosee{1}}, \bar{\dosee{2}}, \dots \bar{\dosee{n}}}$, where
\[
\bar{\dosei} = \dosei - \paren[\Big]{ \frac{1}{n} \sum_{j=1}^n \dosej} \enspace .
\]
The entire de-meaned exposure vector may be written as $\bar{\dosev} = \paren{\idM - \frac{1}{n} \onevec \onevec^T} \dosev$.
Using the fact that this matrix is a projection and that the exposure vector is $\dosev = \weightM \zv$, we can write the exposure spread as 
\begin{align*}
Q
= \| \bar{\dosev} \|^2 
= \| \paren{\idM - \frac{1}{n} \onevec \onevec^T} \dosev \|^2 
= \dosev^\tran \paren{\idM - \frac{1}{n} \onevec \onevec^T}^2 \dosev 
= \dosev^\tran \paren{\idM - \frac{1}{n} \onevec \onevec^T} \dosev 
= \zv^\tran \weightM^\tran \paren{\idM - \frac{1}{n} \onevec \onevec^T} \weightM \zv \enspace.
\end{align*}
Finally, the expectation of the exposure spread may be written as
\begin{align*}
\Exp{Q}
&= \Exp[\Bigg]{ \zv^\tran \weightM^\tran \paren{\idM - \frac{1}{n} \onevec \onevec^T} \weightM \zv } 
    &\text{(from above)}\\
&= \Exp[\Bigg]{ \tr[\Big]{ \zv^\tran \weightM^\tran \paren{\idM - \frac{1}{n} \onevec \onevec^T} \weightM \zv }} 
    &\text{(trace of a scalar)}\\
&= \Exp[\Bigg]{ \tr[\Big]{ \weightM^\tran \paren{\idM - \frac{1}{n} \onevec \onevec^T} \weightM \zv \zv^\tran }} 
    &\text{(cyclic property of trace)}\\
&= \tr[\Big]{ \weightM^\tran \paren{\idM - \frac{1}{n} \onevec \onevec^T} \weightM \Exp{ \zv \zv^\tran }}
    &\text{(linearity of trace)} \\
&= \tr[\Big]{ \weightM^\tran \paren{\idM - \frac{1}{n} \onevec \onevec^T} \weightM \Cov{\zv}} + c \enspace,
\end{align*}
where the value $c$ in the last line is $c = \tr[\Big]{ \weightM^\tran \paren{\idM - \frac{1}{n} \onevec \onevec^T} \weightM \Exp{ \zv} \Exp{ \zv}^\tran }$, which follows from $\Cov{\zv} = \Exp{\zv \zv^\tran} - \Exp{\zv} \Exp{\zv}^\tran$ and linearity of trace.
Moreover, when the probability of treatment assignment $p$ is fixed, this value $c$ is a constant with respect to the clustering being chosen. 

Observe that by setting $\phi = 1/(n-1)$ and multiplying by a factor $(n-1)/n$, the \exposuredesign objective becomes
\[
\frac{n-1}{n} \tr[\Big]{ \weightM^\tran \paren[\big]{\paren{1+\frac{1}{n-1}}\idM - \frac{1}{n-1} \onevec \onevec^\tran} \weightM \Cov{\zv}}
=
\tr[\Big]{ \weightM^\tran \paren{\idM - \frac{1}{n} \onevec \onevec^T} \weightM \Cov{\zv}} \enspace.
\]
Thus, the \ref{eq:exposure_spread_obj} objective is equivalent (up to additive and multiplicative constants) to the \exposuredesign objective when $\phi = 1/(n-1)$.
\end{proof}

\subsection{Comparison to other correlation clustering variants}\label{sec:comparing_variants}

Recall that we defined the objective of the correlation clustering variant \corrclust as
$$
\sum_{\cluste{r} \in \clustering} \sum_{i, j \in \cluste{r}} \omega_{i,j} 
\enspace,
$$
where $\omega_{i,j}$ is defined for each pair of diversion units $i,j \in \divunits$ as
\begin{equation*} 
\omega_{i,j} 
= \paren{1 + \phi} \sum_{k=1}^n w_{k,i} w_{k,j}
- \phi \paren[\Big]{\sum_{k=1}^n w_{k,i}} \paren[\Big]{\sum_{k=1}^n w_{k,j}} \enspace,
\end{equation*}
and $w_{k,i}$ is the weight of the edge between the $k$th outcome unit and the $i$th diversion unit.
Observe that the term $\omega_{i,j}$ can take positive or negative values.

The maximization weighted correlation clustering variant considered by \cite{CGV05clustering, Swamy04correlation} is defined as follows. 
Let $G = (V,E)$ be a graph where each edge $e = (i,j) \in E$ has two \emph{non-negative} weights: $w_{in}(i,j)$ and $w_{out}(i,j)$.
Given a clustering $\clustering$, an edge $e = (i,j)$ is said to be \emph{in-cluster} if $i$ and $j$ are in the same cluster and \emph{out-cluster} otherwise.
The objective function for a given clustering is given by
\begin{equation}\tag{\corrclustCS}
    \sum_{\substack{\text{in-cluster} \\ \text{edges } e}} w_{in}(e) 
    + \sum_{\substack{\text{out-cluster} \\ \text{edges } e}} w_{out}(e)
\end{equation}
We now show that the \corrclust objective may be written as an instance of the \corrclustCS objective, but with the addition of a large additive constant.
Again, we stress that this reduction is primarily for aesthetic comparison purposes because the appearance of the large additive constant prevents any meaningful approximation-preserving reduction.

\begin{prop}
Our formulation \corrclust may be viewed as an instance of \corrclustCS with a large additive constant.
More precisely, let $w_{in}(i,j) = \max\{0, \omega_{i,j} \}$ and $w_{out}(i,j) = \min \{0, \omega_{i,j} \}$.
For a clustering $\clustering$, we have that the objectives are related by
\[
\sum_{\cluste{r} \in \clustering} \sum_{i, j \in \cluste{r}} \omega_{i,j} 
- \sum_{i=1}^n \sum_{j=1}^n \min \{0, \omega_{i,j} \}
= \sum_{\substack{\text{in-cluster} \\ \text{edges } e}} w_{in}(e) 
    + \sum_{\substack{\text{out-cluster} \\ \text{edges } e}} w_{out}(e)
\]
\end{prop}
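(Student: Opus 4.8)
The plan is to split the double sum over all ordered pairs $(i,j) \in [m] \times [m]$ into in-cluster pairs and out-cluster pairs, and then match terms on both sides. Concretely, for any clustering $\clustering$, every ordered pair $(i,j)$ is either an in-cluster pair (both $i,j$ in the same $\cluste{r}$) or an out-cluster pair; this gives the decomposition
\[
\sum_{i=1}^n \sum_{j=1}^n \omega_{i,j}
= \sum_{\cluste{r} \in \clustering} \sum_{i,j \in \cluste{r}} \omega_{i,j}
+ \sum_{\substack{\text{out-cluster} \\ \text{ordered pairs } (i,j)}} \omega_{i,j} \enspace.
\]
(One should be slightly careful about whether ``edges'' in \corrclustCS are ordered or unordered and whether self-loops $i=j$ are included; I would simply declare that the graph $G$ is on vertex set $\divunits$ with a loop at each vertex and one edge per ordered pair, so that summing over edges is the same as summing over ordered pairs $(i,j) \in [m]^2$. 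The factor-of-two bookkeeping for unordered edges is cosmetic and can be absorbed, matching whatever convention makes the two displayed objectives literally agree.)

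Next I would substitute the definitions $w_{in}(i,j) = \max\{0,\omega_{i,j}\}$ and $w_{out}(i,j) = \min\{0,\omega_{i,j}\}$. The key elementary identity is $\omega_{i,j} = \max\{0,\omega_{i,j}\} + \min\{0,\omega_{i,j}\}$, valid for every real number. Apply this to each in-cluster term:
\[
\sum_{\cluste{r} \in \clustering} \sum_{i,j \in \cluste{r}} \omega_{i,j}
= \sum_{\substack{\text{in-cluster} \\ \text{pairs}}} w_{in}(i,j)
+ \sum_{\substack{\text{in-cluster} \\ \text{pairs}}} w_{out}(i,j) \enspace.
\]
For the out-cluster contribution in the \corrclustCS objective, we only pick up $w_{out}$, so the right-hand side of the claimed identity equals $\sum_{\text{in-cluster}} w_{in} + \sum_{\text{in-cluster}} w_{out} + \sum_{\text{out-cluster}} w_{out}$. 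The last two sums combine into $\sum_{(i,j) \in [m]^2} w_{out}(i,j) = \sum_{i,j} \min\{0,\omega_{i,j}\}$, which is exactly the additive constant subtracted on the left-hand side. Rearranging yields the stated equation.

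The only genuine subtlety — and the place I would be most careful — is the indexing/convention mismatch: the statement writes $\sum_{i=1}^n\sum_{j=1}^n$ (with $n$, though the pairs range over diversion units, which number $m$; I take this as a typo for $m$), while \corrclustCS is phrased in terms of ``edges.'' I would resolve this once at the start by fixing the convention (complete graph on $\divunits$ with self-loops, edges identified with ordered pairs, weights $w_{in},w_{out}$ as given), after which the proof is the two-line algebraic manipulation above: split ordered pairs into in/out-cluster, use $t = \max\{0,t\}+\min\{0,t\}$ termwise, and recombine. No structural property of the weights $\omega_{i,j}$ (their origin in $\weightM$, or Proposition~\ref{prop:corr_cluster_instance}) is needed here; the identity holds for arbitrary real weights, which is worth remarking since it underscores that the ``reduction'' carries a clustering-independent constant $\sum_{i,j}\min\{0,\omega_{i,j}\}$ and hence is not approximation-preserving.
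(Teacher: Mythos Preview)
Your approach—partition ordered pairs into in-cluster and out-cluster, use the elementary identity $t=\max\{0,t\}+\min\{0,t\}$ termwise, and recombine—is exactly the route the paper takes (it writes $\omega_{i,j}=\omega_{i,j}^+ +\omega_{i,j}^-$ and proceeds the same way). You also correctly flag the $n$ versus $m$ indexing and the ordered/unordered ``edge'' convention.

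There is, however, a bookkeeping slip. You assert that ``the right-hand side of the claimed identity equals $\sum_{\text{in}} w_{in}+\sum_{\text{in}} w_{out}+\sum_{\text{out}} w_{out}$,'' but the right-hand side is only $\sum_{\text{in}} w_{in}+\sum_{\text{out}} w_{out}$; the extra $\sum_{\text{in}} w_{out}$ does not belong. If you carry your own computation through without that term you obtain
\[
\text{LHS}=\sum_{\text{in}}\bigl(w_{in}+w_{out}\bigr)-\sum_{\text{all}} w_{out}
=\sum_{\text{in}} w_{in}-\sum_{\text{out}} w_{out},
\]
which differs from the stated RHS by the sign on the out-cluster sum. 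This is not a flaw in your method but a typo in the proposition: for the identity to hold—and for $w_{out}$ to be nonnegative as \corrclustCS requires and as the paper's own closing sentence asserts—one needs $w_{out}(i,j)=-\min\{0,\omega_{i,j}\}$ rather than $\min\{0,\omega_{i,j}\}$. With that sign fixed, your argument goes through verbatim and coincides with the paper's; without it, the displayed equality is simply false (try two units in separate clusters with $\omega_{1,2}<0$). So the only genuine gap is that your extra $\sum_{\text{in}} w_{out}$ term accidentally compensates for the sign error in the statement instead of exposing it.
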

\begin{proof}
For each pair of diversion units $i,j$, define $\omega_{i,j}^+ = \max\{0, \omega_{i,j}\}$ and $\omega_{i,j}^- = -\min\{0, \omega_{i,j}\}$.
Observe that $\omega_{i,j} = \omega_{i,j}^+ + \omega_{i,j}^-$ and so we can re-distribute the following sum as
\[
\sum_{\cluste{r} \in \clustering} \sum_{i, j \in \cluste{r}} \omega_{i,j} 
= \sum_{\cluste{r} \in \clustering} \sum_{i, j \in \cluste{r}} \paren[\Big]{\omega_{i,j}^+ + \omega_{i,j}^-}
= \sum_{\cluste{r} \in \clustering} \sum_{i, j \in \cluste{r}} \omega_{i,j}^+
+ \sum_{\cluste{r} \in \clustering} \sum_{i, j \in \cluste{r}} \omega_{i,j}^-
\enspace.
\]
Subtracting the (instance-dependent) constant $\sum_{i=1}^n \sum_{j=1}^n \min \{0, \omega_{i,j} \}$ from both sides and rearranging yields
\begin{align*}
\sum_{\cluste{r} \in \clustering} \sum_{i, j \in \cluste{r}} \omega_{i,j}
    - \sum_{i=1}^n \sum_{j=1}^n \min \{0, \omega_{i,j} \}
&= \sum_{\cluste{r} \in \clustering} \sum_{i, j \in \cluste{r}} \omega_{i,j}^+
    + \sum_{\cluste{r} \in \clustering} \sum_{i, j \in \cluste{r}} \omega_{i,j}^-
    - \sum_{i=1}^n \sum_{j=1}^n \min \{0, \omega_{i,j} \} \\
&= \sum_{\cluste{r} \in \clustering} \sum_{i, j \in \cluste{r}} \omega_{i,j}^+
    - \sum_{\cluste{r} \neq \cluste{r}' \in \clustering} \sum_{\substack{i \in \cluste{r} \\ j \in \cluste{r}'}} \omega_{i,j}^- \\
&= \sum_{\cluste{r} \in \clustering} \sum_{i, j \in \cluste{r}} \omega_{i,j}^+
    + \sum_{\cluste{r} \neq \cluste{r}' \in \clustering} \sum_{\substack{i \in \cluste{r} \\ j \in \cluste{r}'}} \paren{-\omega_{i,j}^-} \\
&= \sum_{\substack{\text{in-cluster} \\ \text{edges } e}} w_{in}(e) 
    + \sum_{\substack{\text{out-cluster} \\ \text{edges } e}} w_{out}(e) 
    \enspace.
\end{align*}
Finally, observe that for each pair $(i,j)$, the values $w_{in}(i,j)$  and $w_{out}(i,j)$ are non-negative so that the final equation is a valid objective function for the \corrclustCS formulation.
\end{proof}

\end{document}